%% file: main.tex
\documentclass[hidelinks,onefignum,onetabnum]{siamart251216}
\usepackage{xcolor}

\input{ex_shared}

\ifpdf
\hypersetup{
  pdftitle={Control Systems with Model Hierarchies: A
Category-Theoretic Framework},
  pdfauthor={Baike She, Tyler Hanks, James
Fairbanks, and Matthew Hale}
}
\fi

\begin{document}

\maketitle

\begin{abstract}
Storage functions are widely used to relate a control system to a simpler model for analysis and
controller design, and to quantify model mismatch when a controller is designed on one
model but implemented on another. In applications, several models of the same physical process
are available at different levels of refinement, while refinement relations, consisting of a storage function and an associated interface, are constructed only between neighboring levels. Thus, a direct refinement relation may be unavailable between the model used for controller design and the model used for implementation.
To address this gap, we consider open control systems with internal ports for interconnection and external ports for input-output analysis. We define a category, $\mathsf{OCS}_{\mathrm{chain}}$, whose objects are open control systems and whose morphisms are finite chains of refinement relations. Using this category, we derive an output mismatch bound and a closed-loop performance result for controllers refined through a sequence of interfaces. We then define a functor that transfers refinement relations from open control systems to systems without internal ports. We also develop a network lifting result showing that, under standard interconnection compatibility conditions, subsystem storage functions can be combined to obtain refinement relations and performance bounds for composite networks. Two thermal network examples illustrate recursive controller refinement and the propagation of model mismatch along refinement chains. Simulations show that an HVAC system tracks a time-varying reference temperature with error below $0.2^\circ\mathrm{C}$ even when the controller is designed on the coarsest model and deployed on the finest model.
\end{abstract}

\begin{keywords}
Control Theory, Open Control Systems, Category Theory, Functor, Composite Control Systems
\end{keywords}


\section{Introduction}
A physical process can often be modeled at multiple levels of detail, ranging from simplified reduced-order descriptions to high-fidelity models. For example, work in~\cite{ahmed2021multifidelity} models transport processes governed by the same underlying physics using a full-order model together with a reduced-order model. Similarly, the same infectious disease dynamics can be modeled at different scales using either a group compartmental model or network compartmental models~\cite{chen2025scalable}. To influence the behavior of physical processes, controllers are typically designed using mathematical models. This design is often carried out on simplified models, while implementation is performed on more refined models or on the original system. Consequently, the mismatch between the simplified models used for design and the system used for implementation can affect stability and performance~\cite{kuntz2025beyond, Fan62265, she2022optimal}.

One way to address this gap is to study refinement relations between the original system and models at different levels of detail so that a controller designed on a model at one level of detail  can also be implemented on the original 
system~\cite{girard2009hierarchical,tabuada2008approximate}. 
In many applications, multiple models of the same control system are available and are naturally ordered by fidelity or complexity. For example, the same process can be represented by a reduced-order model, one or more intermediate-fidelity models, and a full-detail model~\cite{coulier2022systematic, chen2025scalable}. In these settings, such refinement relations are often available only between models at
neighboring levels of refinement in the hierarchy, e.g., between two consecutive levels that differ by a single modeling step, such as one state aggregation, one reduction in model order, or one coarsening of a network representation~\cite{firouzmand2024hierarchical, niu2025bridging}. Establishing such a relation directly between a coarse model and a much more refined one is often difficult~\cite{lavaei2019compositional}. Consequently, when controller design and implementation are separated by several levels in a model hierarchy, it is difficult to establish analytical guarantees that relate the behavior of the implementation model to that of the design model.
This difficulty becomes even more pronounced in networked systems, where models are formed by interconnecting subsystems~\cite{arcak2016networks, cervera2007interconnection,she2025dissipativity}. Even when refinement relations between corresponding subsystems are available, constructing a refinement relation directly between the corresponding interconnected models can still be challenging. 

Motivated by these challenges, we develop a category-theoretic framework for controller refinement over model hierarchies that emphasizes how refinement relations between models can be composed, rather than treating the models in isolation. Category theory focuses on relationships between systems rather than on the systems alone, and in dynamical systems and control it has been used to model abstractions, compositions, and interconnections within a common mathematical framework~\cite{ames2006categorical,wisniewski2006towards,culbertson2019formal,bakirtzis2021categorical,hanks2024modeling}.
Category theory uses morphisms between objects to describe relationships between them.
In this paper, each elementary morphism represents a pairwise refinement
relation between two control systems and is specified by a storage
function together with a certain type of interface between those systems
(see Section~\ref{Sec_Similar_Control_Systems} for details). 
General morphisms are finite chains
of such elementary morphisms. The key idea is to treat these morphisms, i.e., refinement relations between models of the same control system at different levels of detail, as the primary carriers of information, since they encode how 
state and output trajectories, control inputs, and output bounds are transferred across a model hierarchy. 
Under this viewpoint, a model hierarchy 
is represented by a collection of composable morphisms,
and a multi-step refinement 
between non-adjacent models is given by a composition of such morphisms. This formulation makes it possible to study controller refinement through the composition of morphisms, even when no direct refinement relation is available between models separated by several levels in the hierarchy.

The main contributions of this paper are as follows:
\begin{itemize}
\item We introduce a category-theoretic framework for refinement over model hierarchies of open
control systems with both internal and external ports. In particular, we define the category
$\mathsf{OCS}_{\mathrm{chain}}$, whose objects are open control systems and whose morphisms are chains
of refinement relations given by storage functions and their associated interfaces. We prove
that this construction satisfies all axioms neede to be a category 
(Theorem~\ref{Thm_CTRL_Path}).

\item We develop controller refinement results for controllers designed on simpler models and
implemented on more refined models. We show that the output mismatch between
the two endpoint models of a chain is bounded explicitly in terms of the storage
functions along the chain. We further show how this endpoint mismatch yields a quantitative bound on the
transfer of closed-loop performance from one model to another under refinement 
(Theorem~\ref{Thm_Control_Refine}
Corollary~\ref{Cor_CL_Performance_Transfer}).

\item We formulate an analogous chain category called $\mathsf{CS}_{\mathrm{chain}}$ for control systems without internal ports, and 
we define
a functorial projection from $\mathsf{OCS}_{\mathrm{chain}}$ to $\mathsf{CS}_{\mathrm{chain}}$. This projection
removes the structure of the internal ports while preserving identity morphisms, composition, and the
structure of refinement chains. Consequently, the controller refinement and performance-transfer
results remain applicable to the corresponding projected systems without internal ports
(Theorem~\ref{Def:CTRL-O-path},
Theorem~\ref{Thm:PhiFunctor},
Corollary~\ref{Cor:PhiTransfer_1}, and
Corollary~\ref{Cor:PhiExplicit_2}).

\item We connect subsystem-level refinement relations with refinement 
relations
between interconnected network
models. Using certain compositional conditions from
\cite{zamani2017compositional}, we show that weighted sums of storage functions
between corresponding open control subsystems 
can be used to 
define simulation functions between the
corresponding composite control systems. We then express these network-level simulation
functions as elementary simulation morphisms in
$\mathsf{CS}_{\mathrm{chain}}$ and show that they can be composed across a hierarchy of network models, which
provides a scalable route for controller refinement across multiple levels in a network model hierarchy 
(Proposition~\ref{Prop:Composite_Control} and
Corollary~\ref{Cor:NetworkHierarchyRefinement}).

\item We illustrate the aforementioned compositional framework using two thermal network examples. The first considers a
hierarchy of models without internal ports, while the second considers interconnected room
models and combines storage functions for the subsystems at the network level. 
We show in both examples 
that an HVAC system tracks a time-varying reference temperature with error less than~$0.2^\circ\mathrm{C}$ 
when using controllers that are refined using the framework we develop
(Section~\ref{Sec_Example_Simulation}). 
\end{itemize}

\subsubsection*{Related Work}
In control system synthesis, controller design is often carried out on simplified models, called abstractions, while implementation is performed on more refined models or on the original system.
For example, 
the work in~\cite{pola2008approximately} shows that incrementally stable nonlinear systems admit approximately bisimilar symbolic 
abstractions for controller synthesis, which can simplify control design. 
Similarly, work in~\cite{girard2009hierarchical} studies a complex system together with a simpler abstraction and shows how a simulation function and an interface can be used to design a controller on the abstraction and then lift it to the original system. Moreover, work in~\cite{hashimoto2022learning} constructs  abstractions by learning unmodeled dynamics from data while preserving the behavioral relation needed for safe control synthesis. Likewise, developments in~\cite{kazemi2024data} study formal controller synthesis by using data to build an abstraction and then synthesize controllers from that abstraction. When a full-order controller is too expensive for real-time use, such as in flow control problems, results in~\cite{sasaki2024design} construct a reduced-order controller that performs nearly as well as the full-order one. In addition, storage functions provide a standard tool for relating a control system to a simpler model through an interface and an output mismatch bound, thereby supporting analysis and controller refinement of the higher-order system based on the simpler model~\cite{zamani2017compositional}.
\textcolor{black}{Building upon the notation of storage functions in~\cite{zamani2017compositional}, we extend pairwise measure of  model mismatch to chains of systems using category theory.}
All of the aforementioned works study relations among models in some form, though to the best of our knowledge
the current paper is the first to use category theory to model structures of systems and refinement
relations. The formalisms we develop allow for the systematic construction
of new refinement relations,~\textcolor{black}{organize chains of models and refinement relations into a unified mathematical structure, as well as enable refinement relations to be composed across multiple levels of a model hierarchy.}


The rest of the paper is organized as follows. Section~\ref{sec:background} presents background on control systems, storage functions, and category theory, and formulates the problems addressed in this paper. Section~\ref{Sec_CTRL} presents the main results on the category $\mathsf{OCS}_{\mathrm{chain}}$ and the associated controller refinement results along chains. Section~\ref{Sec_CTRLO} introduces the 
category~$\mathsf{CS}_{\mathrm{chain}}$ and the 
functorial projection 
from $\mathsf{OCS}_{\mathrm{chain}}$ to $\mathsf{CS}_{\mathrm{chain}}$, and then it 
develops the corresponding framework for control systems without internal ports. Section~\ref{Sec:NetworkLifting} presents network lifting results, showing how subsystem-level chain morphisms can be carried to interconnected network models. Section~\ref{Sec_Example_Simulation} provides simulation examples to illustrate the main results. Finally, Section~\ref{Sec_Conclusion} concludes the paper.

\section{Preliminaries and Problem Formulation}
\label{sec:background}
In this section, we introduce background on open control systems, refinement relations between models, and category theory. We then formulate the problems we solve in this work. We begin by presenting the notation that will be used throughout the paper.

\subsection{Notation}
We use $\mathbb{R}$ to denote the set of real numbers, $\mathbb{R}_{\geq 0}$ to denote the nonnegative reals, and $\mathbb{R}_{> 0}$ to denote the positive reals. We use $\mathbb{N}$ to denote the natural numbers including $0$, and $\mathbb{N}_{> 0}$ to denote the positive natural numbers. All vectors are column vectors.
We define the index set $\underline{N} := \{1,2,3,\dots,N\}$ for any $N \in \mathbb{N}_{>0}$.
Given a vector $x \in \mathbb{R}^n$, we use $(x)_i$ to denote its $i^{th}$ entry.
In addition, we use $\|x\|$ to denote the Euclidean norm of $x \in \mathbb{R}^n$.
We use $I_n \in \mathbb{R}^{n \times n}$ to denote the $n \times n$ identity matrix. For a matrix $A \in \mathbb{R}^{n \times n}$, we use $A_{i,j}$ or $(A)_{i,j}$ to denote its entry in the $i^{th}$ row and $j^{th}$ column. If a matrix $A \in \mathbb{R}^{n\times n}$ is partitioned into four blocks as
\[
A=
\begin{bmatrix}
[A]_{11} & [A]_{12}\\
[A]_{21} & [A]_{22}
\end{bmatrix},
\]
then $[A]_{i,j}$ with $i,j\in\{1,2\}$ denotes the block in the $(i,j)^{th}$ position.
If a partitioned matrix contains more blocks, then they are indexed analogously. 
We use $A \prec 0$ and $A \preceq 0$ to denote that $A$ is negative definite and negative semidefinite, respectively.
Given $N \in \mathbb{N}_{>0}$, along with  $n^o \in \mathbb{N}_{>0}$
and $x^o \in \mathbb{R}^{n^o}$ for each~$o \in \underline{N}$, 
we use $\mathbf{x} = [(x^1)^\top, \dots, (x^N)^\top]^\top$ to denote the concatenated vector $\mathbf{x} \in \mathbb{R}^{\mathbf{n}}$, where $\mathbf{n} = \sum_{o=1}^N n^o$.
A continuous function $\alpha:\mathbb R_{\ge0}\to\mathbb R_{\ge0}$ belongs to class 
$\mathcal K_\infty$ if it is strictly increasing, satisfies $\alpha(0)=0$, and
$\alpha(r)\to\infty$ as $r\to\infty$. We define the zero function $f_0(\cdot)$ such that $f_0(x)=0$ for all $x\in\mathbb{R}^n$.

\subsection{Open Control Systems and Storage Functions}
\label{Sec_Similar_Control_Systems}
Many systems of interest are composite and can be modeled as open control systems with internal ports that model interconnections among subsystems, together with external signals that can be used for analysis and the computation of control inputs. To study  refinement problems, we use storage functions to characterize refinement relations between pairs of system models. A storage function provides an interface for refining inputs from a coarse model to a more refined one, as well as a bound on the distance between these 
systems' external outputs. The storage
function formalism is 
the basis for the multi-model and compositional results developed in this paper. 

We begin by defining open control systems with both external and internal inputs and outputs. We adopt the following definition of control systems and related simpler models from~\cite{zamani2017compositional}.
\begin{definition}[{Open control system; \cite[Definition 2.1]{zamani2017compositional}}]
\label{Def_Con_Systm}
An open control system $\Delta$ is a tuple 
\begin{align*}
 \Delta = (\mathbb R^n, \mathbb R^m, \mathbb R^p, \mathcal{U}, \mathcal{W}, f, \mathbb R^{q},  \mathbb R^{s}, h, y),   
\end{align*}
where $\mathbb R^n$ is the state space, 
$\mathbb R^m$ is the external input space, 
$\mathbb R^p$ is the internal input space, 
$\mathbb R^{q}$ is the external output space, 
and $\mathbb R^{s}$ is the internal output space, and
\begin{itemize}
        \item $\mathcal{U}$ is a non-empty
        subset of the set of all measurable functions
of time from open intervals in  $\mathbb R$ to $\mathbb R^m$; 
        \item  $\mathcal{W}$ is a subset of the set of all measurable functions
of time from open intervals in  $\mathbb R$ to $\mathbb R^p$ such that~$\mathcal{W}$
contains the constant zero signal; 
        \item  $f: \mathbb R^n \times \mathbb R^m \times \mathbb R^p \rightarrow \mathbb R^n$ is a continuous map satisfying
the following Lipschitz assumption: for every compact set $D\subseteq \mathbb R^n$, there exists a constant $\tau\in \mathbb R_{>0}$ such that $\|f(x^1,u,w)-f(x^2,u,w)\|\leq \tau \|x^1-x^2\|$ for all $x^1, x^2\in D$, all $u\in \mathbb R^m$, and all $w\in \mathbb R^p$;
\item $h: \mathbb R^n\rightarrow \mathbb R^{q}$ is the continuous external output map;
\item $y: \mathbb R^n\rightarrow \mathbb R^{s}$ is the internal continuous  output map.
    \end{itemize}
\hfill $\lozenge$
\end{definition}

We illustrate Definition~\ref{Def_Con_Systm} in the following example. 

\begin{example}
For example, consider two control systems~$\Delta_1$ and $\Delta_2$, as shown in Figure~\ref{fig:example_two_control_systems}.
This figure demonstrates that both systems are equipped with external and internal inputs and outputs. When constructing a composite control system, the internal output of one system is fed into the internal input of the other system. 
Figure~\ref{fig:example_two_control_systems} highlights how internal inputs and outputs 
enable the composition of 
control systems. After composition, the external inputs 
and outputs of~$\Delta_1$ and~$\Delta_2$ 
remain available for control, while the internal
inputs of each system 
are determined by the 
internal outputs to which they have been connected. 

\begin{figure}[!h]
    \centering
    \includegraphics[width=0.65\linewidth]{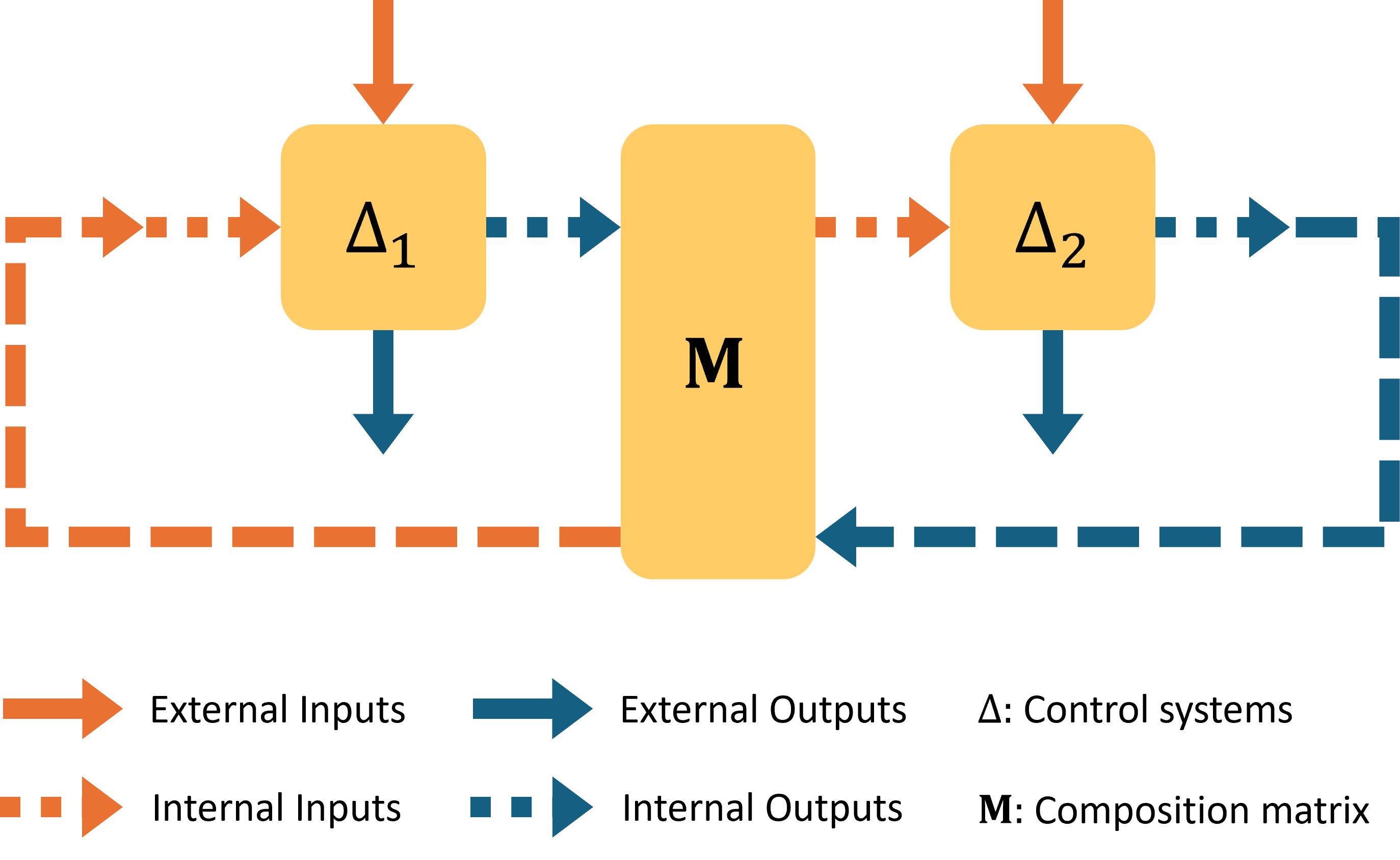}
    \caption{Two control systems and their composite control system. The internal output of $\Delta_1$ is fed into the internal input of $\Delta_2$, and this operation is
    modeled by the composition matrix $\mathbf{M}$, which
    will be discussed in Section~\ref{Sec_Background_Composite_Control_System}. 
    Similarly, the internal output of $\Delta_2$ is fed into the internal input of $\Delta_1$. 
    After composing~$\Delta_1$ and~$\Delta_2$, 
    the external inputs and outputs of both control systems become the inputs and outputs of the composite control system.}
    \label{fig:example_two_control_systems}
\end{figure} 
\end{example}

A locally absolutely continuous curve $x: (a, b)\rightarrow \mathbb R^n$ is a state
trajectory of an open control system $\Delta$ if there exist input trajectories $u(\cdot)\in \mathcal{U}$
and $w(\cdot)\in \mathcal{W}$ such that, for almost all  $t\in (a, b)$,
\begin{align}
\label{Eq: Control}
     \Delta:
    \begin{cases}
\dot x(t)=f\big(x(t),u(t),w(t)\big),\\
h(t)=h\big(x(t)\big),\\
y(t)=y\big(x(t)\big).
    \end{cases} 
\end{align}

 We call $(x,h,y, u, w)$ a trajectory 
 of  $\Delta$, consisting of a state trajectory $x$, output trajectories $h$ and $y$, and input trajectories $u$ and $w$ that satisfy~\eqref{Eq: Control}. 
We refer to $h$ as the output of interest, and we refer to $(w,y)$ as internal ports used to model
interconnections between subsystems. The 
external control 
signal $u$ remains available for control design, and the external output 
signal $h$ remains available for performance evaluation, before and after interconnection.


As discussed in the introduction, the input--output behavior of a more refined model can be approximated by that of a simpler model. 
To compare a simpler model with a more refined one in a way that supports controller refinement,
one needs both a function that quantifies their mismatch and a rule that converts external inputs designed
for the simpler model into external inputs for the more refined model.
This idea motivates the notions of storage functions, abstractions, and interface selection rules introduced in~\cite{zamani2017compositional}, which we describe next.

\begin{definition}[{Storage function and abstraction; \cite[Definition 3.1]{zamani2017compositional}}]
\label{Def: Storage_Function}
Consider two open control systems $\Delta^v$, $v\in\{i,j\}$, where
\begin{equation}
\Delta^v = (\mathbb R^{n^v}, \mathbb R^{m^v}, \mathbb R^{p^v},~\mathcal{U}^v, \mathcal{W}^v, f^v,~
\mathbb R^{q^{v}},~\mathbb R^{s^{v}}, h^{v}, y^{v}),    
\end{equation}
and suppose $q^i=q^j=q$ for some~$q \in \mathbb{N}_{> 0}$. 
Let $x^i\in\mathbb R^{n^i}$ and $x^j\in\mathbb R^{n^j}$ denote the states
of~$\Delta^i$ and~$\Delta^j$, respectively.

A continuously differentiable function
\begin{align*}
V^{j,i}:\mathbb R^{n^j}\times \mathbb R^{n^i}\to\mathbb R_{\ge 0}
\end{align*}
is called a \emph{storage function from $\Delta^i$ to $\Delta^j$} if there exist
$\lambda^{j,i}\in\mathcal K_\infty$, $\eta^{j,i}\in\mathcal K_\infty$, and $\rho^{j,i}\in\mathcal K_\infty\cup\{0\}$,
matrices $W^{i,j},W^{j,i}$ and $H^{j,i}$ of appropriate dimensions, and a symmetric matrix
\begin{align*}
X^{j,i}=\begin{bmatrix}[X^{j,i}]_{11} & [X^{j,i}]_{12}\\ [X^{j,i}]_{21} & [X^{j,i}]_{22}\end{bmatrix},
\qquad [X^{j,i}]_{22}\preceq 0,
\end{align*}
such that the following hold\final{:}
\begin{enumerate}
\item \textbf{Output domination:} for all $(x^j,x^i)\in\mathbb R^{n^j}\times\mathbb R^{n^i}$,
\begin{equation}\label{Eq:Simu_Obs}
\lambda^{j,i}\!\left(\left\|h^j(x^j)-h^i(x^i)\right\|\right)\le V^{j,i}(x^j,x^i).
\end{equation}

\item \textbf{Interface and dissipation inequality:}
for every external input $u^i\in\mathcal U^i$ there exists an external input
$u^{j\ast}\in\mathcal U^j$ such that
for all internal inputs $w^i\in\mathcal W^i$ and $w^j\in\mathcal W^j$, we have 
\begin{multline} \label{Eq:Simu_Grad}
\nabla V^{j,i}(x^j,x^i)^\top
\begin{bmatrix}
f^j(x^j,u^{j\ast},w^j)\\[1mm]
f^i(x^i,u^i,w^i)
\end{bmatrix}
\le 
-\eta^{j,i}\!\big(V^{j,i}(x^j,x^i)\big)
+\rho^{j,i}\!\left(\|u^i\|\right)  
\\ 
+
\zeta^{j,i}(x^j,x^i,w^j,w^i)^\top X^{j,i}\,\zeta^{j,i}(x^j,x^i,w^j,w^i),
\end{multline}
where
\begin{align}
\label{eq_zeta}
\zeta^{j,i}(x^j,x^i,w^j,w^i):=
\begin{bmatrix}
W^{i,j}w^j - W^{j,i}w^i\\[1mm]
y^j(x^j)-H^{j,i}y^i(x^i)
\end{bmatrix}.
\end{align}
\end{enumerate}
If such a storage function exists, we write $\Delta^i \preceq \Delta^j$ and say that $\Delta^i$ is an
\emph{abstraction} of $\Delta^j$. With an abuse of notation, we sometimes write $V^{j,i}(x^j,x^i)$ as $V^{j,i}(t)$.
\hfill $\lozenge$
\end{definition}
\begin{remark}[Interface Selection Rule]
\label{remark_Interface}
When a controller is designed on the simpler system $\Delta^i$ but implemented on the more
refined system $\Delta^j$, the input $u^i$ cannot in general be applied directly to $\Delta^j$.
Instead, one needs a rule that converts $u^i$ into an input for $\Delta^j$ while accounting for
the current states of both systems. This role is played by an interface selection rule
$\mathcal I^{j,i}$, which is a measurable map assigning, for each pair of states $(x^j,x^i)$ and
each external input $u^i$, an external input
\begin{align*}
u^{j\ast}=\mathcal I^{j,i}(x^j,x^i,u^i)
\end{align*}
such that the inequality~\eqref{Eq:Simu_Grad} holds. Thus, $\mathcal I^{j,i}$ refines the input
of $\Delta^i$ into an input for $\Delta^j$ through the storage function $V^{j,i}$.
\hfill $\lozenge$
\end{remark}

Under Definition~\ref{Def: Storage_Function}, $V^{j,i}$ is a storage function associated with the pair
$(\Delta^i,\Delta^j)$, and it characterizes the relation $\Delta^i \preceq \Delta^j$. Following
\cite{zamani2017compositional}, a storage function and its interface rule provide a method
for refining inputs from $\Delta^i$ to $\Delta^j$ and a bound on the
difference between the output trajectories that result from applying these inputs. 
Accordingly, we model relationships between open control systems through
storage functions. When a storage function $V^{j,i}$ exists for $(\Delta^i,\Delta^j)$, we write
$\Delta^i \xrightarrow{V^{j,i}} \Delta^j$,   
and we may also write $\Delta^i \to \Delta^j$ when the corresponding storage function and interface are clear from context.

\subsection{Composite Control Systems}\label{Sec_Background_Composite_Control_System}
We form composite control systems by interconnecting open control systems through their
internal ports. 
We briefly review this approach here; for further details
see~\cite{zamani2017compositional,arcak2016networks}. 

Consider $L\in\mathbb N_{>0}$ open control subsystems
\begin{align}
\label{Eq_Control_Sub_Systems}
\Delta_o = (\mathbb{R}^{n_o}, \mathbb{R}^{m_o}, \mathbb{R}^{p_o}, \mathcal{U}_o, \mathcal{W}_o, f_o,
\mathbb{R}^{q_o}, \mathbb{R}^{s_o}, h_o, y_o), \ \ o\in\underline L,
\end{align}
such that
\begin{align*}
     \Delta_o:
    \begin{cases}
\dot x_o(t)=f_o\big(x_o(t),u_o(t),w_o(t)\big),\\
h_o(t)=h_o\big(x_o(t)\big),\\
y_o(t)=y_o\big(x_o(t)\big).
    \end{cases} 
\end{align*}
Let
\begin{align*}
\mathbf{x}&:=[x_1^\top,\ldots,x_L^\top]^\top,\quad
\mathbf{u}:=[u_1^\top,\ldots,u_L^\top]^\top,\quad \\
\mathbf{y}&:=[y_1(x_1)^\top,\ldots,y_L(x_L)^\top]^\top,\quad
\mathbf{w}:=[w_1^\top,\ldots,w_L^\top]^\top.      
\end{align*}

We encode the interconnection pattern among these~$L$ open control systems to construct a composite control system through the following definition.
\begin{definition}[Composite control system {\cite[Def.~4.1]{zamani2017compositional}}]\label{Def_CompositeSystem}
The composite control system induced by $\{\Delta_o\}_{o=1}^L$ is the system
\begin{align*}
\boldsymbol{\Delta}:=(\mathbb R^{\mathbf{n}},\mathbb R^{\mathbf{m}},\boldsymbol{\mathcal U},
\boldsymbol f,\mathbb R^{\mathbf{q}},\boldsymbol h),
\end{align*}
where $\mathbf{n}=\sum_{o=1}^L n_o$, $\mathbf{m}=\sum_{o=1}^L m_o$, and $\mathbf{q}=\sum_{o=1}^L q_o$, and
\begin{align*}
\boldsymbol f(\mathbf x,\mathbf u)&:=
\big[f_1(x_1,u_1,w_1)^\top,\ldots,f_L(x_L,u_L,w_L)^\top\big]^\top,\\
\boldsymbol h(\mathbf x)&:=
\big[h_1(x_1)^\top,\ldots,h_L(x_L)^\top\big]^\top,
\end{align*}
with $(w_1,\ldots,w_L)$ determined by the interconnection constraint
\begin{equation}\label{Eq:InterconnectionConstraint}
\mathbf{w}=\mathbf M\,\mathbf{y},
\end{equation}
where $\mathbf M\in \mathbb R^{\mathbf{p} \times \mathbf{s}}$ is the composition matrix, with $\mathbf{p}=\sum_{o=1}^L p_o$ and $\mathbf{s}=\sum_{o=1}^L s_o$.
\hfill $\lozenge$
\end{definition}

The dimension of $\mathbf{M}$ is determined by the total dimensions of the internal inputs and internal outputs of the $L$ open control subsystems.
We encode the interconnection pattern among these subsystems by the matrix $\mathbf M$. Throughout the paper, by construction $\mathbf M$ has block structure $[\mathbf M]_{ol}\in\mathbb R^{p_o\times s_l}$ and satisfies $[\mathbf M]_{oo}=0$ for all $o,l\in\underline L$, so that no subsystem is connected to itself. Under this construction, $[\mathbf M]_{ol}\in \mathbb R^{p_o\times s_l}$ captures the coupling from the internal outputs of the $l$th subsystem to the internal inputs of the $o^{th}$ subsystem, for all $o,l\in\underline L$ with $o\neq l$.
The composite system $\boldsymbol{\Delta}$ is not an open control system, since it has no internal ports.

We next illustrate the construction of the composition matrix $\mathbf{M}$ for a composite control system using an example.

\begin{example}
Figure~\ref{fig:example_comp_three} shows a composite control system composed of three control subsystems, namely $\Delta_o$ for $o\in \underline 3$.
The internal input and output vectors are given by 
$w = [w_1^\top, w_2^\top, w_3^\top]^\top$ and $y = [y_1^\top, y_2^\top, y_3^\top]^\top$, respectively. Based on Definition~\ref{Def_CompositeSystem},
the dimension of the  composition matrix $\mathbf{M}$ is  $7\times 5$, and $\mathbf{M}$ is given by
\footnotesize
\begin{align*}
\mathbf{M} = 
\begin{bmatrix}
0 & 0 & 0 & 0            &     0\\
0 & 0 & 0 & ([\mathbf{M}]_{31})_{21} &  0\\
([\mathbf{M}]_{21})_{11} & 0 & 0  & 0 & 0\\
0       &        0 & 0 &  0 & ([\mathbf{M}]_{23})_{22}\\
0      &    0  & ([\mathbf{M}]_{32})_{12}      &            0 &  0\\    0    &      ([\mathbf{M}]_{32})_{21} &  0         &           0 &  0\\ 
0       &   0 &  0             &     0  & 0   
\end{bmatrix},
\end{align*}
\normalsize
where the nonzero entries capture the coupling strength from internal outputs to internal inputs. For instance, $([\mathbf{M}]_{21})_{11}\neq 0$ captures the gain from 
the first internal output
 in $\Delta_1$ (which is $(y_1)_1$) to the first internal input in $\Delta_2$ (which is $(w_2)_1$). Additionally, zero entries in $\mathbf{M}$ indicate no interactions between the
 corresponding internal inputs and internal outputs.
 \hfill $\lozenge$
 
\begin{figure}
    \centering
    \includegraphics[width=0.7\linewidth]{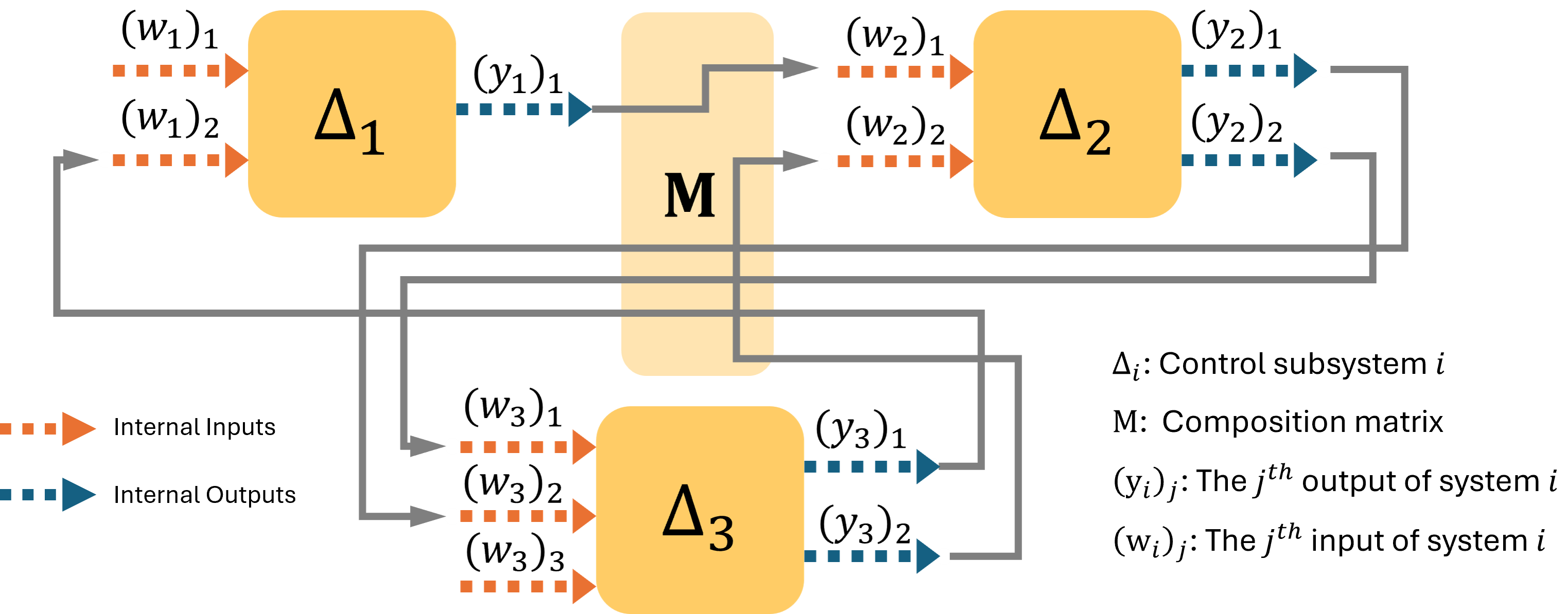}
    \caption{
    A composite control system composed of three control subsystems. \bshe{The composition matrix $\mathbf{M}$ captures the gain when feeding the internal output of one subsystem into the internal input of another. The internal inputs ($(w_1)_1$ or $(w_3)_3$) without any connection are equal to zero. For simplicity, the figure does not show any external inputs and outputs.}}
\label{fig:example_comp_three}
\end{figure}
\end{example}

\subsection{Category Theory} \label{Sec:CategoryTheory}
We recall the basic notions of category and functor.
\begin{definition}[Category {\cite[Def.~3.6]{fong2019invitation}}]\label{Def:Category}
A category $\mathcal C$ consists of:
\begin{enumerate}
\item a collection of objects $\mathrm{Ob}(\mathcal C)$;
\item for every pair of objects $c,d\in\mathrm{Ob}(\mathcal C)$, a set of morphisms
$\mathcal C(c,d)$ (also written $\mathrm{Hom}_{\mathcal C}(c,d)$);
\item for every object $c\in\mathrm{Ob}(\mathcal C)$, an identity morphism
$\mathrm{id}_c\in\mathcal C(c,c)$;
\item for every triple of objects $c,d,e$ and morphisms $f\in\mathcal C(c,d)$ and
$g\in\mathcal C(d,e)$, a composite morphism $g\circ f\in\mathcal C(c,e)$.
\end{enumerate}
These data satisfy:
\begin{enumerate}
\item (identity) $f\circ \mathrm{id}_c=f$ and $\mathrm{id}_d\circ f=f$ for all $f\in\mathcal C(c,d)$;
\item (associativity) $h\circ(g\circ f)=(h\circ g)\circ f$ for all composable $f,g,h$.
\end{enumerate}
\hfill $\lozenge$
\end{definition}
Note that a hom-set $\mathcal C(c,d)$ may be empty. A functor is 
a structure-preserving map between categories.
\begin{definition}[Functor {\cite[Def.~3.35]{fong2019invitation}}]\label{Def:Functor}
Let $\mathcal C$ and $\mathcal D$ be categories. A functor $\Phi:\mathcal C\to\mathcal D$ assigns:
\begin{enumerate}
\item to each object $c\in\mathrm{Ob}(\mathcal C)$ an object $\Phi(c)\in\mathrm{Ob}(\mathcal D)$;
\item to each morphism $f\in\mathcal C(c,d)$ a morphism $\Phi(f)\in\mathcal D(\Phi(c),\Phi(d))$;
\end{enumerate}
such that:
\begin{enumerate}
\item (Preservation of identity) $\Phi(\mathrm{id}_c)=\mathrm{id}_{\Phi(c)}$ for all objects $c$;
\item (Preservation of composition) $\Phi(g\circ f)=\Phi(g)\circ\Phi(f)$ for all composable $f,g$.
\end{enumerate}
\hfill $\lozenge$
\end{definition}
Category theory provides a concise way to describe both systems and the refinement relations between them in a unified framework~\cite{tabuada2004compositional,bakirtzis2021categorical,she2023characterizing}. In this paper, objects will represent control systems, and morphisms represent refinement relations or chains of such relations between systems. A functor then formalizes how these systems and relations can be mapped to another category while preserving the way they compose. For example, in this paper we define a functor that maps an open control system with internal ports to a control system without internal ports by fixing the internal inputs to zero and removing the internal output map, while carrying the corresponding refinement relations to the new setting.

\subsection{Problem Formulation}\label{Sec:ProblemFormulation}
We study families of open control systems and the refinement relations between them induced by storage functions (Definition~\ref{Def: Storage_Function}). In many applications, several models of the same physical process are available and are naturally ordered by fidelity or complexity. In such settings, refinement relations are often available only between models that are close in level of refinement, rather than directly between models separated by several levels in the hierarchy of refinements. 
Moreover, many systems of interest are networked systems obtained by interconnecting subsystems through internal ports.

Motivated by these settings, the problems addressed in this work are as follows:

\begin{problem} \label{prob:1}
Organize open control systems and the refinement relations induced by storage functions into a category in which multi-step refinement across a model hierarchy is represented by composition. 
\end{problem}

\begin{problem} \label{prob:2}
Given a controller designed on a simpler model, quantify the accumulated output mismatch and the resulting closed-loop performance transfer when that controller is implemented on a more refined model through a chain of refinement relations.
\end{problem}

\begin{problem} \label{prob:3}
Formulate a projection 
map to systematically transfer refinement chains for open control systems with internal ports to a corresponding setting without internal ports.
\end{problem}

\begin{problem} \label{prob:4}
Lift subsystem-level refinement chains to interconnected network models for scalable controller refinement across multiple levels in a network model hierarchy.
\end{problem}

\section{Category of Refinement Chains over Model Hierarchies}
\label{Sec_CTRL}
In this section, we study controller refinement for open control systems when refinement relations are available only between models that are close in fidelity within a model hierarchy. Based on the definitions of open control systems in Definition~\ref{Def_Con_Systm}, storage functions in Definition~\ref{Def: Storage_Function}, and the category-theoretic concepts reviewed in Section~\ref{Sec:CategoryTheory}, we formalize multi-step refinement through chains of refinement relations and derive an output mismatch bound between 
the least refined and most refined models in a hierarchy. We then derive a bound on closed-loop performance transfer when a controller designed on a simpler model is implemented on a more 
refined model through the composition of the corresponding refinement relations between them.

\subsection{The Category $\mathsf{OCS}_{\mathrm{chain}}$}
To organize these multi-step refinement relations, 
we introduce a category whose objects are open control systems and whose morphisms are given by
pairwise refinement relations and their compositions. Each elementary step is \final{characterized} by a storage function 
(cf. Definition~\ref{Def: Storage_Function})
together with an associated interface
(cf. Remark~\ref{remark_Interface}), which refines inputs from a simpler model to a more refined model and yields a bound on the mismatch \final{between the corresponding external outputs.} Since such \final{relations} are usually available only between certain pairs of systems, multi-step refinement must be represented by chains of \final{pairwise refinement relations}. 
This need motivates the definition of elementary morphisms and leads to the category $\mathsf{OCS}_{\mathrm{chain}}$.

Let $\mathsf{Sys}$ denote the collection of all open control systems $\Delta$ in Definition~\ref{Def_Con_Systm}.

\begin{definition}[Elementary morphism]
\label{Def_Ele_stor_morph}
Given $\Delta^i,\Delta^j \in \mathsf{Sys}$, an elementary morphism, 
\begin{align}
\alpha^{j,i}:\Delta^i\to\Delta^j
\end{align}
is a pair $\alpha^{j,i}:=(V^{j,i},\mathcal I^{j,i})$, where $V^{j,i}$ is a storage function from $\Delta^i$ to $\Delta^j$ in  Definition~\ref{Def: Storage_Function}, and $\mathcal I^{j,i}$ is the associated interface as in Remark~\ref{remark_Interface}.
\hfill $\lozenge$
\end{definition}

An elementary morphism represents a \final{single-step refinement relation} from one open control system to another. Because multiple refinements can exist between a given pair of systems, elementary morphisms have the structure of a \emph{directed multigraph}, where the vertices are open control systems and edges are refinements.  Since such refinement \final{relations} are typically available only between systems that are close in fidelity or complexity, \final{refinement across multiple levels is represented by chaining these elementary morphisms together.}

\begin{definition}[Chain morphism]
\label{Def_chain_morphism}
Given $\Delta^0,\Delta^r \in \mathsf{Sys}$, a chain morphism $\gamma:\Delta^0\to\Delta^r$ for some $r\in \mathbb{N}_{> 0}$ is a finite sequence of the form 
\begin{align*}
\gamma=(\alpha^{1,0},\alpha^{2,1},\ldots,\alpha^{r,r-1}),
\end{align*}
where, for each $k=0,\ldots,r-1$,
\begin{align*}
\alpha^{k+1,k}:\Delta^k\to\Delta^{k+1},
\end{align*}
so that each adjacent pair of morphisms is composable, \final{i.e., the codomain of $\alpha^{k,k-1}$ coincides with the domain of $\alpha^{k+1,k}$ for each $k=1,\ldots,r-1$.} When $r=0$, the chain is empty and represents the identity morphism on $\Delta^0$.
\hfill $\lozenge$
\end{definition}

A chain morphism defined in Definition~\ref{Def_chain_morphism} represents a multi-step \final{refinement} relation between two open control systems by recording a sequence of elementary morphisms that connect them. This construction is useful if a direct \final{refinement} relation between the initial and final systems in a model hierarchy is unavailable, but \final{such relations} can still be constructed between intermediate systems that are closer in fidelity or complexity. Thus, the chain describes how controller refinement is carried
out step by step across the hierarchy. We now use these chain morphisms to define the category $\mathsf{OCS}_{\mathrm{chain}}$.

\begin{theorem}[Category: $\mathsf{OCS}_{\mathrm{chain}}$]
\label{Thm_CTRL_Path}
Let $\mathsf{Sys}$ denote the collection of all open control systems $\Delta$ in Definition~\ref{Def_Con_Systm}. Define $\mathsf{OCS}_{\mathrm{chain}}$ as follows: 

\emph{Objects:} $\mathrm{Ob}(\mathsf{OCS}_{\mathrm{chain}}):=\mathsf{Sys}$.

\emph{Morphisms:} For any $\Delta^0,\Delta^r\in\mathsf{Sys}$, the morphisms from $\Delta^0$ to $\Delta^r$ are the chain morphisms in Definition~\ref{Def_chain_morphism}, \final{including the empty chain when $\Delta^0=\Delta^r$. In particular, a chain of length one is an elementary morphism in the sense of Definition~\ref{Def_Ele_stor_morph}.}

\emph{Identity:} For each $\Delta\in\mathsf{Sys}$, define $\mathrm{id}_\Delta:=()$, the empty chain.

\emph{Composition:} Given $\gamma:\Delta^0\to\Delta^r$ and $\gamma':\Delta^r\to\Delta^s$, define $\gamma'\circ\gamma$ by concatenation of the two chains.

Then $\mathsf{OCS}_{\mathrm{chain}}$ is a category, i.e., it satisfies all axioms of Definition~\ref{Sec:CategoryTheory}. 
\hfill $\lozenge$
\end{theorem}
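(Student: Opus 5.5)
The proof is essentially the verification that $\mathsf{OCS}_{\mathrm{chain}}$ is the free category (path category) generated by the directed multigraph whose vertices are the objects of $\mathsf{Sys}$ and whose edges are the elementary morphisms of Definition~\ref{Def_Ele_stor_morph}. I will check the four data items of Definition~\ref{Def:Category} and then the two axioms, working directly with finite sequences and their concatenation. Because a chain $\gamma=(\alpha^{1,0},\ldots,\alpha^{r,r-1})$ is a bare sequence, with no requirement that a single storage function relate $\Delta^0$ and $\Delta^r$ directly, nothing about the dissipation inequality~\eqref{Eq:Simu_Grad} needs to be composed. All arguments are therefore combinatorial.

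\textbf{Well-definedness.} For each pair $\Delta^0,\Delta^r$, the morphisms form the collection of composable finite sequences of elementary morphisms with source $\Delta^0$ and target $\Delta^r$. I treat this as a set; any size issues are handled by the usual convention of a fixed universe, since the elementary morphisms are pairs of functions. The empty chain $()$ belongs to $\mathsf{OCS}_{\mathrm{chain}}(\Delta,\Delta)$. One point needs care: the empty chain should carry its object as a tag, writing $()_\Delta$, so that hom-sets for distinct pairs stay disjoint.

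For composition, take $\gamma=(\alpha^{1,0},\ldots,\alpha^{r,r-1})$ and $\gamma'=(\beta^{1,0},\ldots,\beta^{s,s-1})$, where the codomain of $\gamma$ equals the domain of $\gamma'$. Their concatenation $(\alpha^{1,0},\ldots,\alpha^{r,r-1},\beta^{1,0},\ldots,\beta^{s,s-1})$ is again composable in the sense of Definition~\ref{Def_chain_morphism}:
\begin{itemize}
\item within each block, composability holds by hypothesis;
\item at the junction, the codomain of $\alpha^{r,r-1}$ is the target of $\gamma$, which is the source of $\gamma'$, which is the domain of $\beta^{1,0}$;
\item the resulting chain runs from $\Delta^0$ to the target of $\gamma'$.
\end{itemize}
The degenerate cases, where one or both chains are empty, are covered by concatenation with the empty sequence.

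\textbf{Axioms.} The identity axiom holds because concatenating the empty sequence on either side returns the original sequence, so $\gamma\circ\mathrm{id}_{\Delta^0}=\gamma=\mathrm{id}_{\Delta^r}\circ\gamma$. Associativity follows from associativity of sequence concatenation. To be explicit, both $\gamma''\circ(\gamma'\circ\gamma)$ and $(\gamma''\circ\gamma')\circ\gamma$ are the length-$(r+s+t)$ sequence whose $k$-th entry is:
\begin{itemize}
\item the $k$-th entry of $\gamma$ for $k\le r$;
\item the $(k-r)$-th entry of $\gamma'$ for $r<k\le r+s$;
\item the $(k-r-s)$-th entry of $\gamma''$ otherwise.
\end{itemize}
Two sequences with equal length and equal entries are equal.

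The only step I expect to need attention is the bookkeeping around empty chains. The object of an empty chain must be recoverable (hence the tag), and the convention ``$r\in\mathbb{N}_{>0}$'' in Definition~\ref{Def_chain_morphism} must be read together with the explicit inclusion of the $r=0$ case. Once the empty chain is fixed as $()_\Delta$ with source and target both $\Delta$, everything else is routine. I would conclude by noting that each elementary morphism embeds as a length-one chain, as stated in the theorem.
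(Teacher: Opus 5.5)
Your proposal is correct and follows the same route as the paper's appendix proof. Both arguments directly check that the hom-sets are well defined, that concatenating composable chains gives a chain (composability within each block and at the junction), that the empty chain is a two-sided unit, and that associativity reduces to associativity of sequence concatenation; your extra bookkeeping (tagging the empty chain as $()_\Delta$ so hom-sets stay disjoint, and reconciling the $r\in\mathbb{N}_{>0}$ wording of Definition~\ref{Def_chain_morphism} with the $r=0$ case) is a harmless refinement that the paper leaves implicit.
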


\begin{remark}
\label{remark_free_path}
    Observe that chains are formed by taking paths in the directed multigraph of elementary morphisms. Thus, the category $OCS_\mathrm{chain}$ is the \emph{path category} of this directed multigraph: objects are vertices, and morphisms are paths with composition given by concatenation of paths.
\end{remark}

\textcolor{black}{The proof is given in the Appendix.}
Theorem~\ref{Thm_CTRL_Path} provides the structural foundation for the controller refinement problem. In particular, it formalizes the setting in which a direct refinement relation between two open control systems may be unavailable, while a sequence of intermediate refinement steps still connects them. By representing such multi-step refinements as morphisms in the category $\mathsf{OCS}_{\mathrm{chain}}$, Theorem~\ref{Thm_CTRL_Path} allows an entire refinement path from a simpler model to a more refined model to be treated as a single mathematical object. The next theorem then quantifies how output error introduced at each step accumulates along the chain and how it affects closed-loop performance. 

\final{Consider a chain of elementary morphisms 
\begin{equation}\label{eq:chain}
\gamma:\Delta^0 \xrightarrow{\alpha^{1,0}} \Delta^1 \xrightarrow{\alpha^{2,1}} \cdots
\xrightarrow{\alpha^{r,r-1}} \Delta^r,
\end{equation}
where each elementary morphism is of the form
\begin{equation} \label{eq:morphismk}
\alpha^{k,k-1}=(V^{k,k-1},\mathcal I^{k,k-1}), \qquad k\in \underline r.
\end{equation}
According to Definition~\ref{Def: Storage_Function}, $V^{k,k-1}$ is the storage function associated with the pair
$(\Delta^{k-1},\Delta^k)$ and $\mathcal I^{k,k-1}$ is the corresponding interface used to
refine the external input of $\Delta^{k-1}$ into an external input for $\Delta^k$ (cf. Remark~\ref{remark_Interface}). 
For all~$t \geq 0$, given an
external input $u^0(t)$ for $\Delta^0$, the inputs $u^1(t), \ldots, u^r(t)$ are generated recursively along
the chain by
\begin{align}
\label{eq_control_chain}
u^k(t)=\mathcal I^{k,k-1}\bigl(x^k(t),x^{k-1}(t),u^{k-1}(t)\bigr),
\qquad k\in \underline r.
\end{align}
The corresponding state trajectories $(x^0, \ldots, x^r)$ are then generated by the systems
$\Delta^0,\ldots,\Delta^r$, respectively, under these inputs.}

\final{
\begin{theorem}[Controller refinement along chain morphisms]
\label{Thm_Control_Refine}
Let $\gamma$ be a chain morphism in $\mathsf{OCS}_{\mathrm{chain}}$ of the form of~\eqref{eq:chain}, and 
let $(x^0, \ldots, x^r)$ be the corresponding state trajectories generated along the chain as described
in~\eqref{eq_control_chain}. Then, for all $t\ge 0$, the external output mismatch between the source and target systems
$\Delta^0$ and $\Delta^r$ satisfies
\begin{align}
\label{eq:endpoint_bound}
\left\|h^r(x^r(t)) - h^0(x^0(t))\right\|
\le
\sum_{k=1}^r \bigl(\lambda^{k,k-1}\bigr)^{-1}
\Bigl(V^{k,k-1}(x^k(t),x^{k-1}(t))\Bigr),
\end{align}
where, for each $k\in\underline r$, the function $\lambda^{k,k-1}\in\mathcal K_\infty$ is the one
associated with the storage function $V^{k,k-1}$ in Definition~\ref{Def: Storage_Function}.
\hfill $\lozenge$
\end{theorem}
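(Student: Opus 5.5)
The plan is a telescoping sum, the triangle inequality, and the output domination condition~\eqref{Eq:Simu_Obs} applied to each elementary morphism in the chain. All external output spaces along the chain have the common dimension $q$, because each storage function $V^{k,k-1}$ in Definition~\ref{Def: Storage_Function} requires $q^{k-1}=q^k$. So the differences $h^k(x^k(t))-h^{k-1}(x^{k-1}(t))$ all live in $\mathbb R^q$. The inputs are generated by~\eqref{eq_control_chain}, and the trajectories $(x^0,\ldots,x^r)$ are assumed to exist on the time horizon considered. Fix $t\ge 0$ and write the endpoint mismatch as
\begin{align*}
h^r(x^r(t))-h^0(x^0(t))=\sum_{k=1}^r\Bigl(h^k(x^k(t))-h^{k-1}(x^{k-1}(t))\Bigr).
\end{align*}
The triangle inequality then bounds $\|h^r(x^r(t))-h^0(x^0(t))\|$ by $\sum_{k=1}^r\|h^k(x^k(t))-h^{k-1}(x^{k-1}(t))\|$.

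Next, bound each summand. For each $k\in\underline r$, condition~\eqref{Eq:Simu_Obs} for $V^{k,k-1}$ holds at every pair of states, in particular at $(x^k(t),x^{k-1}(t))$. It gives
\begin{align*}
\lambda^{k,k-1}\bigl(\|h^k(x^k(t))-h^{k-1}(x^{k-1}(t))\|\bigr)\le V^{k,k-1}(x^k(t),x^{k-1}(t)).
\end{align*}
Since $\lambda^{k,k-1}\in\mathcal K_\infty$, it is a continuous, strictly increasing bijection of $\mathbb R_{\ge 0}$ onto itself. Hence its inverse $(\lambda^{k,k-1})^{-1}$ exists on all of $\mathbb R_{\ge0}$ and is also increasing. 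The value $V^{k,k-1}(x^k(t),x^{k-1}(t))$ is nonnegative, so it lies in this domain. Applying the inverse to both sides gives
\begin{align*}
\|h^k(x^k(t))-h^{k-1}(x^{k-1}(t))\|\le(\lambda^{k,k-1})^{-1}\bigl(V^{k,k-1}(x^k(t),x^{k-1}(t))\bigr).
\end{align*}
Summing over $k$ and combining with the triangle-inequality bound yields~\eqref{eq:endpoint_bound}.

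The main point needing care is justifying that $(\lambda^{k,k-1})^{-1}$ is defined on all of $\mathbb R_{\ge 0}$ and order-preserving; surjectivity follows from continuity, $\lambda(0)=0$ and unboundedness. The argument does not use the dissipation inequality~\eqref{Eq:Simu_Grad} or the form of the interfaces. The interfaces matter only in that they generate the trajectories at which the storage functions are evaluated. The empty chain $r=0$ is trivial: both sides are zero since $\Delta^r=\Delta^0$ and $x^r=x^0$.
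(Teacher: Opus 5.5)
Your proposal is correct and matches the paper's proof step for step: the triangle inequality over the telescoping sum of adjacent output differences, then the output domination condition~\eqref{Eq:Simu_Obs} at $(x^k(t),x^{k-1}(t))$ inverted through $(\lambda^{k,k-1})^{-1}$. Your extra remarks are slightly more careful than the paper, which only cites strict monotonicity: you note the common output dimension $q$, that $\lambda^{k,k-1}$ is surjective onto $\mathbb R_{\ge0}$ so the inverse is defined at every value of $V^{k,k-1}$, and that the case $r=0$ is trivial.
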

}

\textcolor{black}{The proof is given in the Appendix.}
\final{Theorem~\ref{Thm_Control_Refine} gives an explicit bound on the output mismatch 
between the first system and the last system 
along a chain morphism. To use this result in controller refinement, it remains to bound each intermediate storage function $V^{k,k-1}$ along the corresponding state trajectories. The following corollary provides such bounds directly from the dissipation inequality in Definition~\ref{Def: Storage_Function}.}

\begin{corollary}[Mismatch bounds along chains]
\label{Cor_Control_Bounds}
Consider a chain morphism
\begin{equation}
\gamma:\Delta^0 \xrightarrow{\alpha^{1,0}} \Delta^1 \xrightarrow{\alpha^{2,1}} \cdots
\xrightarrow{\alpha^{r,r-1}} \Delta^r
\end{equation}
from~\eqref{eq:chain}, where
$\alpha^{k,k-1}=(V^{k,k-1},\mathcal I^{k,k-1})$ 
for~$k \in \underline{r}$. 
Let~$x^k$ denote the state trajectory for the system~$\Delta^k$ over~$t \geq 0$. 
For each~$k \in \underline{r}$ consider the function~$\eta^{k,k-1}$ from~\eqref{Eq:Simu_Grad}. 
For each $k\in\underline r$, 
if there exists a constant
$c^{k,k-1}>0$ such that
$\eta^{k,k-1}(s)\ge c^{k,k-1}s$ for all $s\ge 0$,
then the function
\begin{equation}
V^{k,k-1}(t):=V^{k,k-1}(x^k(t),x^{k-1}(t))
\end{equation}
obeys the bound 
\begin{align}
\label{eq:Vk_exp_bound}
V^{k,k-1}(t) &\le e^{-c^{k,k-1}t}V^{k,k-1}(0) 
+ \int_0^t e^{-c^{k,k-1}(t-\tau)} \Big( \rho^{k,k-1}(\|u^{k-1}(\tau)\|) \\ 
& + (\zeta^{k,k-1}(\tau))^\top X^{k,k-1}\zeta^{k,k-1}(\tau) \Big) \,d\tau, \nonumber
\end{align}
\final{where
$\zeta^{k,k-1}(\tau):=
\zeta^{k,k-1}(x^k(\tau),x^{k-1}(\tau),w^k(\tau),w^{k-1}(\tau))$    
 is from~\eqref{eq_zeta}.}
\hfill $\lozenge$
\end{corollary}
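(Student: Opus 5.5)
The plan is a standard comparison-lemma argument applied to each link of the chain separately. Fix $k\in\underline r$ and write $V(t):=V^{k,k-1}(x^k(t),x^{k-1}(t))$. Since $V^{k,k-1}$ is continuously differentiable and the trajectories $x^k,x^{k-1}$ are locally absolutely continuous, $t\mapsto V(t)$ is locally absolutely continuous. For almost every $t$ the chain rule then gives
\[
\dot V(t)=\nabla V^{k,k-1}(x^k,x^{k-1})^\top\begin{bmatrix} f^k(x^k,u^k,w^k)\\ f^{k-1}(x^{k-1},u^{k-1},w^{k-1})\end{bmatrix}.
\]
By the chain construction~\eqref{eq_control_chain}, $u^k(t)=\mathcal I^{k,k-1}(x^k(t),x^{k-1}(t),u^{k-1}(t))$. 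This is exactly the interface input $u^{k\ast}$ of Remark~\ref{remark_Interface}, so the dissipation inequality~\eqref{Eq:Simu_Grad} applies pointwise. It holds for every choice of internal inputs, in particular for the realized $w^k(t),w^{k-1}(t)$, and yields
\[
\dot V(t)\le -\eta^{k,k-1}(V(t))+g(t),\qquad g(t):=\rho^{k,k-1}(\|u^{k-1}(t)\|)+(\zeta^{k,k-1}(t))^\top X^{k,k-1}\zeta^{k,k-1}(t).
\]

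Next I will use the hypothesis $\eta^{k,k-1}(s)\ge c^{k,k-1}s$ together with $V\ge0$ to replace the nonlinear decay term by a linear one, giving $\dot V\le -c^{k,k-1}V+g$ almost everywhere. I then multiply by the integrating factor $e^{c^{k,k-1}t}$. The product $e^{c^{k,k-1}t}V(t)$ is again locally absolutely continuous, and its derivative satisfies $\tfrac{d}{dt}\big(e^{c^{k,k-1}t}V(t)\big)\le e^{c^{k,k-1}t}g(t)$ a.e. Integrating from $0$ to $t$ (valid by absolute continuity) and multiplying by $e^{-c^{k,k-1}t}$ produces exactly~\eqref{eq:Vk_exp_bound}.

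The main technical point is justifying that the integral of $g$ is well defined. Here $g$ is measurable, being a composition of continuous maps with measurable signals, but it may take either sign, since $X^{k,k-1}$ is indefinite in general. I would note that the integral is understood whenever $g$ is locally integrable along the trajectory. This holds, for example, for locally essentially bounded inputs: continuity of $x^k$, $x^{k-1}$ and the output maps make the $y$-part bounded on compact intervals. If $g$ is not locally integrable, the right-hand side is interpreted as $+\infty$ and the bound is trivial. I would also remark that the argument does not use the block condition $[X^{k,k-1}]_{22}\preceq0$ or any other link of the chain, so the bound holds independently for each $k$. Combining it with Theorem~\ref{Thm_Control_Refine} then yields a fully explicit endpoint mismatch estimate.
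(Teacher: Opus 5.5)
Your proposal is correct and follows the same route as the paper's proof. Both apply the dissipation inequality~\eqref{Eq:Simu_Grad} along the realized trajectories with the interface input $u^k=\mathcal I^{k,k-1}(x^k,x^{k-1},u^{k-1})$, bound $-\eta^{k,k-1}(V)$ by $-c^{k,k-1}V$, multiply by $e^{c^{k,k-1}t}$, and integrate over $[0,t]$. Your extra justifications (local absolute continuity of $V^{k,k-1}(t)$, the a.e.\ chain rule, and integrability of the forcing term $g$) supply details the paper leaves implicit. Your $+\infty$ convention is also legitimate: the negative part of $e^{c^{k,k-1}\tau}g(\tau)$ is bounded by the integrable derivative of $e^{c^{k,k-1}\tau}V^{k,k-1}(\tau)$, so it is automatically integrable.
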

\textcolor{black}{The proof is given in the Appendix.}
Theorem~\ref{Thm_Control_Refine} and Corollary~\ref{Cor_Control_Bounds} provide complementary bounds for controller refinement along a chain of refinement relations. The bound in Theorem~\ref{Thm_Control_Refine} shows that the external output mismatch between the endpoint systems $\Delta^0$ and $\Delta^r$ can be quantified by combining the contributions of the intermediate steps along the chain. Thus, even when no direct refinement relation is available between the source model~$\Delta^0$ and the target model~$\Delta^r$, the overall 
output mismatch can still be bounded through the chain.
The bound in Corollary~\ref{Cor_Control_Bounds} further characterizes how the storage function at each step evolves over time under the effects of the external input and the internal coupling term. In particular, when the dissipation term $-\eta^{k,k-1}(V^{k,k-1}(t))$ is sufficiently strong, the mismatch at each step decays up to the forcing induced by the input and interconnection terms. Together, these results show both how the mismatch contributed by each step accumulates along the chain and how it is affected by the underlying dynamics and inputs.

\begin{remark}
Theorem~\ref{Thm_Control_Refine} and Corollary~\ref{Cor_Control_Bounds} support controller refinement across model hierarchies. A controller can be designed on a simpler model and then implemented on a more refined model through a chain of interfaces, while the resulting output mismatch and performance degradation remain explicitly quantifiable. \final{The category-theoretic framework is useful here because it provides a systematic way to organize and compose these multi-step refinement relations into a single chain morphism, thereby making the end-to-end refinement guarantee transparent.}  
\hfill $\lozenge$
\end{remark}

Building on Theorem~\ref{Thm_Control_Refine}, we now derive a closed-loop performance transfer result for controller refinement along the chain.
\begin{corollary}[Performance transfer under refinement]
\label{Cor_CL_Performance_Transfer}
Consider the chain morphism in Theorem~\ref{Thm_Control_Refine}. Suppose a controller is designed for $\Delta^0$ so that the external output of $\Delta^0$ tracks a reference signal $\xi(t)$ with error bounded by a known function, namely,
\begin{align*}
\|h^0(x^0(t)) - \xi(t)\|\le \varepsilon(t),\qquad \textnormal{ for all } t \geq 0
\end{align*}
for some function~$\varepsilon : [0, \infty) \to \mathbb{R}_{\geq 0}$. 
Suppose this controller is implemented on $\Delta^r$ by recursively refining the input along the chain~\eqref{eq:chain} through the interfaces
$\mathcal I^{1,0},\mathcal I^{2,1},\ldots,\mathcal I^{r,r-1}$.
Then, for all $t\ge 0$, the tracking error of $\Delta^r$ satisfies
\begin{align*}
\|h^r(x^r(t)) - \xi(t)\|
\le
\varepsilon(t) + \sum_{k=1}^r (\lambda^{k,k-1})^{-1}\!\bigl(V^{k,k-1}(t)\bigr).
\end{align*}
\hfill $\lozenge$
\end{corollary}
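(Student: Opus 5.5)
The plan is to reduce the corollary to Theorem~\ref{Thm_Control_Refine} by a single application of the triangle inequality. Fix $t\ge 0$ and consider the trajectories $(x^0,\ldots,x^r)$ produced when the controller designed for $\Delta^0$ generates $u^0(t)$ and the inputs $u^1,\ldots,u^r$ are obtained recursively through the interfaces via~\eqref{eq_control_chain}. Under this interpretation, ``implementing the controller on $\Delta^r$'' means exactly this recursively refined input. The tracking hypothesis on $\Delta^0$ is then assumed to hold along this same trajectory $x^0$. This is the one point I would make explicit: the hypothesis concerns the closed loop of the design model, and that closed loop is the $\Delta^0$ component of the chain trajectory.

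With this in place, I would write
\begin{align*}
\|h^r(x^r(t))-\xi(t)\|
\le \|h^r(x^r(t))-h^0(x^0(t))\| + \|h^0(x^0(t))-\xi(t)\|.
\end{align*}
The second term is bounded by $\varepsilon(t)$ directly from the hypothesis. The first term is bounded by~\eqref{eq:endpoint_bound} of Theorem~\ref{Thm_Control_Refine}, which gives $\sum_{k=1}^r(\lambda^{k,k-1})^{-1}\bigl(V^{k,k-1}(t)\bigr)$, using the abuse of notation $V^{k,k-1}(t)=V^{k,k-1}(x^k(t),x^{k-1}(t))$ from Definition~\ref{Def: Storage_Function}. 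Adding the two bounds yields the claim.

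There is no real technical obstacle; the only care needed is consistency of the trajectories. The $x^0$ in the tracking hypothesis must be the same one that enters the storage functions. Also, if the controller is a feedback law on $x^0$, the chain input $u^0$ must be that feedback evaluated along $x^0$. Since Theorem~\ref{Thm_Control_Refine} holds for any admissible $u^0$, no further conditions are required. I would also note that combining this with Corollary~\ref{Cor_Control_Bounds} gives fully explicit bounds whenever the $\eta^{k,k-1}$ are linearly lower bounded, though that is not needed for the proof.
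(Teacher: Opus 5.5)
Your proposal is correct and follows the paper's proof exactly: a single triangle inequality through $h^0(x^0(t))$, the tracking hypothesis to bound that term by $\varepsilon(t)$, and Theorem~\ref{Thm_Control_Refine} to bound the endpoint mismatch. Your remark that the $x^0$ in the tracking hypothesis must be the same $\Delta^0$ trajectory that enters the storage functions is a useful clarification that the paper leaves implicit.
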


\textcolor{black}{The proof is given in the Appendix.}
Corollary~\ref{Cor_CL_Performance_Transfer} shows that a closed-loop performance guarantee established on a simpler model can be transferred to a more refined model through a chain of refinement relations. The resulting bound combines the original tracking error with an additional term that accounts for the mismatch accumulated along the chain. Thus, the proposed categorical framework yields an explicit overall performance guarantee for controller refinement over model hierarchies.

\section{A Functorial Projection to Control Systems Without Internal Ports}
\label{Sec_CTRLO}
In this section, we study control systems without internal ports and connect the behavior
of such systems to \final{Theorems~\ref{Thm_CTRL_Path} and~\ref{Thm_Control_Refine} and Corollaries~\ref{Cor_Control_Bounds} and~\ref{Cor_CL_Performance_Transfer}} in Section~\ref{Sec_CTRL}. We first define a category whose objects are control systems with only external inputs and outputs and whose morphisms are chains of refinement relations induced by simulation functions. We then define a projection that maps an open control system with both external and internal ports to the corresponding system without internal ports by setting the internal inputs to zero and removing the internal output map. 
Then we show that 
this projection induces a functor from $\mathsf{OCS}_{\mathrm{chain}}$ to the corresponding category of control systems without internal ports.

\subsection{A Category of Control Systems Without Internal Ports}
\begin{definition}[Control system without internal ports]
\label{Def_Con_Systm_without_int}
A control system $\bar{\Delta}$ is a tuple
\begin{align*}
 \bar{\Delta} = (\mathbb R^n, \mathbb R^m, \mathcal{U}, \bar f, \mathbb R^{q}, h),
\end{align*}
where $\mathbb R^n$ is the state space, $\mathbb R^m$ is the external input space, and $\mathbb R^{q}$ 
is the external output space, and
\begin{itemize}
\item $\mathcal{U}$ is a \final{non-empty} subset of the set of all measurable functions of time from open intervals
in $\mathbb R$ to $\mathbb R^m$,
\item $\bar f: \mathbb R^n \times \mathbb R^m \rightarrow \mathbb R^n$ is a continuous map satisfying
the following Lipschitz assumption: for every compact set $D\subseteq \mathbb R^n$, there exists a
constant $\tau\in \mathbb R_{>0}$ such that
\begin{align*}
\|\bar f(x^1,u)-\bar f(x^2,u)\|\leq \tau \|x^1-x^2\|
\end{align*}
for all $x^1, x^2\in D$ and all $u\in \mathbb R^m$,
\item $h: \mathbb R^n\rightarrow \mathbb R^{q}$ is the  \final{continuous} external output map.
\end{itemize}
\hfill $\lozenge$
\end{definition}

As introduced in~\cite[Definition 3.2]{zamani2017compositional}, similar to storage functions for
open control systems with internal ports, one can define simulation functions for control systems
without internal ports.
\begin{definition}[{Simulation function; \cite[Definition 3.2]{zamani2017compositional}}]
\label{Def: Simulation_Function}
Consider two control systems without internal ports of the form
\begin{align*}
\bar{\Delta}^v=(\mathbb R^{n^v},\mathbb R^{m^v},\mathcal U^v,\bar f^v,\mathbb R^{q^v},h^v),
\qquad v\in\{i,j\},
\end{align*}
with dynamics
\begin{align*}
\bar{\Delta}^v:\qquad
\begin{cases}
\dot x^v(t)=\bar f^v(x^v(t),u^v(t)) \\
y^v(t)=h^v(x^v(t)) 
\end{cases}
\qquad v\in\{i,j\}.
\end{align*}
We say that $\bar{\Delta}^i$ is an abstraction of $\bar{\Delta}^j$ if there exists a continuously
differentiable simulation function
\begin{align*}
\bar V^{j,i}: \mathbb R^{n^j}\times \mathbb R^{n^i} \rightarrow \mathbb R_{\geq0}
\end{align*}
and functions $\lambda^{j,i}, \eta^{j,i}\in \mathcal{K}_{\infty}$ and
$\rho^{j,i}\in \mathcal{K}_{\infty}\cup \{0\}$ such that the following conditions hold: 
\begin{enumerate}
\item \textbf{Output domination:}
for all $x^i\in \mathbb R^{n^i}$ and $x^j \in \mathbb R^{n^j}$,
\begin{align}
\label{Eq:SimFunctionObs}
\lambda^{j,i}(\|h^{j}(x^j)- h^{i}(x^i)\|)\leq \bar {V}^{j,i}(x^j, x^i).
\end{align}

\item \textbf{Interface and dissipation inequality:}
for every external input $u^i\in \mathcal U^i$, there exists an external input
$u^{j\ast}\in \mathcal U^j$ such that
\begin{equation} \label{Eq_Grad_Composite}
\nabla \bar V^{j,i}(x^j, x^i)^\top
    \begin{bmatrix}
        \bar f^j(x^j,u^{j\ast})\\
        \bar f^i(x^i,u^i)
    \end{bmatrix} 
\leq -\eta^{j,i} (\bar V^{j,i}(x^j, x^i)) +\rho^{j,i} (\|u^i\|).
\end{equation}
\end{enumerate}
If such a simulation function exists, we write $\bar{\Delta}^i \preceq \bar{\Delta}^j$ and say that
$\bar{\Delta}^i$ is an abstraction of $\bar{\Delta}^j$.
\hfill $\lozenge$
\end{definition}

Storage functions in Definition~\ref{Def: Storage_Function} are defined for open control systems with internal and external ports, whereas simulation functions in Definition~\ref{Def: Simulation_Function} are defined for control systems with only external ports. We next define a category for this latter class of systems in which morphisms are given by chains of refinement relations induced by simulation functions, which parallels the construction of $\mathsf{OCS}_{\mathrm{chain}}$ in Section~\ref{Sec_CTRL}.

Let $\mathsf{Sys}_O$ denote the collection of all control systems without internal ports in the sense of 
Definition~\ref{Def_Con_Systm_without_int}.

\begin{definition}[Elementary simulation morphism]
\label{Def_Ele_Sim_Morph}
Given $\bar{\Delta}^i,\bar{\Delta}^j\in\mathsf{Sys}_O$, an elementary simulation morphism
\begin{align*}
\bar{\alpha}^{j,i}:=(\bar V^{j,i},\bar{\mathcal I}^{j,i}):\bar{\Delta}^i\to\bar{\Delta}^j
\end{align*}
consists of a simulation function $\bar V^{j,i}$ from $\bar{\Delta}^i$ to $\bar{\Delta}^j$ in 
the sense of Definition~\ref{Def: Simulation_Function}, together with the associated interface selection
rule $\bar{\mathcal I}^{j,i}$ as defined in Remark~\ref{remark_Interface}. 
\hfill $\lozenge$
\end{definition}

An elementary simulation morphism represents a single-step refinement relation between two systems without internal ports, which is characterized by a simulation function. Since such relations may exist only between systems that are close in fidelity or complexity, we again represent multi-step refinement through chains of elementary simulation morphisms.

\begin{definition}[Simulation chain morphism]
\label{Def_chain_morphism_O}
Given $\bar{\Delta}^0,\bar{\Delta}^r\in\mathsf{Sys}_O$, a simulation chain morphism
\begin{align*}
\bar{\gamma}:\bar{\Delta}^0\to\bar{\Delta}^r
\end{align*}
is, for some $r\in\mathbb N_0$, a finite sequence of the form 
\begin{align*}
\bar{\gamma}=(\bar{\alpha}^{1,0},\bar{\alpha}^{2,1},\ldots,\bar{\alpha}^{r,r-1}),
\end{align*}
where
\begin{align*}
\bar{\alpha}^{k+1,k}:\bar{\Delta}^k\to\bar{\Delta}^{k+1}, \qquad k=0,\ldots,r-1,
\end{align*}
so that each consecutive pair 
of functions in~$\bar{\gamma}$ 
is composable. When $r=1$, the chain consists of a single
elementary simulation morphism, and when $r=0$, it is the empty chain.
\hfill $\lozenge$
\end{definition}
A simulation chain morphism represents a multi-step \final{refinement} relation
between two control systems without internal ports. This notion provides the morphisms needed to define a
category whose composition is given by concatenation of compatible chains.

\begin{theorem}[Category: $\mathsf{CS}_{\mathrm{chain}}$]
\label{Def:CTRL-O-path}
Let~$\mathsf{Sys}_O$ denote the collection of all open control systems
in~$\Delta_0$ in Definition~\ref{Def_Con_Systm_without_int}. 
Define~$\mathsf{CS}_{\mathrm{chain}}$ as follows: 

\emph{Objects:}
\begin{align*}
\mathrm{Ob}(\mathsf{CS}_{\mathrm{chain}}):=\mathsf{Sys}_O.
\end{align*}

\emph{Morphisms:}
for any $\bar{\Delta}^0,\bar{\Delta}^r\in\mathsf{Sys}_O$, the morphisms from
$\bar{\Delta}^0$ to $\bar{\Delta}^r$ are the simulation chain morphisms in
Definition~\ref{Def_chain_morphism_O}, including the empty chain when
$\bar{\Delta}^0=\bar{\Delta}^r$. In particular, a chain of length one is simply an
elementary simulation morphism.

\emph{Identity:}
for each $\bar{\Delta}\in\mathsf{Sys}_O$, define
$\mathrm{id}_{\bar{\Delta}}:=()$, which is 
the empty chain.

\emph{Composition:}
given $\bar{\gamma}:\bar{\Delta}^0\to\bar{\Delta}^r$ and
$\bar{\gamma}':\bar{\Delta}^r\to\bar{\Delta}^s$, define $\bar{\gamma}'\circ\bar{\gamma}$ to be the concatenation of the two chains.

Then $\mathsf{CS}_{\mathrm{chain}}$ is a category.
\hfill $\lozenge$
\end{theorem}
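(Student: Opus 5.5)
The plan is to verify the axioms of Definition~\ref{Def:Category} directly for the data of $\mathsf{CS}_{\mathrm{chain}}$. The argument parallels that of Theorem~\ref{Thm_CTRL_Path}. Alternatively, as in Remark~\ref{remark_free_path}, one can observe that $\mathsf{CS}_{\mathrm{chain}}$ is the path category of the directed multigraph whose vertices are the systems in $\mathsf{Sys}_O$ and whose edges are the elementary simulation morphisms of Definition~\ref{Def_Ele_Sim_Morph}. Nothing about the analytic content of simulation functions (Definition~\ref{Def: Simulation_Function}) will be used. The only facts needed are that chains are finite sequences and that composition is concatenation.

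\textbf{Well-definedness.} First I will check that the data are well defined. For each object $\bar{\Delta}$, the empty chain $()$ lies in the hom-set from $\bar{\Delta}$ to $\bar{\Delta}$, by Definition~\ref{Def_chain_morphism_O} with $r=0$. Next, take $\bar{\gamma}=(\bar{\alpha}^{1,0},\ldots,\bar{\alpha}^{r,r-1}):\bar{\Delta}^0\to\bar{\Delta}^r$ and $\bar{\gamma}'=(\bar{\alpha}^{r+1,r},\ldots,\bar{\alpha}^{s,s-1}):\bar{\Delta}^r\to\bar{\Delta}^s$. Their concatenation is a finite sequence of elementary simulation morphisms. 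Consecutive pairs inside each chain are composable by hypothesis. At the junction, the codomain of $\bar{\alpha}^{r,r-1}$ and the domain of $\bar{\alpha}^{r+1,r}$ are both $\bar{\Delta}^r$. Hence $\bar{\gamma}'\circ\bar{\gamma}$ is a simulation chain morphism $\bar{\Delta}^0\to\bar{\Delta}^s$. When one of the chains is empty, the junction condition is vacuous and the result is simply the other chain, with the correct endpoints.

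\textbf{Axioms.} The identity axiom follows because concatenating with the empty sequence returns the original sequence, so $\bar{\gamma}\circ()=\bar{\gamma}=()\circ\bar{\gamma}$. Associativity follows from associativity of concatenation of finite sequences. For $\bar{\gamma},\bar{\gamma}',\bar{\gamma}''$ composable, both bracketings of the composite yield the same sequence: the entries of $\bar{\gamma}$, then those of $\bar{\gamma}'$, then those of $\bar{\gamma}''$, in order.

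\textbf{Main obstacle.} The only subtle point is bookkeeping rather than mathematics. A chain must remember its endpoints, so that the empty chain at $\bar{\Delta}$ is distinguished from the empty chain at a different object. I will handle this by regarding each morphism formally as a triple consisting of its source, its target and the sequence. Composition is defined only when the target of the first matches the source of the second. With this convention the hom-sets are genuinely indexed by pairs of objects, as Definition~\ref{Def:Category} requires, and the verification above goes through. Whether each hom-collection is a set (size issues) I will treat as in Theorem~\ref{Thm_CTRL_Path}, with no new argument.
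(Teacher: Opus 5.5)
Your proposal is correct and follows the paper's route. The paper omits this proof as analogous to that of Theorem~\ref{Thm_CTRL_Path}, which is exactly the direct check you describe: the empty chain as identity, concatenation as composition with the junction check at the shared object, and unitality and associativity inherited from concatenation of finite sequences. Your convention of tagging each chain with its source and target, so that empty chains at distinct objects are distinct, is a harmless refinement that the paper leaves implicit.
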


The proof is analogous to that of Theorem~\ref{Thm_CTRL_Path} and is therefore omitted. 
Theorem~\ref{Thm_CTRL_Path} defines the category $\mathsf{OCS}_{\mathrm{chain}}$ of systems with internal ports,
and 
Theorem~\ref{Def:CTRL-O-path} defines the category 
$\mathsf{CS}_{\mathrm{chain}}$ 
for control systems without internal ports. Introducing these two categories makes it possible to study \final{refinement} relations and their compositions in a unified way across both settings. Rather than analyzing $\mathsf{CS}_{\mathrm{chain}}$ independently, the next section shows that it can be studied through the structure of $\mathsf{OCS}_{\mathrm{chain}}$ via a functorial projection. As a result, controller refinement results developed for open control systems can be transferred automatically and systematically to systems without internal ports.
\subsection{A Functorial Projection From Open Control Systems}
We next define a projection from open control systems with internal ports to control systems without internal ports. 
\begin{definition}[Object map $\Phi$]
\label{Def:PhiObject}
Define
\begin{align*}
\Phi:\mathsf{Sys}\to\mathsf{Sys}_O
\end{align*}
as follows. For any open control system
\begin{align*}
\Delta=(\mathbb R^n,\mathbb R^m,\mathbb R^p,\mathcal U,\mathcal W,f,\mathbb R^q,\mathbb R^s,h,y)
\in\mathsf{Sys},
\end{align*}
set
\begin{align*}
\Phi(\Delta):=\bar{\Delta}:=(\mathbb R^n,\mathbb R^m,\mathcal U,\bar f,\mathbb R^q,h)\in\mathsf{Sys}_O,
\end{align*}
where
\begin{align*}
\bar f(x,u):=f(x,u,0) \ \text{for all} \  (x,u)\in\mathbb R^n\times\mathbb R^m.
\end{align*}
\hfill $\lozenge$
\end{definition}
At the object level, the map $\Phi$ removes the internal ports by fixing the internal input to zero and discarding the internal output map. We next show that an elementary morphism between two open control systems can be carried to an elementary morphism between the corresponding projected systems without internal ports. The key connection is that, once the internal port terms are removed, the storage function inequalities reduce to the simulation function inequalities for the projected systems.

\begin{lemma}[Projection of an elementary morphism]
\label{Lem:PhiElementary}
Consider the category $\mathsf{OCS}_{\mathrm{chain}}$ in Theorem~\ref{Thm_CTRL_Path}. 
Let
\begin{equation} \label{eq:alphajidef}
\alpha^{j,i}:=(V^{j,i},\mathcal I^{j,i}):\Delta^i\to\Delta^j
\end{equation}
be an elementary morphism in $\mathsf{OCS}_{\mathrm{chain}}$, where for $v\in\{i,j\}$,
\begin{align*}
\Delta^v=(\mathbb R^{n^v},\mathbb R^{m^v},\mathbb R^{p^v},\mathcal U^v,\mathcal W^v,f^v,
\mathbb R^{q^v},\mathbb R^{s^v},h^v,y^v).
\end{align*}
Let
$\bar{\Delta}^i:=\Phi(\Delta^i)$ and $\bar{\Delta}^j:=\Phi(\Delta^j)$
be the corresponding projected control systems without internal ports. Then the function $V^{j,i}$
in~\eqref{eq:alphajidef}
satisfies the conditions
for being a simulation function 
in Definition~\ref{Def: Simulation_Function} for the pair
$(\bar{\Delta}^i,\bar{\Delta}^j)$. Consequently,
\begin{equation} \label{eq:lambdajidef}
\bar{\alpha}^{j,i}:=(V^{j,i},\mathcal I^{j,i}):\bar{\Delta}^i\to\bar{\Delta}^j
\end{equation}
is an elementary simulation morphism in $\mathsf{CS}_{\mathrm{chain}}$. 
\hfill $\lozenge$
\end{lemma}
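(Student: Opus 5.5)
The plan is to check the two conditions of Definition~\ref{Def: Simulation_Function} for $V^{j,i}$ with respect to $\bar{\Delta}^i=\Phi(\Delta^i)$ and $\bar{\Delta}^j=\Phi(\Delta^j)$, reusing the same comparison functions $\lambda^{j,i},\eta^{j,i},\rho^{j,i}$ supplied by Definition~\ref{Def: Storage_Function}. First, Definition~\ref{Def:PhiObject} leaves the state space, the external input space and set $\mathcal U^v$, and the external output map $h^v$ unchanged. Hence the output domination condition~\eqref{Eq:SimFunctionObs} is literally the same inequality as~\eqref{Eq:Simu_Obs}. It therefore holds for all $(x^j,x^i)$ with no further work. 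We also note that $q^i=q^j$ carries over, and that $V^{j,i}$ is continuously differentiable and nonnegative by assumption.

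For the dissipation inequality, fix $u^i\in\mathcal U^i$ and take $u^{j\ast}=\mathcal I^{j,i}(x^j,x^i,u^i)$ as in Remark~\ref{remark_Interface}. Inequality~\eqref{Eq:Simu_Grad} holds for all internal inputs $w^i\in\mathcal W^i$ and $w^j\in\mathcal W^j$. By Definition~\ref{Def_Con_Systm}, each $\mathcal W^v$ contains the constant zero signal, so we may specialize to $w^i=0$ and $w^j=0$. The left-hand side then becomes $\nabla V^{j,i}(x^j,x^i)^\top[\bar f^j(x^j,u^{j\ast})^\top,\ \bar f^i(x^i,u^i)^\top]^\top$, because $\bar f^v(x,u)=f^v(x,u,0)$.

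On the right-hand side, \eqref{eq_zeta} gives $\zeta^{j,i}=[0^\top,\ (y^j(x^j)-H^{j,i}y^i(x^i))^\top]^\top$. The quadratic term therefore reduces to $(y^j-H^{j,i}y^i)^\top[X^{j,i}]_{22}(y^j-H^{j,i}y^i)$. This is $\le 0$ because $[X^{j,i}]_{22}\preceq 0$. Dropping it yields exactly~\eqref{Eq_Grad_Composite}. The same interface map serves as $\bar{\mathcal I}^{j,i}$, so $(V^{j,i},\mathcal I^{j,i})$ is an elementary simulation morphism in the sense of Definition~\ref{Def_Ele_Sim_Morph}.

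The only delicate step is this handling of the coupling term. One must check that the definition quantifies over \emph{all} internal inputs, including zero, and that the $y$-mismatch block need not vanish. It is controlled only through the sign condition $[X^{j,i}]_{22}\preceq 0$, not by any condition on $y$. A minor point is that $\mathcal I^{j,i}$ may depend on $(x^j,x^i,u^i)$ only, not on $w$. This is consistent with Remark~\ref{remark_Interface}, since $u^{j\ast}$ is chosen before quantifying over $w^i,w^j$.
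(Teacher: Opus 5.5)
Your proposal is correct and follows the paper's proof essentially step for step. Output domination carries over because $\Phi$ leaves $h^v$ unchanged. The dissipation inequality follows by choosing $u^{j\ast}=\mathcal I^{j,i}(x^j,x^i,u^i)$, specializing to $w^i=w^j=0$, and discarding the term $(y^j-H^{j,i}y^i)^\top[X^{j,i}]_{22}(y^j-H^{j,i}y^i)\le 0$. Your explicit appeal to the zero signal lying in $\mathcal W^v$ (Definition~\ref{Def_Con_Systm}) spells out a step the paper uses only implicitly.
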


\textcolor{black}{The proof is given in the Appendix.}
Lemma~\ref{Lem:PhiElementary} shows that an \final{elementary morphism} in $\mathsf{OCS}_{\mathrm{chain}}$ \final{induces an elementary simulation morphism} between the corresponding projected control systems without internal ports
in~$\mathsf{CS}_{\mathrm{chain}}$. 
We next extend this projection from elementary morphisms to chains of morphisms, which in turn leads to the projection at the category level.
\final{Consider an elementary morphism
\begin{align*}
\alpha^{j,i}=(V^{j,i},\mathcal I^{j,i}):\Delta^i\to\Delta^j
\end{align*}
in $\mathsf{OCS}_{\mathrm{chain}}$ as defined in Definition~\ref{Def_Ele_stor_morph}. 
Lemma~\ref{Lem:PhiElementary} shows that $V^{j,i}$ is a simulation
function from $\Phi(\Delta^i)$ to $\Phi(\Delta^j)$.}

\begin{definition}[Morphism map $\Phi$]
\label{Def:PhiMorphism}
We  define
\begin{align*}
\Phi(\alpha^{j,i}):=\bar{\alpha}^{j,i}:=(V^{j,i},\mathcal I^{j,i}),
\end{align*}
viewed now as an elementary simulation morphism
\begin{align*}
\bar{\alpha}^{j,i}:\Phi(\Delta^i)\to\Phi(\Delta^j)
\end{align*}
in $\mathsf{CS}_{\mathrm{chain}}$.

We extend $\Phi$ to chain morphisms by acting on each elementary morphism
within a chain morphism. 
In particular, if
\begin{align*}
\gamma=(\alpha^{1,0},\alpha^{2,1},\ldots,\alpha^{r,r-1})
\end{align*}
is a chain morphism in $\mathsf{OCS}_{\mathrm{chain}}$ as defined in Definition~\ref{Def_chain_morphism}, then 
\begin{align*}
\Phi(\gamma):=
(\Phi(\alpha^{1,0}),\Phi(\alpha^{2,1}),\ldots,\Phi(\alpha^{r,r-1})).
\end{align*}
\hfill $\lozenge$
\end{definition}
With $\Phi$ now defined on both objects and morphisms, it remains to verify that it preserves identities and composition. The following theorem shows that $\Phi$ indeed does so and 
therefore it defines a functor from $\mathsf{OCS}_{\mathrm{chain}}$ to $\mathsf{CS}_{\mathrm{chain}}$.

\begin{theorem}[Functorial projection]
\label{Thm:PhiFunctor}
Let $\Phi : \mathsf{OCS}_{\mathrm{chain}} \to \mathsf{CS}_{\mathrm{chain}}$ 
be defined on objects as in Definition~\ref{Def:PhiObject} and on morphisms as in
Definition~\ref{Def:PhiMorphism}. Then $\Phi$ defines a functor. 
In particular, for every object $\Delta\in\mathsf{Sys}$,
\begin{align*}
\Phi(\mathrm{id}_{\Delta})=\mathrm{id}_{\Phi(\Delta)},
\end{align*}
and for every pair of composable chain morphisms $\gamma,\gamma'$ in $\mathsf{OCS}_{\mathrm{chain}}$, it satisfies
\begin{align*}
\Phi(\gamma'\circ\gamma)=\Phi(\gamma')\circ\Phi(\gamma).
\end{align*}
\hfill $\lozenge$
\end{theorem}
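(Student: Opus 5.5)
The proof reduces to three checks: $\Phi$ is well defined on objects and morphisms, it preserves identities, and it preserves composition.

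\emph{Well-definedness.} First confirm that $\Phi(\Delta)\in\mathsf{Sys}_O$ for every $\Delta\in\mathsf{Sys}$.
\begin{itemize}
\item Continuity of $\bar f(x,u)=f(x,u,0)$ follows from continuity of $f$.
\item The local Lipschitz condition of Definition~\ref{Def_Con_Systm_without_int} is the one in Definition~\ref{Def_Con_Systm}, specialized to $w=0$.
\item The input set $\mathcal U$ and the output map $h$ are carried over unchanged, so they have the required properties.
\end{itemize}
Next consider a chain $\gamma=(\alpha^{1,0},\ldots,\alpha^{r,r-1})$ from $\Delta^0$ to $\Delta^r$. Lemma~\ref{Lem:PhiElementary} shows that each $\Phi(\alpha^{k,k-1})$ is an elementary simulation morphism from $\Phi(\Delta^{k-1})$ to $\Phi(\Delta^k)$. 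Since $\Phi$ acts on objects, consecutive entries of $\Phi(\gamma)$ remain composable: the codomain of $\Phi(\alpha^{k,k-1})$ and the domain of $\Phi(\alpha^{k+1,k})$ are both $\Phi(\Delta^k)$. Hence $\Phi(\gamma)$ is a simulation chain morphism $\Phi(\Delta^0)\to\Phi(\Delta^r)$ in the sense of Definition~\ref{Def_chain_morphism_O}.

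\emph{Identities.} The identity $\mathrm{id}_\Delta$ is the empty chain $()$. Applying $\Phi$ entrywise to an empty sequence gives the empty sequence, now regarded as a morphism $\Phi(\Delta)\to\Phi(\Delta)$. By Theorem~\ref{Def:CTRL-O-path}, this empty chain is exactly $\mathrm{id}_{\Phi(\Delta)}$.

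\emph{Composition.} Take $\gamma=(\alpha^{1,0},\ldots,\alpha^{r,r-1})$ and $\gamma'=(\alpha^{r+1,r},\ldots,\alpha^{s,s-1})$. Composition in $\mathsf{OCS}_{\mathrm{chain}}$ is concatenation, so
\begin{align*}
\Phi(\gamma'\circ\gamma)&=\Phi\bigl((\alpha^{1,0},\ldots,\alpha^{s,s-1})\bigr)=(\Phi(\alpha^{1,0}),\ldots,\Phi(\alpha^{s,s-1})).
\end{align*}
This last sequence is the concatenation of $\Phi(\gamma)$ and $\Phi(\gamma')$, which by the definition of composition in $\mathsf{CS}_{\mathrm{chain}}$ is $\Phi(\gamma')\circ\Phi(\gamma)$. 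The step is immediate because entrywise application of a map commutes with concatenation of finite sequences. The cases where $\gamma$ or $\gamma'$ is empty are covered by the same argument, or alternatively by the identity step.

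\emph{Main obstacle.} The only nontrivial content is the well-definedness of $\Phi$ on elementary morphisms, and this is exactly Lemma~\ref{Lem:PhiElementary}, which we take as given. That lemma works by setting $w^i=w^j=0$ in \eqref{Eq:Simu_Grad}. The quadratic term then reduces to $(y^j-H^{j,i}y^i)^\top[X^{j,i}]_{22}(y^j-H^{j,i}y^i)\le 0$ because $[X^{j,i}]_{22}\preceq 0$, and dropping it yields \eqref{Eq_Grad_Composite}.

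With well-definedness in hand, the functor axioms are purely combinatorial facts about sequences. This mirrors Remark~\ref{remark_free_path}: $\Phi$ is induced by a map of directed multigraphs, and any such map extends uniquely to a functor between the corresponding path categories.
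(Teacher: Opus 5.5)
Your proposal is correct and is essentially the paper's argument: both rest entirely on Lemma~\ref{Lem:PhiElementary} to send each elementary morphism to an elementary simulation morphism, and both then extend entrywise to chains. The only difference is packaging. The paper gets preservation of identities and composition in one step from the universal property of the free (path) category in Remark~\ref{remark_free_path}, whereas you verify both directly (entrywise application commutes with concatenation, which is exactly what that universal property encodes), and you also check explicitly that $\Phi(\Delta)\in\mathsf{Sys}_O$, which the paper leaves implicit.
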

The functor $\Phi$ provides a systematic way to project open control systems with internal and external ports to control systems without internal ports while preserving \final{chains of refinement relations.} 
This \final{property} shows that the chain-based refinement framework developed for open control systems is compatible with the simpler setting involving 
systems with 
only external inputs and outputs. In particular, a chain morphism between interconnected models can be carried to a corresponding chain morphism between their projected systems without changing the chain structure.

Consequently, the refinement results of Section~\ref{Sec_CTRL} remain available after projection. Once a chain morphism in $\mathsf{OCS}_{\mathrm{chain}}$ is mapped to a chain
morphism in $\mathsf{CS}_{\mathrm{chain}}$, the same analysis can be used to
relate the projected endpoint systems. The following corollaries formalize this consequence.

\begin{corollary}[Projected controller performance transfer]
\label{Cor:PhiTransfer_1}
Let $\gamma:\Delta^0\to\Delta^r$ be a chain morphism in $\mathsf{OCS}_{\mathrm{chain}}$ of the form of~\eqref{eq:chain}, 
and let
$\Phi(\gamma):\Phi(\Delta^0)\to\Phi(\Delta^r)$ be its image under the functor~$\Phi$
from Theorem~\ref{Thm:PhiFunctor}. 
Then the controller refinement and performance transfer results in
Theorem~\ref{Thm_Control_Refine} and Corollary~\ref{Cor_CL_Performance_Transfer} hold for the projected chain $\Phi(\gamma)$ in
$\mathsf{CS}_{\mathrm{chain}}$, with the functions along the chain viewed as simulation functions for the projected systems. \hfill $\lozenge$
\end{corollary}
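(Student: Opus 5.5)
The plan is to rerun the arguments for Theorem~\ref{Thm_Control_Refine} and Corollary~\ref{Cor_CL_Performance_Transfer} on the projected chain. By Theorem~\ref{Thm:PhiFunctor} and Definition~\ref{Def:PhiMorphism},
\[
\Phi(\gamma)=(\bar\alpha^{1,0},\ldots,\bar\alpha^{r,r-1}),\qquad \bar\alpha^{k,k-1}=(V^{k,k-1},\mathcal I^{k,k-1}).
\]
This is a chain morphism in $\mathsf{CS}_{\mathrm{chain}}$ from $\Phi(\Delta^0)$ to $\Phi(\Delta^r)$. By Lemma~\ref{Lem:PhiElementary}, each $V^{k,k-1}$ is a simulation function for the pair $(\Phi(\Delta^{k-1}),\Phi(\Delta^k))$. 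Moreover, the functions $\lambda^{k,k-1}$, $\eta^{k,k-1}$ and $\rho^{k,k-1}$ carry over unchanged. For $\lambda^{k,k-1}$ this is immediate, because output domination~\eqref{Eq:Simu_Obs} involves only the states and the external output maps $h^k$, and $\Phi$ does not alter these.

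Next I would generate the projected trajectories exactly as in~\eqref{eq_control_chain}. Given $u^0$ for $\Phi(\Delta^0)$, set $u^k=\mathcal I^{k,k-1}(x^k,x^{k-1},u^{k-1})$, and let $x^k$ solve $\dot x^k=\bar f^k(x^k,u^k)=f^k(x^k,u^k,0)$. These are precisely trajectories of the original systems with internal inputs $w^k\equiv 0$, which is admissible because each $\mathcal W^k$ contains the zero signal.

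The endpoint bound then follows by the same telescoping argument as in Theorem~\ref{Thm_Control_Refine}. Write
\[
h^r(x^r)-h^0(x^0)=\sum_{k=1}^r\bigl(h^k(x^k)-h^{k-1}(x^{k-1})\bigr),
\]
apply the triangle inequality, and use~\eqref{Eq:SimFunctionObs} in the form $\|h^k(x^k)-h^{k-1}(x^{k-1})\|\le(\lambda^{k,k-1})^{-1}(V^{k,k-1}(x^k,x^{k-1}))$. The inversion is legitimate since each $\lambda^{k,k-1}$ is strictly increasing on $\mathbb R_{\ge0}$ and onto. Adding $\|h^0(x^0(t))-\xi(t)\|\le\varepsilon(t)$ with one more triangle inequality gives the projected version of Corollary~\ref{Cor_CL_Performance_Transfer}.

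The main point needing care is that the interfaces remain valid after projection. That is, $u^{k\ast}=\mathcal I^{k,k-1}(x^k,x^{k-1},u^{k-1})$ must satisfy~\eqref{Eq_Grad_Composite} for the projected dynamics, so that $\Phi(\gamma)$ really consists of elementary simulation morphisms in the sense of Definition~\ref{Def_Ele_Sim_Morph}. I would verify this by substituting $w^k=w^{k-1}=0$ into~\eqref{Eq:Simu_Grad}. Then $\zeta^{k,k-1}=[0;\,y^k(x^k)-H^{k,k-1}y^{k-1}(x^{k-1})]$, and $\zeta^\top X\zeta=(y^k-H^{k,k-1}y^{k-1})^\top[X^{k,k-1}]_{22}(y^k-H^{k,k-1}y^{k-1})\le0$ because $[X^{k,k-1}]_{22}\preceq0$. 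This is the content of Lemma~\ref{Lem:PhiElementary}, which I would cite rather than reprove. Hence nothing in the two arguments depends on the internal ports, and the results transfer verbatim to $\Phi(\gamma)$.
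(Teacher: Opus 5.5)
Your proposal is correct and follows essentially the same route as the paper. You use Lemma~\ref{Lem:PhiElementary} and Theorem~\ref{Thm:PhiFunctor} to see that $\Phi(\gamma)$ is a chain of elementary simulation morphisms with the same $V^{k,k-1}$ and $\lambda^{k,k-1}$, then rerun the triangle-inequality and $(\lambda^{k,k-1})^{-1}$ arguments of Theorem~\ref{Thm_Control_Refine} and Corollary~\ref{Cor_CL_Performance_Transfer}, which rely only on these stepwise properties. You simply spell out the telescoping step and the admissibility of $w^k\equiv 0$ in more detail than the paper's brief appeal to ``the same arguments.''
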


\textcolor{black}{The proof is given in the Appendix.}
\begin{corollary}[Projected explicit refinement bound]
\label{Cor:PhiExplicit_2}
Let
\begin{align*}
\gamma:\Delta^0\to\Delta^r
\end{align*}
be a chain morphism in $\mathsf{OCS}_{\mathrm{chain}}$, and let
\begin{align*}
\Phi(\gamma):\Phi(\Delta^0)\to\Phi(\Delta^r)
\end{align*}
be its image in $\mathsf{CS}_{\mathrm{chain}}$. For each $k\in\underline r$, let 
$\bar{\alpha}^{k, k-1} = (V^{k,k-1}, \mathcal{I}^{k,k-1})$ 
denote the $k^{th}$ elementary simulation morphism in $\Phi(\gamma)$, and 
for~$t \geq 0$ write
\begin{align*}
V^{k,k-1}(t):=V^{k,k-1}(x^k(t),x^{k-1}(t)).
\end{align*}
For each~$k \in \underline{r}$, consider the function~$\eta^{k,k-1}$ from~\eqref{Eq:Simu_Grad}. 
If there exists a constant
$c^{k,k-1}>0$ such that, for all $s\ge 0$,
\begin{align*}
\eta^{k,k-1}(s)\ge c^{k,k-1}s, 
\end{align*}
then, for all $t\ge 0$, we have 
\begin{equation}
V^{k,k-1}(t)
\le e^{-c^{k,k-1}t}V^{k,k-1}(0)  +
\int_0^t e^{-c^{k,k-1}(t-\tau)}
\rho^{k,k-1}\big(\|u^{k-1}(\tau)\|\big)\,d\tau.
\end{equation}
\hfill $\lozenge$
\end{corollary}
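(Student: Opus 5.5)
The plan is to reduce Corollary~\ref{Cor:PhiExplicit_2} to a scalar comparison argument applied to each elementary simulation morphism of $\Phi(\gamma)$ separately. By Lemma~\ref{Lem:PhiElementary} and Definition~\ref{Def:PhiMorphism}, each $\bar\alpha^{k,k-1}=(V^{k,k-1},\mathcal I^{k,k-1})$ is an elementary simulation morphism between $\Phi(\Delta^{k-1})$ and $\Phi(\Delta^k)$, with the same $\lambda^{k,k-1},\eta^{k,k-1},\rho^{k,k-1}$ as in Definition~\ref{Def: Storage_Function}. For the projected systems there are no internal inputs; equivalently, the internal inputs are fixed to zero. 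Fix $k\in\underline r$ and consider the trajectories $x^{k-1},x^k$ of $\Phi(\Delta^{k-1}),\Phi(\Delta^k)$, where $u^k$ is generated from $u^{k-1}$ by $\mathcal I^{k,k-1}$ as in~\eqref{eq_control_chain}.

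\textbf{Step 1 (dissipation inequality in time).} Since $V^{k,k-1}$ is $C^1$ and the trajectories are locally absolutely continuous, the map $t\mapsto V^{k,k-1}(t)$ is locally absolutely continuous. For almost all $t$, the chain rule gives $\dot V^{k,k-1}(t)=\nabla V^{k,k-1}(x^k,x^{k-1})^\top[\bar f^k(x^k,u^k);\bar f^{k-1}(x^{k-1},u^{k-1})]$. Applying~\eqref{Eq_Grad_Composite} with $u^{k\ast}=u^k$ then yields
\[
\dot V^{k,k-1}(t)\le -\eta^{k,k-1}(V^{k,k-1}(t))+\rho^{k,k-1}(\|u^{k-1}(t)\|).
\]
The same inequality also follows from~\eqref{Eq:Simu_Grad} with $w^k=w^{k-1}=0$: in that case $\zeta^{k,k-1}=[0;\,y^k-H^{k,k-1}y^{k-1}]$, and $[X^{k,k-1}]_{22}\preceq0$ makes the quadratic term nonpositive. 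This is exactly the content of Lemma~\ref{Lem:PhiElementary}.

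\textbf{Step 2 (linear comparison).} Using the hypothesis $\eta^{k,k-1}(s)\ge c^{k,k-1}s$ and $V^{k,k-1}\ge0$, we obtain $\dot V^{k,k-1}\le -c^{k,k-1}V^{k,k-1}+\rho^{k,k-1}(\|u^{k-1}\|)$ almost everywhere. Multiply by the integrating factor $e^{c^{k,k-1}t}$. The function $e^{c^{k,k-1}t}V^{k,k-1}(t)$ is absolutely continuous, and its derivative is bounded above by $e^{c^{k,k-1}t}\rho^{k,k-1}(\|u^{k-1}(t)\|)$ almost everywhere. Integrating from $0$ to $t$ and multiplying by $e^{-c^{k,k-1}t}$ gives the claimed bound. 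Alternatively, this is Corollary~\ref{Cor_Control_Bounds} specialized to the projected chain, which is legitimate by Corollary~\ref{Cor:PhiTransfer_1}, with the $\zeta$-term dropped.

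\textbf{Main obstacle.} The only delicate point is regularity. We need absolute continuity of $V^{k,k-1}(t)$ so that integrating the almost-everywhere derivative inequality is valid. We also need measurability and local integrability of $\rho^{k,k-1}(\|u^{k-1}(\cdot)\|)$, which holds because $u^{k-1}$ is measurable, $\rho^{k,k-1}$ is continuous, and the integral is interpreted as possibly $+\infty$ otherwise. In addition, the interface outputs $u^k$ must be admissible inputs, which is guaranteed by Remark~\ref{remark_Interface}. Beyond these checks, the argument is the standard Grönwall-type comparison.
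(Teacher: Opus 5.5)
Your proposal is correct and follows essentially the paper's route: Lemma~\ref{Lem:PhiElementary} supplies the projected dissipation inequality with the internal-port term removed (because $w=0$ and $[X^{k,k-1}]_{22}\preceq 0$), and the integrating-factor argument of Corollary~\ref{Cor_Control_Bounds} then gives the bound, which is what the paper indicates before omitting the details. Your regularity remarks spell out what the paper leaves implicit; the only slip is citing Corollary~\ref{Cor:PhiTransfer_1} for your alternative route, since that corollary covers Theorem~\ref{Thm_Control_Refine} and Corollary~\ref{Cor_CL_Performance_Transfer}, not Corollary~\ref{Cor_Control_Bounds}, though applying Corollary~\ref{Cor_Control_Bounds} directly with $w\equiv 0$ (admissible because $\mathcal W$ contains the zero signal) works anyway.
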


For each $k\in\underline r$, the function $V^{k,k-1}$ satisfies the corresponding dissipation inequality for the projected systems, without the terms from the internal ports. The stated bound then follows by the same argument as in Corollary~\ref{Cor_Control_Bounds}, with the terms from the internal ports removed after projection, and the proof details are therefore omitted.

Corollaries~\ref{Cor:PhiTransfer_1} and~\ref{Cor:PhiExplicit_2} show that the controller refinement and performance transfer results for open control systems with internal ports carry over to the corresponding projected systems without internal ports. In this way, the functor $\Phi$ allows the same chain of refinement relations to be studied in both settings. 

This section introduced the category $\mathsf{CS}_{\mathrm{chain}}$ for control systems without internal ports, with morphisms given by chains of refinement relations induced by simulation functions. It then defined a functorial projection from $\mathsf{OCS}_{\mathrm{chain}}$ to $\mathsf{CS}_{\mathrm{chain}}$ by removing the internal ports and showed that this projection preserves the chain structure. Consequently, the controller refinement framework developed in Section~\ref{Sec_CTRL} remains available in the setting 
of systems without internal ports as well.


\section{\final{From Subsystem Storage Functions to Morphisms Between Composite Systems}}\label{Sec:NetworkLifting}

In this section, we study how storage functions between open control subsystems
can be combined to produce simulation functions between the corresponding composite control
systems. Since a composite control system in Definition~\ref{Def_CompositeSystem} has only
external inputs and outputs, the resulting relation 
between the composite systems 
is naturally expressed as a simulation function
in $\mathsf{CS}_{\mathrm{chain}}$. This construction provides a scalable route for controller refinement,
since relations between large network models can be obtained by combining subsystem-level
storage functions rather than constructing a simulation function directly at the network level.
After such relations are established between successive network models in a hierarchy,
the controller refinement and performance transfer results from Theorem~\ref{Thm_Control_Refine} and
Corollary~\ref{Cor_CL_Performance_Transfer} can be applied across that hierarchy.

In this section, we use superscripts to denote different models of the same physical process across
a model hierarchy, and we use subscripts to denote different physical processes that are composed to
form a networked system, \textcolor{black}{as shown in Figure~\ref{fig_comp_ex_2}}. 
Thus, for each fixed subscript, the superscript labels different models
in the hierarchy of the same open control subsystem, while for each fixed superscript, the
subscript labels the open control subsystems that form the corresponding composite control
system at the level of the hierarchy indexed by the superscript. 
\begin{figure}
    \centering
    \includegraphics[width=1\linewidth]{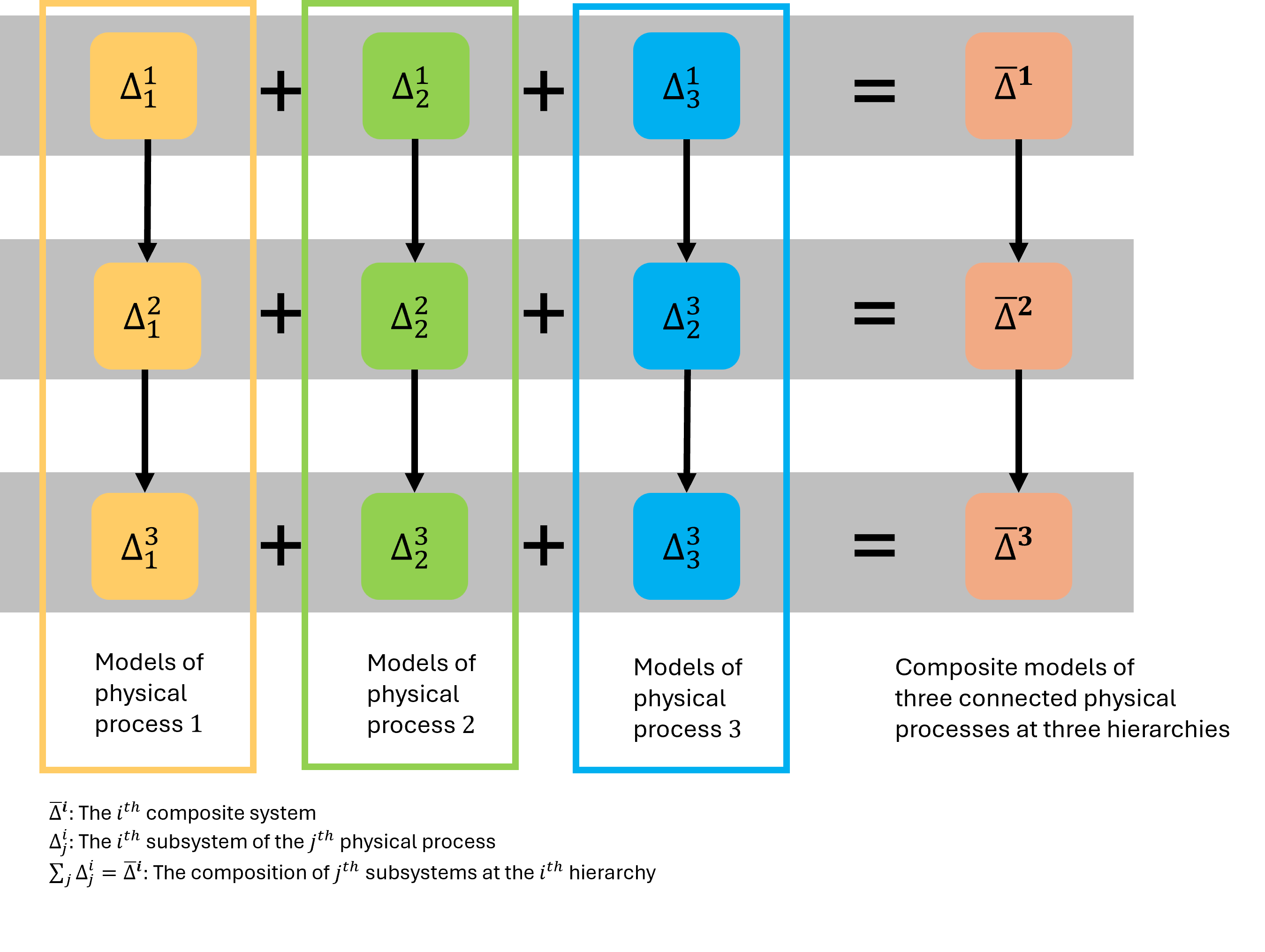}
    \caption{\textcolor{black}{Example of three composite systems. A composite physical process is composed of three physical processes. The first three columns represent models of these three physical processes. 
     Each physical process is represented by three subsystem models at three different levels of the hierarchy. For example, the first column contains the models $\Delta^1_1\rightarrow\Delta^2_1\rightarrow\Delta^3_1$ for physical process~1. Each row forms a composite system by composing the subsystem models of the three physical processes at the $i^{th}$ level of the hierarchy, where $i\in\underline{3}$. Thus, the three rows yield the three composite systems shown in the last column.}}
    \label{fig_comp_ex_2}
\end{figure}
\subsection{Composite Control Systems from Collections of Open Control Subsystems}\label{Sec:CompositeSetup}
In order to distinguish composite control systems 
from the open control systems that are composed to construct
them, we refer to the latter as open control subsystems. We consider
$L\in\mathbb N_{>0}$ open control subsystems at each of $N \in \mathbb N_{>0}$ levels in the model
hierarchy. For each $o\in\underline{L}$ and $i\in\underline{N}$, let
\begin{align}
\label{Eq:subsystem_group_notation}
\Delta_o^i =
(\mathbb R^{n_o^i},\mathbb R^{m_o^i},\mathbb R^{p_o^i},\mathcal U_o^i,\mathcal W_o^i,f_o^i,
\mathbb R^{q_o^i},\mathbb R^{s_o^i},h_o^i,y_o^i)
\end{align}
denote the $o^{th}$ open control subsystem at the $i^{th}$ level of the model hierarchy.
Each such open control subsystem satisfies 
Definition~\ref{Def_Con_Systm}. For each fixed $i\in\underline{N}$, the collection
$\{\Delta_o^i\}_{o=1}^L$ represents the open control subsystems that are interconnected to form
the composite control system at the $i^{th}$ level of the model hierarchy.

For each fixed $i$, we compose the open control subsystems $\{\Delta_o^i\}_{o=1}^L$ using the
composition rule in Definition~\ref{Def_CompositeSystem}. We denote the resulting $i^{th}$
composite control system by
\begin{align*}
\bar{\boldsymbol{\Delta}}^i=
(\mathbb R^{\mathbf{n}^i},\mathbb R^{\mathbf{m}^i},\boldsymbol{\mathcal U}^i,
\boldsymbol f^i,\mathbb R^{\mathbf{q}^i},\boldsymbol h^i),    
\end{align*}
where
\begin{align*}
\mathbf{n}^i=\sum_{o=1}^L n_o^i,\qquad
\mathbf{m}^i=\sum_{o=1}^L m_o^i,\qquad
\mathbf{q}^i=\sum_{o=1}^L q_o^i.    
\end{align*}

Its state and external input are given by the stacked vectors
\begin{align*}
\boldsymbol x^i=[(x_1^i)^\top,\ldots,(x_L^i)^\top]^\top,\qquad
\boldsymbol u^i=[(u_1^i)^\top,\ldots,(u_L^i)^\top]^\top,    
\end{align*}
and its dynamics and external output map are induced by the subsystem dynamics and the
corresponding composition matrix, as in Definition~\ref{Def_CompositeSystem}. Following the
notation in Section~\ref{Sec_Background_Composite_Control_System}, we also use
\begin{align*}
\bar{\boldsymbol{\Delta}}^i:
\begin{cases}
\dot {\mathbf{x}}^i(t)=\boldsymbol{f}^i\big(\mathbf{x}^i(t),\mathbf{u}^i(t))\\
\boldsymbol{h}^i(t)=\boldsymbol{h}^i\big(\mathbf{x}^i(t)\big)
    \end{cases} 
\end{align*}
to denote the composite control system at the $i^{th}$ \final{level of the hierarchy,} for all $i\in \underline N$.

\final{For each fixed $i$, the internal inputs and outputs of the open control subsystems
$\{\Delta_o^i\}_{o=1}^L$ are connected through the interconnection rule in
Definition~\ref{Def_CompositeSystem}. After this composition, 
the internal ports are no longer free variables of the overall model. As a result,
$\bar{\boldsymbol{\Delta}}^i$ is a control system with only external inputs and outputs, and it 
therefore belongs to $\mathsf{Sys}_O$.} 

Based on this formulation, we can compare two composite control systems
$\bar{\boldsymbol{\Delta}}^i$ and $\bar{\boldsymbol{\Delta}}^j$ in $\mathsf{Sys}_O$
through their corresponding
pairs of open control subsystems $\Delta_o^i$ and $\Delta_o^j$ for all $o\in\underline{L}$ in $\mathsf{Sys}$. We next show
that, under suitable compatibility conditions, the storage functions between corresponding subsystems can
be combined into a simulation function between the corresponding composite systems.

\subsection{\final{From Subsystem Storage Functions to A Composite System Simulation Function}}
\label{Sec:LiftThm}

\final{We first recall a compositional result from
\cite[Theorem~4.2]{zamani2017compositional}.
That theorem gives conditions under which storage functions between corresponding open control
subsystems can be combined into a simulation function between the corresponding composite
control systems.}

\begin{proposition}[{\cite[Theorem~4.2]{zamani2017compositional}}]
\label{Thm:CompositionalSimulation}
\final{Consider two composite control systems
$\bar{\boldsymbol{\Delta}}^i$ and $\bar{\boldsymbol{\Delta}}^j$ obtained by composing the collections of open control subsystems
$\{\Delta_o^i\}_{o=1}^L$ and $\{\Delta_o^j\}_{o=1}^L$, respectively, 
according to Definition~\ref{Def_CompositeSystem}. Let $M^i$ and $M^j$ denote their
corresponding composition matrices.}
\final{Suppose that, for every $o\in\underline L$, there exists a storage function
\begin{align*}
V_o^{j,i}:
\mathbb R^{n_o^j}\times\mathbb R^{n_o^i}
\rightarrow\mathbb R_{\geq 0}
\end{align*}
from $\Delta_o^i$ to $\Delta_o^j$ in the sense of
Definition~\ref{Def: Storage_Function}.}
Let~$W_o^{j,i}$, $W_o^{i,j}$, $H_o^{j,i}$ be the matrices from Definition~\ref{Def: Storage_Function} associated with~$V^{j,i}_o$. 

\final{Suppose that there exist weights $\mu_o^{j,i}\geq 0$ for $o\in\underline L$ such that
\begin{align}
\label{Eq:Matrix_Condition_1}
(\Theta^j)^\top X^{j,i}\Theta^j\preceq 0
\end{align}
and
\begin{align}
\label{Eq:Matrix_Condition_2}
W^{i,j}M^jH^{j,i}=W^{j,i}M^i,
\end{align}
where
\begin{align*}
\Theta^j
:=
\begin{bmatrix}
W^{i,j}M^j\\
I
\end{bmatrix},
\end{align*}
\begin{multline}
W^{j,i}
:=
\operatorname{diag}
\bigl(W_1^{j,i},\ldots,W_L^{j,i}\bigr), \quad 
W^{i,j}
:=
\operatorname{diag}
\bigl(W_1^{i,j},\ldots,W_L^{i,j}\bigr), \\ \quad
H^{j,i}
:=
\operatorname{diag}
\bigl(H_1^{j,i},\ldots,H_L^{j,i}\bigr),
\end{multline}
and
\begin{align*}
X^{j,i}
:=
\begin{bmatrix}
X_{11}^{j,i} & X_{12}^{j,i}\\
X_{21}^{j,i} & X_{22}^{j,i}
\end{bmatrix},
\end{align*}
with
\begin{align*}
X_{\ell m}^{j,i}
:=
\operatorname{diag}\bigl(
\mu_1^{j,i}[X_1^{j,i}]_{\ell m},
\ldots,
\mu_L^{j,i}[X_L^{j,i}]_{\ell m}
\bigr),
\qquad \ell,m\in\{1,2\}.
\end{align*}
}

\final{Then
\begin{align}
\label{Eq_Simulation_Func_Weighted_Sum}
\boldsymbol{V}^{j,i}(\boldsymbol{x}^j,\boldsymbol{x}^i)
:=
\sum_{o=1}^{L}
\mu_o^{j,i}V_o^{j,i}(x_o^j,x_o^i)
\end{align}
is a simulation function from
$\bar{\boldsymbol{\Delta}}^i$ to
$\bar{\boldsymbol{\Delta}}^j$ in the sense of
Definition~\ref{Def: Simulation_Function}.
\hfill $\lozenge$}
\end{proposition}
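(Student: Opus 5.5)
We verify the two conditions of Definition~\ref{Def: Simulation_Function} for the weighted sum $\boldsymbol V^{j,i}$ in~\eqref{Eq_Simulation_Func_Weighted_Sum}, with the subsystem interfaces run in parallel. We assume at least one weight $\mu_o^{j,i}$ is positive; otherwise $\boldsymbol V^{j,i}\equiv 0$ and output domination cannot hold. \textbf{Output domination.} For each $o$ we have $\lambda_o^{j,i}(\|h_o^j-h_o^i\|)\le V_o^{j,i}$. Using $\|\boldsymbol h^j-\boldsymbol h^i\|\le \sum_o\|h_o^j-h_o^i\|$, we set
\begin{align*}
\boldsymbol\lambda^{j,i}(s):=\min_{\vec s\ge 0,\ \sum_o s_o=s}\ \sum_o \mu_o^{j,i}\lambda_o^{j,i}(s_o),
\end{align*}
or more simply a $\mathcal K_\infty$ lower bound of the form $\underline\mu\,\min_o\lambda_o^{j,i}(s/L)$, where $\underline\mu=\min_o\mu_o^{j,i}$. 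This function is of class $\mathcal K_\infty$ and gives $\boldsymbol\lambda^{j,i}(\|\boldsymbol h^j-\boldsymbol h^i\|)\le \boldsymbol V^{j,i}$, because some component satisfies $\|h_o^j-h_o^i\|\ge \|\boldsymbol h^j-\boldsymbol h^i\|/L$. The simple form needs all $\mu_o^{j,i}>0$; with some zero weights we would use the minimization form, or assume positivity as in~\cite{zamani2017compositional}.

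\textbf{Interface and dissipation.} Given $\boldsymbol u^i=(u_1^i,\dots,u_L^i)$, define $\boldsymbol u^{j\ast}$ componentwise by $u_o^{j\ast}=\mathcal I_o^{j,i}(x_o^j,x_o^i,u_o^i)$. The subsystem inequality~\eqref{Eq:Simu_Grad} holds for all internal inputs, in particular for $w^j=M^j\mathbf y^j$ and $w^i=M^i\mathbf y^i$. Summing with weights gives
\begin{multline*}
\nabla\boldsymbol V^{\top}[\boldsymbol f^j;\boldsymbol f^i]\le \sum_o\mu_o\bigl(-\eta_o(V_o)+\rho_o(\|u_o^i\|)\bigr)
\\ +\begin{bmatrix}W^{i,j}M^j\mathbf y^j-W^{j,i}M^i\mathbf y^i\\ \mathbf y^j-H^{j,i}\mathbf y^i\end{bmatrix}^{\!\top}X^{j,i}\begin{bmatrix}\cdots\end{bmatrix},
\end{multline*}
where the block-diagonal weighted $X^{j,i}$ lets us write the sum of the quadratic forms as a single quadratic form in the stacked vector $\zeta$.

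\textbf{Key step.} By~\eqref{Eq:Matrix_Condition_2}, $W^{j,i}M^i\mathbf y^i=W^{i,j}M^jH^{j,i}\mathbf y^i$, so the stacked $\zeta$ equals $\Theta^j(\mathbf y^j-H^{j,i}\mathbf y^i)$. Condition~\eqref{Eq:Matrix_Condition_1} then makes the coupling term nonpositive, and we drop it.

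\textbf{Remaining bounds.} Two steps are left.
\begin{itemize}
\item Dissipation: find $\boldsymbol\eta\in\mathcal K_\infty$ with $\sum_o\mu_o\eta_o(V_o)\ge\boldsymbol\eta(\sum_o\mu_oV_o)$. Take $\boldsymbol\eta(s)=\min\{\sum_o\mu_o\eta_o(s_o):\sum_o\mu_os_o=s\}$, which is continuous, positive definite and unbounded; if needed, replace it by a $\mathcal K_\infty$ lower bound, as in~\cite{zamani2017compositional}.
\item Input gain: $\sum_o\mu_o\rho_o(\|u_o^i\|)\le \boldsymbol\rho(\|\boldsymbol u^i\|)$ with $\boldsymbol\rho(s)=\sum_o\mu_o\rho_o(s)$, since $\|u_o^i\|\le\|\boldsymbol u^i\|$ and each $\rho_o$ is nondecreasing.
\end{itemize}
I expect the main obstacle to be this $\mathcal K_\infty$ aggregation of the $\eta_o$ and $\lambda_o$ functions, which requires a careful minimization argument. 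By contrast, the matrix step is a direct substitution.
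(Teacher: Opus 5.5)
The paper gives no proof of this proposition; it quotes \cite[Theorem~4.2]{zamani2017compositional}, and your argument follows the standard proof of that result. That proof runs the subsystem interfaces in parallel, sums the weighted dissipation inequalities with $\mathbf w^v=M^v\mathbf y^v$, uses \eqref{Eq:Matrix_Condition_2} to rewrite the stacked $\zeta$ as $\Theta^j(\mathbf y^j-H^{j,i}\mathbf y^i)$ so that \eqref{Eq:Matrix_Condition_1} makes the coupling term nonpositive, and then aggregates the comparison functions. Those steps are correct. Your choices $\boldsymbol\rho=\sum_o\mu_o^{j,i}\rho_o^{j,i}$ and $\underline\mu\min_o\lambda_o^{j,i}(s/L)$ are valid. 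Your $\boldsymbol\eta$ (with $s_o\ge 0$ imposed) is already of class $\mathcal K_\infty$ by rescaling a minimizer, so the aggregation you flag as the main obstacle is routine.

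The step that fails is your treatment of the weights. Assuming only that some $\mu_o^{j,i}$ is positive is not enough, and the minimization form does not rescue zero weights. If $\mu_{o_0}^{j,i}=0$, putting all of $s$ on $s_{o_0}$ gives objective value $0$, so your $\boldsymbol\lambda^{j,i}\equiv 0\notin\mathcal K_\infty$. This is not an artifact of your construction. When $\mu_{o_0}^{j,i}=0$, $\boldsymbol V^{j,i}$ does not depend on $(x_{o_0}^j,x_{o_0}^i)$. Fix all other subsystem states and move $x_{o_0}^j$ so that $\|h_{o_0}^j(x_{o_0}^j)-h_{o_0}^i(x_{o_0}^i)\|\to\infty$, which is possible for instance with nonzero linear outputs such as the room-air temperatures in the thermal example. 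Then $\boldsymbol V^{j,i}$ stays constant while $\|\boldsymbol h^j-\boldsymbol h^i\|$ is unbounded, so \eqref{Eq:SimFunctionObs} fails for every $\boldsymbol\lambda\in\mathcal K_\infty$. Hence the hypothesis $\mu_o^{j,i}\ge 0$ in the statement is too weak. It must be strengthened to $\mu_o^{j,i}>0$ for every $o\in\underline L$, which is the assumption under which the cited result holds. State that explicitly and delete the zero-weight remark; with that change your proof is complete.
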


\final{For all $o\in\underline L$, \eqref{Eq:Matrix_Condition_2} 
ensures that
the  transformations at the subsystem level
associated with the storage functions are compatible with the composition matrices
$M^i$ and $M^j$. Meanwhile, \eqref{Eq:Matrix_Condition_1} ensures that, after the
subsystem dissipation inequalities are weighted and summed, the contribution of the
internal interconnections is nonpositive. The coefficients $\mu_o^{j,i}$ determine how
the subsystem storage functions are combined in
\eqref{Eq_Simulation_Func_Weighted_Sum}.}

\begin{remark}
Suppose we have two collections of open control subsystems that are composed to form two composite systems. 
 To distinguish functions associated with individual subsystems from those
associated with the composite systems, we use 
$V_o^{j,i}$ to denote a storage function between the $o^{th}$ pair of 
open control subsystems, whereas $\boldsymbol V^{j,i}$ denotes the simulation
function between the corresponding composite control systems. The function
$\boldsymbol V^{j,i}$ is a simulation function in the sense of
Definition~\ref{Def: Simulation_Function}, and the boldface notation indicates that it is constructed for
the composite systems.
\hfill $\lozenge$
\end{remark}

\final{
\begin{proposition}[Category-theoretic form of
Proposition~\ref{Thm:CompositionalSimulation}]
\label{Prop:Composite_Control}
Under the assumptions of Proposition~\ref{Thm:CompositionalSimulation}, suppose that the simulation
function
$\boldsymbol{V}^{j,i}$ is accompanied by an interface selection rule
$\boldsymbol{\mathcal I}^{j,i}$ satisfying Definition~\ref{Def_Ele_Sim_Morph}. 
Then,
\begin{align*}
\bar{\boldsymbol{\alpha}}^{j,i}
:=
\bigl(\boldsymbol{V}^{j,i},\boldsymbol{\mathcal I}^{j,i}\bigr):
\bar{\boldsymbol{\Delta}}^i
\longrightarrow
\bar{\boldsymbol{\Delta}}^j
\end{align*}
defines an elementary simulation morphism in
$\mathsf{CS}_{\mathrm{chain}}$. \hfill $\lozenge$
\end{proposition}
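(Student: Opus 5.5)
The plan is to unpack Definition~\ref{Def_Ele_Sim_Morph} and check each of its two ingredients. The first ingredient is a simulation function from $\bar{\boldsymbol{\Delta}}^i$ to $\bar{\boldsymbol{\Delta}}^j$ in the sense of Definition~\ref{Def: Simulation_Function}. The second is an associated interface selection rule as in Remark~\ref{remark_Interface}.

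\textbf{Objects.} As argued in Section~\ref{Sec:CompositeSetup}, each $\bar{\boldsymbol{\Delta}}^v$, $v\in\{i,j\}$, is a control system with only external ports. Its vector field $\boldsymbol f^v(\mathbf x,\mathbf u)$ is obtained by substituting $\mathbf w=M^v\mathbf y^v(\mathbf x)$ into the stacked subsystem dynamics. I would briefly note that this substitution preserves continuity. For the local Lipschitz property I would argue from the subsystem Lipschitz assumptions in Definition~\ref{Def_Con_Systm}, or simply take it as the standing assumption made when the paper asserts $\bar{\boldsymbol{\Delta}}^i,\bar{\boldsymbol{\Delta}}^j\in\mathsf{Sys}_O$. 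So both are objects of $\mathsf{CS}_{\mathrm{chain}}$ by Theorem~\ref{Def:CTRL-O-path}. I would also check that the output dimensions agree, i.e., $\mathbf q^i=\mathbf q^j$, because each pair $\Delta_o^i,\Delta_o^j$ has equal external output dimension by Definition~\ref{Def: Storage_Function}.

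\textbf{Simulation function.} Proposition~\ref{Thm:CompositionalSimulation} states directly that $\boldsymbol V^{j,i}=\sum_o\mu_o^{j,i}V_o^{j,i}$ is a simulation function from $\bar{\boldsymbol{\Delta}}^i$ to $\bar{\boldsymbol{\Delta}}^j$. This supplies the required $\lambda,\eta\in\mathcal K_\infty$ and $\rho\in\mathcal K_\infty\cup\{0\}$, the output domination \eqref{Eq:SimFunctionObs}, and the existence part of \eqref{Eq_Grad_Composite}. It is also continuously differentiable, since it is a weighted sum of $C^1$ functions.

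\textbf{Interface.} By hypothesis, $\boldsymbol{\mathcal I}^{j,i}$ is a measurable map $(\mathbf x^j,\mathbf x^i,\mathbf u^i)\mapsto\mathbf u^{j\ast}$ for which \eqref{Eq_Grad_Composite} holds. This is exactly the interface requirement in Definition~\ref{Def_Ele_Sim_Morph}. If one wants to construct it rather than assume it, the natural candidate applies the subsystem interfaces $\mathcal I_o^{j,i}$ componentwise. The subtle point, and the main potential obstacle, is that the subsystem interfaces are guaranteed to work for all internal inputs $w_o^i,w_o^j$. In particular they work for the interconnected values $\mathbf w^v=M^v\mathbf y^v$, which is precisely how Proposition~\ref{Thm:CompositionalSimulation} closes the argument using \eqref{Eq:Matrix_Condition_1}--\eqref{Eq:Matrix_Condition_2}. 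Since the statement assumes the interface, this point only needs a remark.

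\textbf{Conclusion.} The pair $(\boldsymbol V^{j,i},\boldsymbol{\mathcal I}^{j,i})$ therefore meets Definition~\ref{Def_Ele_Sim_Morph}. It is then an elementary simulation morphism, which is a length-one chain and hence a morphism in $\mathsf{CS}_{\mathrm{chain}}$.
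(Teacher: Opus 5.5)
Your proposal is correct and matches the paper's proof, which likewise takes the simulation-function property of $\boldsymbol V^{j,i}$ from Proposition~\ref{Thm:CompositionalSimulation}, pairs it with the hypothesized interface $\boldsymbol{\mathcal I}^{j,i}$, and concludes directly from Definition~\ref{Def_Ele_Sim_Morph}. One caution on your optional object-level check: local Lipschitzness of $\boldsymbol f^v$ does not follow from the subsystem assumptions in general, because $y_o$ is only assumed continuous and $f_o$ is only continuous in $w_o$. Your fallback, taking $\bar{\boldsymbol\Delta}^v\in\mathsf{Sys}_O$ as given as the paper does in Section~\ref{Sec:CompositeSetup}, is therefore the one to use.
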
}
The proof is given in the Appendix. 
\final{We now connect the construction of composite control systems with the
chain-based refinement framework developed in Sections~\ref{Sec_CTRL} and~\ref{Sec_CTRLO}.
For each pair of adjacent levels in a network model hierarchy, the compositional result in
Proposition~\ref{Thm:CompositionalSimulation} combines storage functions between corresponding
open control subsystems into a simulation function between the corresponding composite control
systems. Proposition~\ref{Prop:Composite_Control} then identifies this resulting simulation function,
together with its associated composite interface, with an elementary simulation morphism in
$\mathsf{CS}_{\mathrm{chain}}$.}

\final{Repeating this construction for every pair of adjacent model levels in the hierarchy produces a sequence of
elementary simulation morphisms. Since the target composite system of each morphism is the
source composite system of the next one, these morphisms can be concatenated into a chain
morphism across the network model hierarchy. The controller refinement and performance
transfer results in Theorem~\ref{Thm_Control_Refine} and
Corollary~\ref{Cor_CL_Performance_Transfer} can therefore be applied to the resulting chain.}

\begin{corollary}[Refinement across network model hierarchies]
\label{Cor:NetworkHierarchyRefinement}
Fir~$r \in \mathbb{N}_{> 0}$. 
For each $k \in \underline{r}$ and each $o\in\underline L$, let
\begin{align*}
\Delta_o^k
=
(\mathbb R^{n_o^k},\mathbb R^{m_o^k},\mathbb R^{p_o^k},
\mathcal U_o^k,\mathcal W_o^k,f_o^k,
\mathbb R^{q_o^k},\mathbb R^{s_o^k},h_o^k,y_o^k)
\end{align*}
denote the $o^{th}$ open control subsystem at the $k^{th}$ level of the model hierarchy.
Let $\bar{\boldsymbol{\Delta}}^k$ denote the corresponding composite control system obtained
by interconnecting the collection
$\{\Delta_o^k\}_{o=1}^L$
according to Definition~\ref{Def_CompositeSystem}.

\final{Suppose that, for each $k \in \underline{r} \backslash \{1\}$, 
the two subsystem collections
\begin{align*}
\{\Delta_o^{k-1}\}_{o=1}^L
\qquad\text{and}\qquad
\{\Delta_o^k\}_{o=1}^L
\end{align*}
satisfy the assumptions of Proposition~\ref{Thm:CompositionalSimulation}. 
For each~$k \in \underline{r}$, 
let~$\boldsymbol{V}^{k,k-1}$ denote the network-level simulation function
from~$\{\Delta_{o}^{k-1}\}_{o=1}^{L}$ to~$\{\Delta_o^k\}_{o=1}^{L}$, which is
accompanied by an interface selection rule
$\boldsymbol{\mathcal I}^{k,k-1}$ satisfying Definition~\ref{Def_Ele_Sim_Morph}.
Then the resulting
network-level simulation function and composite interface define an elementary simulation
morphism
\begin{align*}
\bar{\boldsymbol{\alpha}}^{k,k-1}
=
\bigl(
\boldsymbol V^{k,k-1},
\boldsymbol{\mathcal I}^{k,k-1}
\bigr):
\bar{\boldsymbol{\Delta}}^{k-1}
\longrightarrow
\bar{\boldsymbol{\Delta}}^k
\end{align*}
in $\mathsf{CS}_{\mathrm{chain}}$ for every $k \in \underline{r} \backslash \{1\}$.}

\final{Consequently, based on Definition~\ref{Def_chain_morphism_O}, these elementary simulation morphisms form the chain morphism}
\begin{align*}
\bar{\boldsymbol{\gamma}}
=
\bigl(
\bar{\boldsymbol{\alpha}}^{1,0},
\bar{\boldsymbol{\alpha}}^{2,1},
\ldots,
\bar{\boldsymbol{\alpha}}^{r,r-1}
\bigr):
\bar{\boldsymbol{\Delta}}^0
\longrightarrow
\bar{\boldsymbol{\Delta}}^r
\end{align*}
in $\mathsf{CS}_{\mathrm{chain}}$. Therefore, the controller refinement and performance transfer
conclusions of Theorem~\ref{Thm_Control_Refine} and
Corollary~\ref{Cor_CL_Performance_Transfer} hold for this chain of composite control systems. \hfill $\lozenge$
\end{corollary}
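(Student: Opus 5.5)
The plan is to treat Corollary~\ref{Cor:NetworkHierarchyRefinement} as a direct assembly of Proposition~\ref{Thm:CompositionalSimulation}, Proposition~\ref{Prop:Composite_Control}, Theorem~\ref{Def:CTRL-O-path}, and the chain results of Section~\ref{Sec_CTRL}. The argument has three steps: build the elementary simulation morphisms level by level, concatenate them, and transfer the bounds.

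\textbf{Step 1: elementary morphisms at each level.} Fix a pair of adjacent levels $(k-1,k)$ covered by the hypotheses. Proposition~\ref{Thm:CompositionalSimulation}, applied with $i=k-1$ and $j=k$, shows that the weighted sum $\boldsymbol V^{k,k-1}=\sum_{o=1}^L\mu_o^{k,k-1}V_o^{k,k-1}$ is a simulation function from $\bar{\boldsymbol{\Delta}}^{k-1}$ to $\bar{\boldsymbol{\Delta}}^k$ in the sense of Definition~\ref{Def: Simulation_Function}. The composite systems lie in $\mathsf{Sys}_O$, as observed in Section~\ref{Sec:CompositeSetup}. By hypothesis, $\boldsymbol V^{k,k-1}$ comes with an interface $\boldsymbol{\mathcal I}^{k,k-1}$ satisfying Definition~\ref{Def_Ele_Sim_Morph}. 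Proposition~\ref{Prop:Composite_Control} then yields the elementary simulation morphism $\bar{\boldsymbol{\alpha}}^{k,k-1}:\bar{\boldsymbol{\Delta}}^{k-1}\to\bar{\boldsymbol{\Delta}}^k$ in $\mathsf{CS}_{\mathrm{chain}}$.

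\textbf{Step 2: forming the chain.} The codomain of $\bar{\boldsymbol{\alpha}}^{k,k-1}$ is $\bar{\boldsymbol{\Delta}}^k$, which is the domain of $\bar{\boldsymbol{\alpha}}^{k+1,k}$. So the tuple $(\bar{\boldsymbol{\alpha}}^{1,0},\ldots,\bar{\boldsymbol{\alpha}}^{r,r-1})$ meets the composability requirement of Definition~\ref{Def_chain_morphism_O} and is a simulation chain morphism $\bar{\boldsymbol{\Delta}}^0\to\bar{\boldsymbol{\Delta}}^r$. Equivalently, it is the iterated composite $\bar{\boldsymbol{\alpha}}^{r,r-1}\circ\cdots\circ\bar{\boldsymbol{\alpha}}^{1,0}$ in the category of Theorem~\ref{Def:CTRL-O-path}. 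Composition there is concatenation and is associative, so the bracketing does not matter.

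\textbf{Step 3: transferring the conclusions.} Theorem~\ref{Thm_Control_Refine} and Corollary~\ref{Cor_CL_Performance_Transfer} are stated for chains in $\mathsf{OCS}_{\mathrm{chain}}$, so I will rerun their arguments for simulation chains rather than quote them. Their proofs use only two ingredients:
\begin{itemize}
\item the output-domination inequality at each step, which gives $\|\boldsymbol h^k-\boldsymbol h^{k-1}\|\le(\lambda^{k,k-1})^{-1}(\boldsymbol V^{k,k-1})$ because $\lambda^{k,k-1}\in\mathcal K_\infty$ is invertible and increasing;
\item the triangle inequality, applied to the telescoping sum $\boldsymbol h^r-\boldsymbol h^0=\sum_{k=1}^r(\boldsymbol h^k-\boldsymbol h^{k-1})$.
\end{itemize}
Inequality~\eqref{Eq:SimFunctionObs} has exactly the same form as~\eqref{Eq:Simu_Obs}, so both ingredients hold for simulation functions. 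This gives the endpoint bound~\eqref{eq:endpoint_bound} for the composite chain. The tracking bound then follows by one more triangle inequality, inserting $\xi(t)$ through $\boldsymbol h^0$. This is the same reasoning recorded in Corollary~\ref{Cor:PhiTransfer_1}.

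The main obstacle is bookkeeping in the hypotheses rather than a substantive estimate. The statement assumes the conditions of Proposition~\ref{Thm:CompositionalSimulation} only for $k\in\underline r\setminus\{1\}$, yet the chain begins with $\bar{\boldsymbol{\alpha}}^{1,0}$. I will read the index range as covering all adjacent pairs $(k-1,k)$ for $k\in\underline r$, including the level-$0$ collection. Equivalently, I will note that the interface hypothesis already stated for every $k\in\underline r$, together with $\boldsymbol V^{1,0}$ being a simulation function, directly supplies $\bar{\boldsymbol{\alpha}}^{1,0}$. The second point to verify is that the inputs $\boldsymbol u^k=\boldsymbol{\mathcal I}^{k,k-1}(\boldsymbol x^k,\boldsymbol x^{k-1},\boldsymbol u^{k-1})$, generated recursively as in~\eqref{eq_control_chain}, are admissible for each composite system. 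This holds because Definition~\ref{Def: Simulation_Function} requires $u^{j\ast}\in\boldsymbol{\mathcal U}^j$.
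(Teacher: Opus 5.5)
Your proposal is correct and follows the paper's own proof. Like the paper, you combine the subsystem storage functions into $\boldsymbol V^{k,k-1}$ via Proposition~\ref{Thm:CompositionalSimulation}, promote each one to an elementary simulation morphism via Proposition~\ref{Prop:Composite_Control}, concatenate them, and apply the chain bounds; in addition, you rerun the pointwise output-domination and triangle-inequality argument for $\mathsf{CS}_{\mathrm{chain}}$ instead of quoting Theorem~\ref{Thm_Control_Refine}, which is stated for $\mathsf{OCS}_{\mathrm{chain}}$, and you repair the index range so that $\bar{\boldsymbol{\alpha}}^{1,0}$ is actually supplied, two points the paper's proof passes over (it constructs only $k=2,\ldots,r$ yet lists $\bar{\boldsymbol{\alpha}}^{1,0}$ in the chain).
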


The proof is given in the Appendix. \final{Corollary~\ref{Cor:NetworkHierarchyRefinement} shows how subsystem composition and model
refinement interact. At each fixed model level, the open control
subsystems are interconnected through the corresponding composition matrix to form a composite
control system. Across adjacent model levels in the hierarchy, subsystem-level storage functions and the
interconnection compatibility conditions in Proposition~\ref{Thm:CompositionalSimulation} produce a simulation function between the corresponding
composite systems. Proposition~\ref{Prop:Composite_Control} expresses each such relation 
between composite systems as an
elementary simulation morphism in $\mathsf{CS}_{\mathrm{chain}}$.}

\final{These elementary simulation morphisms can then be concatenated across the model hierarchy.
Thus, the interconnection of subsystems within one model level and the categorical composition
of refinement relations between successive models in the hierarchy are two distinct operations. The former constructs
each composite control system, while the latter constructs a chain morphism between composite
systems of different levels of refinement in the hierarchy.}

\final{This construction provides a scalable route for controller refinement of large network models.
Rather than constructing a simulation function directly between full network models at every
pair of adjacent levels, one may construct storage functions between corresponding subsystems
and verify the associated interconnection conditions. The resulting network-level simulation
functions can then be organized into a chain morphism, allowing the controller refinement and
performance transfer results of Section~\ref{Sec_CTRL} to be applied across the entire network
model hierarchy.}

\section{Illustrative Examples of Controller Refinement}\label{Sec_Example_Simulation}
In this section, we present two thermal network examples to illustrate
recursive controller refinement, the output error bound in Theorem~\ref{Thm_Control_Refine},
and the compositional construction in Section~\ref{Sec:NetworkLifting}. The first example
considers a hierarchy of control systems without internal ports. The
second considers interconnected open control subsystems 
with internal ports
and shows how 
functions 
for these subsystems are combined at the network level.



\subsection{Multizone Thermal Model Hierarchy}
\label{Sec_sim1}
We illustrate the endpoint estimate in Theorem~\ref{Thm_Control_Refine} using a hierarchy of linearized multi-zone thermal
models without internal ports. Accordingly, the models
considered in this example are objects of $\mathsf{CS}_{\mathrm{chain}}$. The purpose
of this example is to show how a controller designed on a simple thermal model
can be transferred recursively to increasingly refined models and how the
differences between outputs of adjacent systems can be quantified. For each
adjacent pair, we select a simulation function 
and an input rule and evaluate the
corresponding conditions in Definition~\ref{Def: Simulation_Function} using the states and inputs generated
in the simulation.

We consider a simplified linearized multi-zone thermal network of resistance-capacitance (RC) type, motivated by
standard state space thermal models used in building control
\cite{atam2016control,abtahi2022control,yang2018state}. In
such models, zone temperatures are treated as states, while heat exchange between
neighboring zones, ambient losses, and HVAC inputs are represented in reduced state space
form~\cite{atam2016control,abtahi2022control,yang2018state}. 
The most refined model
represents a building with $12$ thermal zones arranged on a $3\times 4$ grid. Let
$x^4\in\mathbb R^{12}$ denote the vector of zone temperatures in the full model. The lower-dimensional models are obtained by successively aggregating zones
into larger clusters, which yields the hierarchy of five models
\begin{align*}
\bar\Delta^0
\longrightarrow
\bar\Delta^1
\longrightarrow
\bar\Delta^2
\longrightarrow
\bar\Delta^3
\longrightarrow
\bar\Delta^4,
\end{align*}
where $\bar\Delta^0$ is the simplest model and $\bar\Delta^4$ is the most
refined model. The corresponding state dimensions are
\begin{align*}
n^0=1,\qquad n^1=3,\qquad n^2=4,\qquad n^3=6,\qquad n^4=12.
\end{align*}

At each level $i\in\{0,1,2,3,4\}$, the model is 
\begin{align*}
\dot x^i = A^i x^i + B^i u^i + d^i,
\end{align*}
where $x^i\in\mathbb R^{n^i}$ is the vector of aggregate zone temperatures,
$u^i\in\mathbb R^{n^i}$ is the HVAC input, and $d^i\in\mathbb R^{n^i}$ is a constant term
capturing ambient temperature and internal heat gains. The matrix
$A^i\in\mathbb R^{n^i\times n^i}$ models both heat exchange between adjacent zones and heat
loss to the ambient environment, while $B^i\in\mathbb R^{n^i\times n^i}$ describes direct
HVAC actuation. In simulation, the reduced models are obtained by projecting the full
$12$-zone thermal network onto lower-dimensional clustered models.

To compare models of different dimensions, we embed their outputs in the common comparison
space $\mathbb R^{12}$. For each $i\in\{0,1,2,3,4\}$, let
$\Pi^i\in\mathbb R^{12\times n^i}$ denote the lifting matrix from the state space of
$\bar{\Delta}^i$ to~$\mathbb{R}^{12}$. We define the output map by
$h^i(x^i)=\Pi^i x^i$.
Thus, the outputs of all models have the same dimension and can be compared directly.

Let
$r_{\mathrm{ref}}(t)
=
22+0.8\sin(0.15t)$
be the scalar temperature reference, and define
$\boldsymbol r(t)
=
r_{\mathrm{ref}}(t)\mathbf 1_{12}$
in the common output space, where~$\mathbf{1}_{12}$ denotes the vector of all ones in~$\mathbb{R}^{12}$. 
The inputs of the more refined models are generated recursively
through the selected input rules. 
\textcolor{black}{In this example, each input rule maps the scalar control input of the preceding model to an input vector for the next model by repeating the same scalar value in all entries of the vector.} 
Thus, the controller is designed on the simplest model and transferred
from one model to the next.

For each $i\in\{0,1,2,3\}$, define
\begin{align*}
V^{i+1,i}(x^{i+1},x^i)
=
\left\|
\Pi^{i+1}x^{i+1}-\Pi^i x^i
\right\|^2.
\end{align*}
Since $h^i(x^i)=\Pi^i x^i$, we have that
\begin{align}
\label{Eq_simu_fun_Example1}
V^{i+1,i}(x^{i+1},x^i)
=
\left\|
h^{i+1}(x^{i+1})-h^i(x^i)
\right\|^2.
\end{align}
Therefore, the first condition in Definition~\ref{Def: Simulation_Function} holds with
$\lambda^{i+1,i}(s)=s^2$. Applying the triangle inequality to the differences 
of adjacent outputs 
gives
\begin{equation} \label{eq:bigpibound}
\left\|
\Pi^4x^4-\Pi^0x^0
\right\|
\le
\sum_{i=0}^{3}
\sqrt{V^{i+1,i}(x^{i+1},x^i)}.
\end{equation}

For the selected parameters, we also evaluate the dissipation inequality in
Definition~\ref{Def: Simulation_Function} using the states and inputs generated in the simulation. The
computed residuals are nonpositive over the simulated interval, \textcolor{black}{as shown in Figure~\ref{fig_Example_1_Diss_Residual}.}  This numerical
result supports the selected functions and input rules for this example.
\begin{figure}[!h]
    \centering
    \includegraphics[width=1\linewidth]{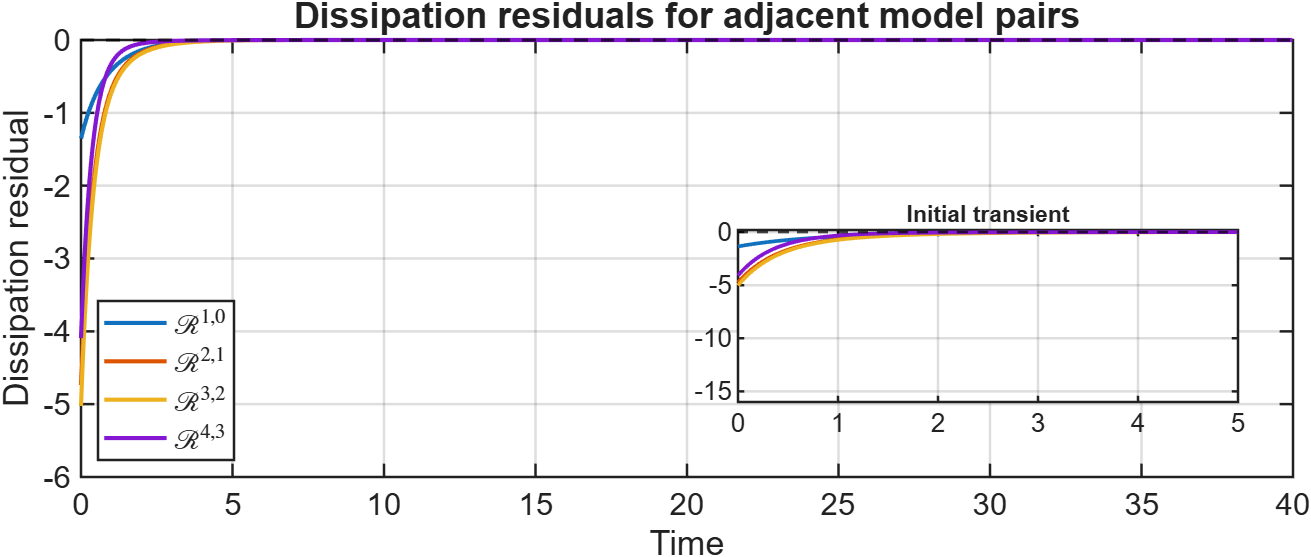}
    \caption{
   \textcolor{black}{Dissipation inequality residuals for the four adjacent model pairs in the model hierarchy. All residuals remain nonpositive over the simulated interval and approach zero as the adjacent model differences decay.}}
    \label{fig_Example_1_Diss_Residual}
\end{figure}

\begin{figure}[!h]
    \centering
    \includegraphics[width=1\linewidth]{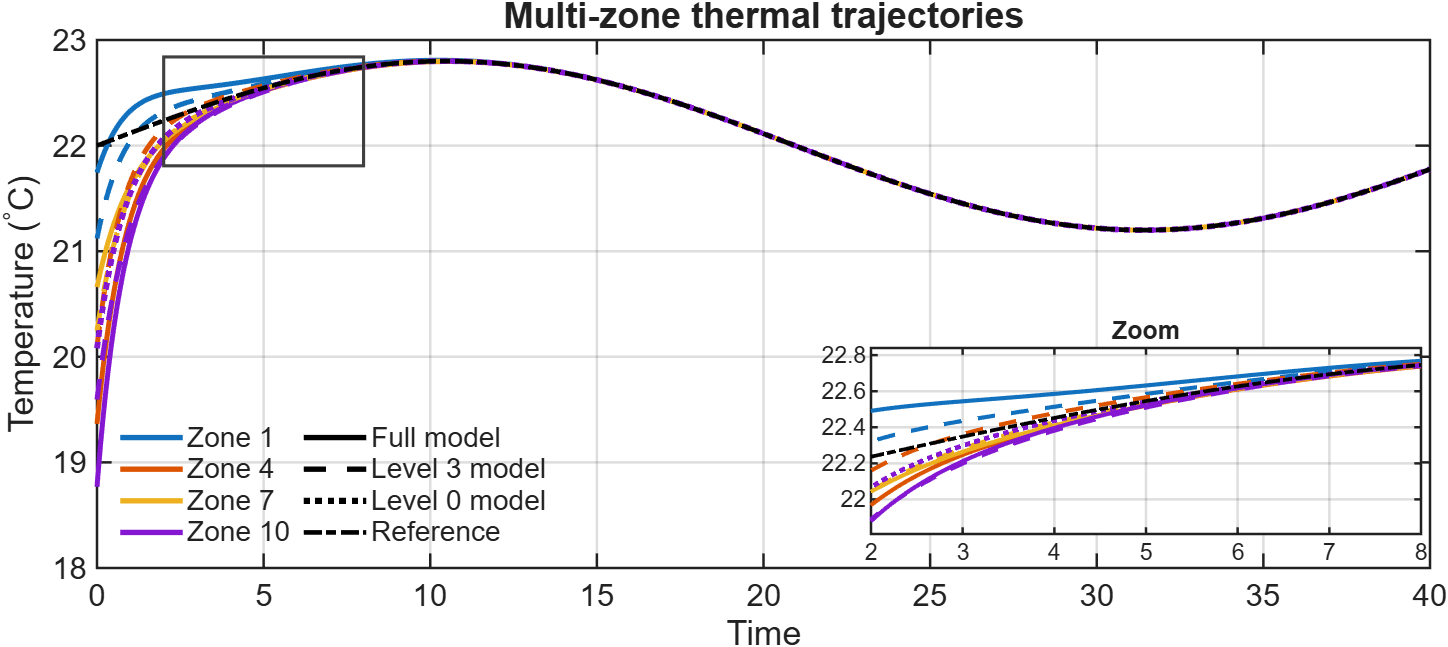}
    \caption{
    \textcolor{black}{Selected temperature trajectories of the full thermal model together with lifted trajectories of selected simpler models and the reference signal. Colors distinguish the selected zones, while line styles distinguish the different models in the hierarchy.}}
    \label{fig:Example_1_States}
\end{figure}
\begin{figure}
    \centering
    \includegraphics[width=1\linewidth]{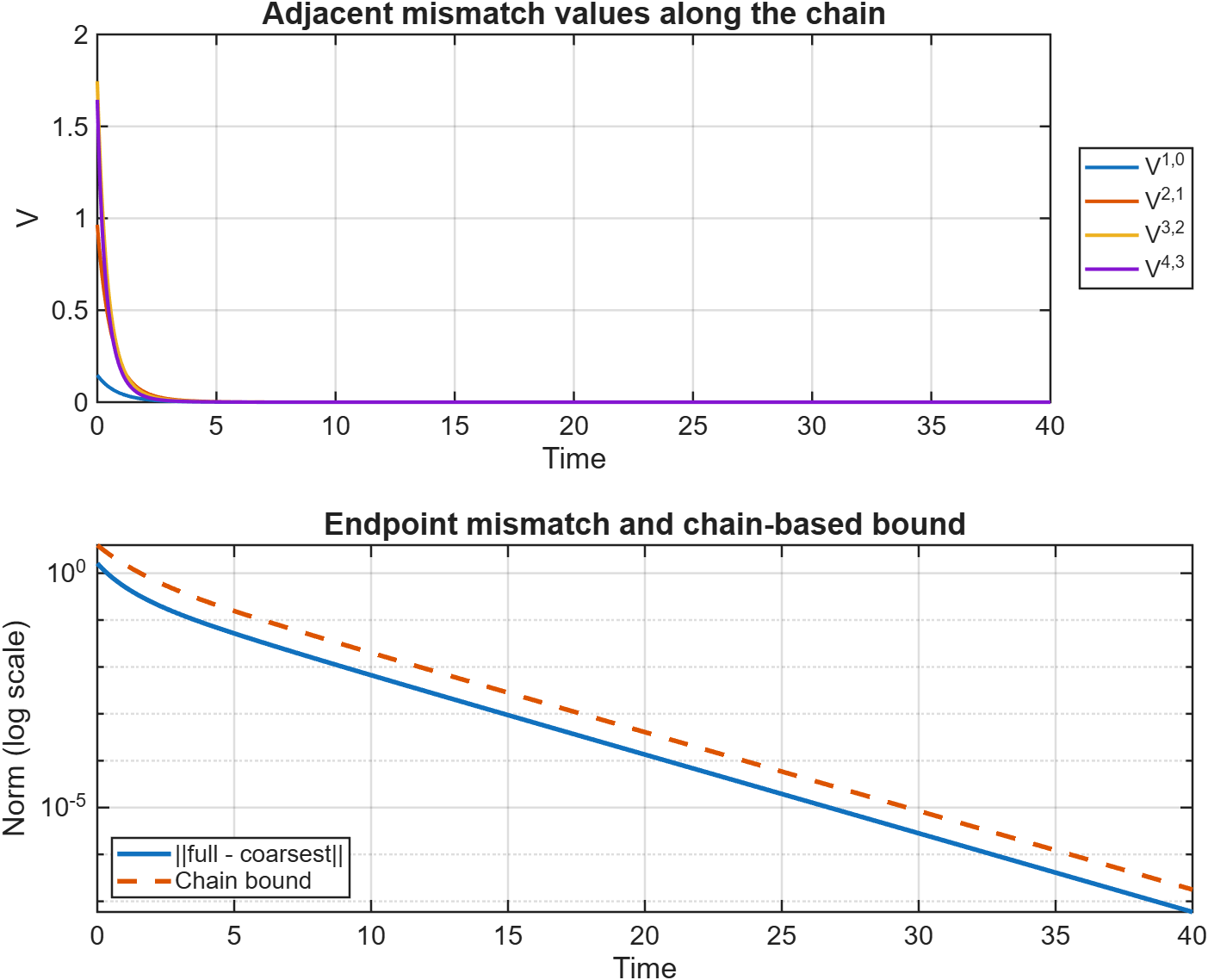}
    \caption{Mismatch quantities for the thermal model hierarchy. Top: adjacent
mismatch values. Bottom: endpoint mismatch and the sum of the adjacent
terms.}
    \label{fig:Example_1_Error}
\end{figure}
Figure~\ref{fig:Example_1_States} shows selected temperature trajectories of the
full thermal model together with the lifted trajectories of two simpler models and the
reference signal. The trajectories remain close over time, which illustrates that the
recursively generated inputs preserve the desired behavior between adjacent models in the hierarchy. In particular, the
full model and the lifted simpler models all track the same reference qualitatively, even
though the controller is designed only on the simplest model.
Figure~\ref{fig:Example_1_Error} shows the values of
$V^{1,0}$, $V^{2,1}$, $V^{3,2}$, and $V^{4,3}$, together with the endpoint mismatch
$\|\Pi^4x^4-\Pi^0x^0\|$ and the accumulated quantity
$\sum_{i=0}^{3}\sqrt{V^{i+1,i}(x^{i+1}, x^i)}$. The endpoint mismatch remains below the value of 
this quantity over
the simulated interval, as expected from~\eqref{eq:bigpibound} and the bound in Theorem~\ref{Thm_Control_Refine}.


To characterize the output mismatch after the initial transient, we evaluate the endpoint
mismatch for $t\in[10,40]$. Over this interval, we find 
\begin{align*}
\max_{t\in[10,40]}
\left\|
\Pi^4x^4(t)-\Pi^0x^0(t)
\right\|
=
6.6662\times10^{-3},
\end{align*}
while the maximum accumulated bound is
\begin{align*}
\max_{t\in[10,40]}
\sum_{i=0}^{3}
\sqrt{V^{i+1,i}(t)}
=
2.0075\times10^{-2}.
\end{align*}
Both maxima occur at $t=10$. The endpoint mismatch remains below the accumulated bound
throughout the interval~$[10, 40]$, 
and the maximum pointwise ratio of the mismatch to the bound is
\begin{equation}
\max_{t \in [10, 40]} \frac{\|\Pi^4x^4(t) - \Pi^0x^0(t)\|}{\sum_{i=0}^{3}\sqrt{V^{i+1,i}(x^{i+1}, x^i)}}
= 0.3333. 
\end{equation}
We also report the RMS temperature difference over the interval and across the zones, given by
\begin{align*}
E_{\mathrm{RMS}}
:=
\left(
\frac{1}{12(40-10)}
\int_{10}^{40}
\left\|
\Pi^4x^4(t)-\Pi^0x^0(t)
\right\|^2\,dt
\right)^{1/2}.
\end{align*}

Because the output vector contains the temperatures of $12$ zones, the maximum 
aggregate mismatch after the initial transient is
\begin{align*}
\frac{6.6662\times10^{-3}}{\sqrt{12}}
\approx
0.001924^\circ\mathrm{C}
\end{align*}
RMS per zone. The RMS temperature difference  over $t\in[10,40]$ and across the
$12$ zones is
$0.000397^\circ\mathrm{C}$. These results indicate that, after the initial
transient, the outputs of the simplest and most refined closed-loop models remain in close
agreement under the recursively refined input.

This example shows how a controller designed on the simplest one-state thermal model can be
transferred recursively to the most refined $12$-zone model. After the initial transient, the maximum
difference between the endpoint outputs is approximately
$0.001924^\circ\mathrm{C}$ RMS per zone, and the RMS difference over time is approximately
$0.000397^\circ\mathrm{C}$ RMS per zone. The endpoint difference remains below the sum of
the adjacent mismatch terms throughout the simulation. These results illustrate the controller
refinement and endpoint bound developed in Theorem~\ref{Thm_Control_Refine}.

\subsection{Three-Room Thermal Network at Three Levels of Refinement}
\label{Sec_sim2}
We illustrate the compositional construction of
Section~\ref{Sec:NetworkLifting} using a thermal system consisting of
three physically connected rooms. The rooms are indexed by
$o\in\{1,2,3\}$ and are arranged in a path, so that room~1 exchanges
heat with room~2, and room~2 exchanges heat with rooms~1 and~3.
Each physical room is represented at three levels of detail, indexed by
$i\in\{1,2,3\}$. For room $o$, the models
$\Delta_o^1$, $\Delta_o^2$, and $\Delta_o^3$ describe the same room
using increasingly refined thermal dynamics. Level~1 is the simplest
model, level~2 is an intermediate model, and level~3 is the most
refined model.

For each fixed level $i$, the three open control subsystems
$\Delta_1^i$, 
$\Delta_2^i$, and
$\Delta_3^i$
are interconnected through their internal inputs and outputs to form the
composite control system $\bar{\boldsymbol{\Delta}}^i$. Thus,
$\bar{\boldsymbol{\Delta}}^1$, $\bar{\boldsymbol{\Delta}}^2$, and $\bar{\boldsymbol{\Delta}}^3$ represent the same
three-room thermal network at three increasing levels of detail.
For each pair of adjacent levels, we define functions between the
corresponding models of each room. We then combine the three room
functions using the weighted sum construction in
Proposition~\ref{Thm:CompositionalSimulation} to obtain a function
between the corresponding composite systems.



For each room $o$, the simplest model $\Delta_o^1$ is a $1$R$1$C thermal model with state
$x_o^1=T_{a,o}^1\in\mathbb R$,
where $T_{a,o}^1$ denotes the room air temperature. Its dynamics are
\begin{align*}
C_{a,o}\dot T_{a,o}^1
=
-\frac{1}{R_{oa,o}}(T_{a,o}^1-T_{\mathrm{out}})
+u_o^1+w_o^1,
\end{align*}
and its external and internal outputs are both chosen as
$h_o^1(x_o^1)=T_{a,o}^1$ and
$y_o^1(x_o^1)=T_{a,o}^1$.
Thus, $n_o^1=1$. We use $C_{a,o}>0$ to denote the effective thermal capacitance associated with the room-air
state of room $o$, and $R_{oa,o}>0$ denotes the effective thermal resistance between the
room air and the outdoor environment. The term
$\frac{1}{R_{oa,o}}(T_{a,o}^1-T_{\mathrm{out}})$ represents conductive heat exchange with
the ambient environment, where~$T_{out}$ is a constant that denotes the temperature of the ambient environment. 
The input $u_o^1$ represents HVAC heat input, while $w_o^1$
represents the inter-room coupling signal induced by the network interconnection.
The intermediate model $\Delta_o^2$ is a $2$R$2$C thermal model with state $x_o^2
=
[T_{a,o}^2,T_{m,o}^2]^\top
\in\mathbb R^2$, 
where $T_{m,o}^2$ is an internal thermal mass temperature representing
heat stored in interior materials, furniture, partitions, and
equipment that respond more slowly than the room air. We use $C_{m,o}^2>0$ to denote the effective thermal capacitance associated with
the internal-mass state, and $R_{am,o}^2>0$ denotes the effective thermal resistance
between the room-air node and the internal-mass node. Its dynamics are
\begin{align*}
C_{a,o}\dot T_{a,o}^2
&=
-\frac{1}{R_{oa,o}}(T_{a,o}^2-T_{\mathrm{out}}) -\frac{1}{R_{am,o}^2}(T_{a,o}^2-T_{m,o}^2)
+u_o^2+w_o^2,\\
C_{m,o}^2\dot T_{m,o}^2
&=
\frac{1}{R_{am,o}^2}(T_{a,o}^2-T_{m,o}^2),
\end{align*}
where~$T_{out}$ is a constant that denotes the temperature of the ambient environment,
with outputs
$h_o^2(x_o^2)=T_{a,o}^2$
and 
$y_o^2(x_o^2)=T_{a,o}^2$.
Thus, $n_o^2=2$.

The finest model $\Delta_o^3$ is a $3$R$3$C thermal model with state $x_o^3= [T_{a,o}^3, T_{w,o}^3, T_{m,o}^3]^\top\in\mathbb R^3$, 
where $T_{a,o}^3$ is the room air temperature, $T_{w,o}^3$ is a lumped wall or envelope
temperature, and $T_{m,o}^3$ is a lumped internal thermal mass temperature representing
slower heat storage effects inside the room. The parameters $C_{w,o}^3>0$ and
$C_{m,o}^3>0$ denote the effective thermal capacitances associated with the wall and
internal thermal-mass states, respectively.
The parameter~$R_{aw,o}^3>0$ 
denotes the effective thermal resistance between the room-air and wall states,
and the parameter~$R_{wm,o}^3>0$
denotes the effective thermal resistances between 
the wall and internal thermal-mass states, respectively. Its dynamics are
\begin{align*}
C_{a,o}\dot T_{a,o}^3
&=
-\frac{1}{R_{oa,o}}(T_{a,o}^3-T_{\mathrm{out}}) -\frac{1}{R_{aw,o}^3}(T_{a,o}^3-T_{w,o}^3)
+u_o^3+w_o^3,\\
C_{w,o}^3\dot T_{w,o}^3
&=
\frac{1}{R_{aw,o}^3}(T_{a,o}^3-T_{w,o}^3)
-\frac{1}{R_{wm,o}^3}(T_{w,o}^3-T_{m,o}^3),\\
C_{m,o}^3\dot T_{m,o}^3
&=
\frac{1}{R_{wm,o}^3}(T_{w,o}^3-T_{m,o}^3),
\end{align*}
where~$T_{out}$ is a constant that denotes the temperature of the ambient environment, 
with outputs
$h_o^3(x_o^3)=T_{a,o}^3$
and
$y_o^3(x_o^3)=T_{a,o}^3$.
Thus, $n_o^3=3$.

For each level $i\in\{1,2,3\}$, define the state, input, internal-input,
and external-output vectors of the composite system by
\begin{align*}
\boldsymbol x^i
&:=
\begin{bmatrix}
(x_1^i)^\top &
(x_2^i)^\top &
(x_3^i)^\top
\end{bmatrix}^{\top},
\ \ \boldsymbol u^i
:=
\begin{bmatrix}
u_1^i & u_2^i & u_3^i
\end{bmatrix}^{\top},\\
\boldsymbol w^i
&:=
\begin{bmatrix}
w_1^i & w_2^i & w_3^i
\end{bmatrix}^{\top}, 
\ \
\boldsymbol h^i(\boldsymbol x^i)
:=
\begin{bmatrix}
h_1^i(x_1^i) &
h_2^i(x_2^i) &
h_3^i(x_3^i)
\end{bmatrix}^{\top}.
\end{align*}
We also define the stacked temperature vectors
\begin{align*}
\boldsymbol T_a^i
&:=
\begin{bmatrix}
T_{a,1}^i &
T_{a,2}^i &
T_{a,3}^i
\end{bmatrix}^{\top},
\ \ i\in\{1,2,3\}, \quad
\boldsymbol T_m^2
:=
\begin{bmatrix}
T_{m,1}^2 &
T_{m,2}^2 &
T_{m,3}^2
\end{bmatrix}^{\top},\\
\boldsymbol T_w^3
&:=
\begin{bmatrix}
T_{w,1}^3 &
T_{w,2}^3 &
T_{w,3}^3
\end{bmatrix}^{\top}, \quad
\boldsymbol T_m^3
:=
\begin{bmatrix}
T_{m,1}^3 &
T_{m,2}^3 &
T_{m,3}^3
\end{bmatrix}^{\top}.
\end{align*}
Since the external output of each room is its room air temperature, we have 
$\boldsymbol h^i(\boldsymbol x^i)
=
\boldsymbol T_a^i$.
We collect the room-dependent resistance parameters in the vectors
\begin{align*}
R_{oa}
&:=
\begin{bmatrix}
R_{oa,1}&R_{oa,2}&R_{oa,3}
\end{bmatrix}^{\top}, \quad
R_{am}^{2}
:=
\begin{bmatrix}
R_{am,1}^{2}&R_{am,2}^{2}&R_{am,3}^{2}
\end{bmatrix}^{\top},\\
R_{aw}^{3}
&:=
\begin{bmatrix}
R_{aw,1}^{3}&R_{aw,2}^{3}&R_{aw,3}^{3}
\end{bmatrix}^{\top}, \quad
R_{wm}^{3}
:=
\begin{bmatrix}
R_{wm,1}^{3}&R_{wm,2}^{3}&R_{wm,3}^{3}
\end{bmatrix}^{\top}.
\end{align*}
Division operations involving these vectors are interpreted componentwise. We also
define
\begin{align*}
\mathsf{C}_a
&:=
\operatorname{diag}
\left(C_{a,1},C_{a,2},C_{a,3}\right), \quad
\mathsf{D}_{oa}
:=
\operatorname{diag}
\left(
R_{oa,1}^{-1},
R_{oa,2}^{-1},
R_{oa,3}^{-1}
\right).
\end{align*}
Both $\mathsf{C}_a$ and $\mathsf{D}_{oa}$ are positive definite.

For each fixed model level $i$, the three room subsystems are interconnected according to
$\boldsymbol w^i=M\boldsymbol y^i$,
where
\begin{align*}
\boldsymbol y^i
:=
\begin{bmatrix}
y_1^i(x_1^i) &
y_2^i(x_2^i) &
y_3^i(x_3^i)
\end{bmatrix}^{\top},
\end{align*} 
and $M=-\kappa L$ with $\kappa>0$.
Note that $L$
is the Laplacian matrix of the three-room path graph. Thus, each internal input is determined by
the differences between the room air temperature of that room and the room-air temperature(s) of its neighbor(s). The
same interconnection structure is used at all three model levels in the hierarchy, and the
resulting interconnected systems are denoted by
$\bar{\boldsymbol{\Delta}}^1$,
$\bar{\boldsymbol{\Delta}}^2$, and 
$\bar{\boldsymbol{\Delta}}^3$.
Equivalently, the internal inputs are
\begin{align*}
w_1^i
=
\kappa\left(T_{a,2}^i-T_{a,1}^i\right), \quad
w_2^i
=
\kappa\left(
T_{a,1}^i-2T_{a,2}^i+T_{a,3}^i
\right), \quad
w_3^i 
=
\kappa\left(T_{a,2}^i-T_{a,3}^i\right).
\end{align*}
Thus, rooms~1 and~3 exchange heat only with room~2, while room~2
exchanges heat with both room~1 and room~3.

The control law is designed only on the coarsest interconnected model
$\bar{\boldsymbol{\Delta}}^1$. Let
$T_{\mathrm{ref}}(t)=22+0.5\sin(0.20t)$
be the desired reference room temperature. We define the stacked reference signal and its
derivative by
\begin{align*}
\boldsymbol {T}_r(t)&=T_{\mathrm{ref}}(t)\mathbf 1_3, \qquad
\dot {\boldsymbol {T}}_r(t)=0.10\cos(0.20t)\mathbf 1_3.
\end{align*}
The level-$1$ input is chosen as
\begin{equation}
\boldsymbol u^1(t)
=
\mathsf C_a\dot{\boldsymbol T}_r(t)
+
\mathsf D_{oa}
\bigl(
\boldsymbol T_r(t)-T_{\mathrm{out}}\mathbf 1_3
\bigr)
-
M\boldsymbol T_r(t)
-
K_{\mathrm{tr}}
\bigl(
\boldsymbol T_a^1(t)-\boldsymbol T_r(t)
\bigr).
\end{equation}
This law is obtained by
matching the nominal level-$1$ thermal dynamics to the desired reference evolution and
adding proportional feedback on the room air temperature tracking error. In particular,
$\mathsf C_a\dot{\boldsymbol T}_r(t)$ accounts for the room-air
capacitances in the desired reference dynamics,
$\mathsf D_{oa}\bigl(\boldsymbol {T}_r(t)-T_{\mathrm{out}}\mathbf 1_3\bigr)$
compensates for ambient heat loss at the reference, $-M\boldsymbol {T}_r(t)$ compensates for the diffusive interconnection at the reference,
and the last term stabilizes the tracking error, where $K_{\mathrm{tr}}>0$ is the tracking gain.

The inputs for levels $2$ and $3$ are then generated recursively through adjacent
refinement laws, given by
\begin{align*}
\boldsymbol{u}^2(t)
&=
\boldsymbol{u}^1(t)
+\frac{\boldsymbol{T}_a^2(t)-\boldsymbol{T}_m^2(t)}{R_{am}^2}
-K_{21}\big(\boldsymbol{T}_a^2(t)-\boldsymbol{T}_a^1(t)\big),\\
\boldsymbol{u}^3(t)
&=
\boldsymbol{u}^2(t)
+\frac{\boldsymbol{T}_a^3(t)-\boldsymbol{T}_w^3(t)}{R_{aw}^3}
-K_{32}\big(\boldsymbol{T}_a^3(t)-\boldsymbol{T}_a^2(t)\big).
\end{align*}
The additional correction terms compensate for the hidden
thermal couplings that appear in the more refined models, while the gains $K_{21}>0$
and $K_{32}>0$ penalize the adjacent visible output mismatch. The resulting adjacent room air temperature error dynamics are analyzed next.



Define the adjacent room air temperature errors by
\begin{align*}
\boldsymbol{e}^{2,1}
:=\boldsymbol{T}_a^2-\boldsymbol{T}_a^1, \quad
\boldsymbol{e}^{3,2}
:=\boldsymbol{T}_a^3-\boldsymbol{T}_a^2.
\end{align*}
Subtracting the adjacent model dynamics under the selected input refinement
laws gives
\begin{align*}
\mathsf C_a\dot{\boldsymbol{e}}^{2,1}
&=
-\left(\mathsf D_{oa}+K_{21}I_3\right)\boldsymbol{e}^{2,1}
+\left(\boldsymbol{w}^2-\boldsymbol{w}^1\right),\\
\mathsf C_a\dot{\boldsymbol{e}}^{3,2}
&=
-\left(\mathsf D_{oa}+K_{32}I_3\right)\boldsymbol{e}^{3,2}
+\left(\boldsymbol{w}^3-\boldsymbol{w}^2\right).
\end{align*}
Because the same diffusive interconnection matrix is used at every model
level and
$\boldsymbol h^i(\boldsymbol x^i)=\boldsymbol T_a^i$, we have
\begin{align*}
\boldsymbol w^2-\boldsymbol w^1
&=
M\left(
\boldsymbol h^2(\boldsymbol x^2)
-
\boldsymbol h^1(\boldsymbol x^1)
\right)
=
M\boldsymbol e^{2,1},\\
\boldsymbol w^3-\boldsymbol w^2
&=
M\left(
\boldsymbol h^3(\boldsymbol x^3)
-
\boldsymbol h^2(\boldsymbol x^2)
\right)
=
M\boldsymbol e^{3,2}.
\end{align*}
Therefore,
\begin{align}
\mathsf C_a\dot {\boldsymbol{e}}^{2,1}
&=
-\left(\mathsf D_{oa}+K_{21}I_3-M\right){\boldsymbol{e}}^{2,1},
\label{eq:error_dynamics_21}\\
\mathsf C_a\dot {\boldsymbol{e}}^{3,2}
&=
-\left(\mathsf D_{oa}+K_{32}I_3-M\right){\boldsymbol{e}}^{3,2}.
\label{eq:error_dynamics_32}
\end{align}

Since $M=-\kappa L$, where $L\succeq0$, the matrices
\begin{align*}
Q_{21}
&:=
\mathsf D_{oa}+K_{21}I_3-M
=
\mathsf D_{oa}+K_{21}I_3+\kappa L,\\
Q_{32}
&:=
\mathsf D_{oa}+K_{32}I_3-M
=
\mathsf D_{oa}+K_{32}I_3+\kappa L
\end{align*}
are positive definite for $K_{21}>0$ and $K_{32}>0$. For example, consider
$\mathcal E_{21}
=
\frac{1}{2}(\boldsymbol{e}^{2,1})^\top
\mathsf C_a \boldsymbol{e}^{2,1}$.
Along trajectories produced by~\eqref{eq:error_dynamics_21}, we have
\begin{align*}
\dot{\mathcal E}_{21}
=
-(\boldsymbol e^{2,1})^\top
Q_{21}\boldsymbol e^{2,1}
\le
-\lambda_{\min}(Q_{21})
\|\boldsymbol e^{2,1}\|^2
\le
-\frac{
2\lambda_{\min}(Q_{21})
}{
\lambda_{\max}(\mathsf C_a)
}
\mathcal E_{21}.
\end{align*}
Therefore, $\boldsymbol e^{2,1}$ converges to zero exponentially.
The same argument applies to $\boldsymbol e^{3,2}$ using $Q_{32}$. Hence, both adjacent room air temperature error
systems are exponentially stable.

To quantify the differences between adjacent subsystem models, we define the 
following functions at the subsystem level: 
\begin{align*}
V_o^{2,1}(x_o^2,x_o^1)
&=
(T_{a,o}^2-T_{a,o}^1)^2
+\alpha_o^2(T_{m,o}^2-T_{a,o}^2)^2,\\
V_o^{3,2}(x_o^3,x_o^2)
&=
(T_{a,o}^3-T_{a,o}^2)^2
+\alpha_{w,o}^3(T_{w,o}^3-T_{a,o}^3)^2 
+\alpha_{m,o}^3(T_{m,o}^3-T_{w,o}^3)^2.
\end{align*}
The first term in each function is the squared difference between the adjacent room-air
temperatures. The remaining terms account for differences involving the additional thermal
states in the more refined models. The coefficients
$\alpha_o^2$, $\alpha_{w,o}^3$, and $\alpha_{m,o}^3$
are positive weights assigned to the additional thermal-state terms.
In the simulations, we use
$\alpha_o^2
=
\alpha_{w,o}^3
=
\alpha_{m,o}^3
=
0.8, \textnormal{ for } 
o\in\{1,2,3\}$.


In this example, we choose equal weights
$\mu_o^{2,1}=\mu_o^{3,2}=1, \textnormal{ for } o\in\{1,2,3\}$.
The corresponding functions for the interconnected system are therefore
\begin{align*}
\boldsymbol V^{2,1}(\boldsymbol x^2,\boldsymbol x^1)
&=
\sum_{o=1}^{3}
V_o^{2,1}(x_o^2,x_o^1), \quad
\boldsymbol V^{3,2}(\boldsymbol x^3,\boldsymbol x^2)
=
\sum_{o=1}^{3}
V_o^{3,2}(x_o^3,x_o^2).
\end{align*}
Proposition~\ref{Thm:CompositionalSimulation} states the conditions under which these weighted sums are simulation functions
between the composite systems. We evaluate these conditions numerically for the selected
parameters and simulations.

Since the external outputs are the stacked room air temperatures, the endpoint output mismatch
between the coarsest and most refined composite systems is
\begin{align*}
\left\|
\boldsymbol h^3(\boldsymbol x^3)
-
\boldsymbol h^1(\boldsymbol x^1)
\right\|
=
\left\|\boldsymbol T_a^3-\boldsymbol T_a^1\right\|.
\end{align*}
For the choices
$\lambda^{2,1}(s)=\lambda^{3,2}(s)=s^2,$
the endpoint estimate in Theorem~\ref{Thm_Control_Refine} gives
\begin{align*}
\left\|\boldsymbol T_a^3-\boldsymbol T_a^1\right\|
\le
\sqrt{\boldsymbol V^{2,1}(\boldsymbol x^2,\boldsymbol x^1)}
+
\sqrt{\boldsymbol V^{3,2}(\boldsymbol x^3,\boldsymbol x^2)}.
\end{align*}

For the selected parameters, we evaluate the subsystem dissipation inequalities using the states
and inputs generated in the simulations. We use
$\rho_o^{j,i}(s)=\gamma_o^{j,i}s^2$. 
For each room and adjacent model pair, $\gamma_o^{j,i}$ is selected
numerically so that the evaluated residual is nonpositive. The maximum
residual is zero to the reported numerical precision. We list these parameters in Table~\ref{table_1}.
\begin{table}[h]
\label{table_1}
\centering
\caption{Parameters used in the evaluated subsystem dissipation
inequalities and the resulting maximum residuals.}
\label{tab:sim2_residuals}
\begin{tabular}{c c c c}
\hline
Room & Pair & $\gamma_o^{j,i}$ & Maximum residual\\
\hline
$1$ & $(2,1)$ & $1.7705\times 10^{-2}$ & $0$\\
$1$ & $(3,2)$ & $5.6881\times 10^{-3}$ & $0$\\
$2$ & $(2,1)$ & $9.8674\times 10^{-4}$ & $0$\\
$2$ & $(3,2)$ & $1.8324\times 10^{-3}$ & $0$\\
$3$ & $(2,1)$ & $1.5806\times 10^{-2}$ & $0$\\
$3$ & $(3,2)$ & $3.7512\times 10^{-3}$ & $0$\\
\hline
\end{tabular}
\end{table}
We also evaluate the matrix inequality and the matching condition in Proposition~\ref{Thm:CompositionalSimulation} for the
pairs $(2,1)$ and $(3,2)$. The computed conditions are satisfied for the selected parameters.
These calculations support the use of the weighted functions in this example, but they do not
show that all the requirements of Definition~\ref{Def: Storage_Function} hold for every possible state and input.


\begin{figure}
    \centering
    \includegraphics[width=1\linewidth]{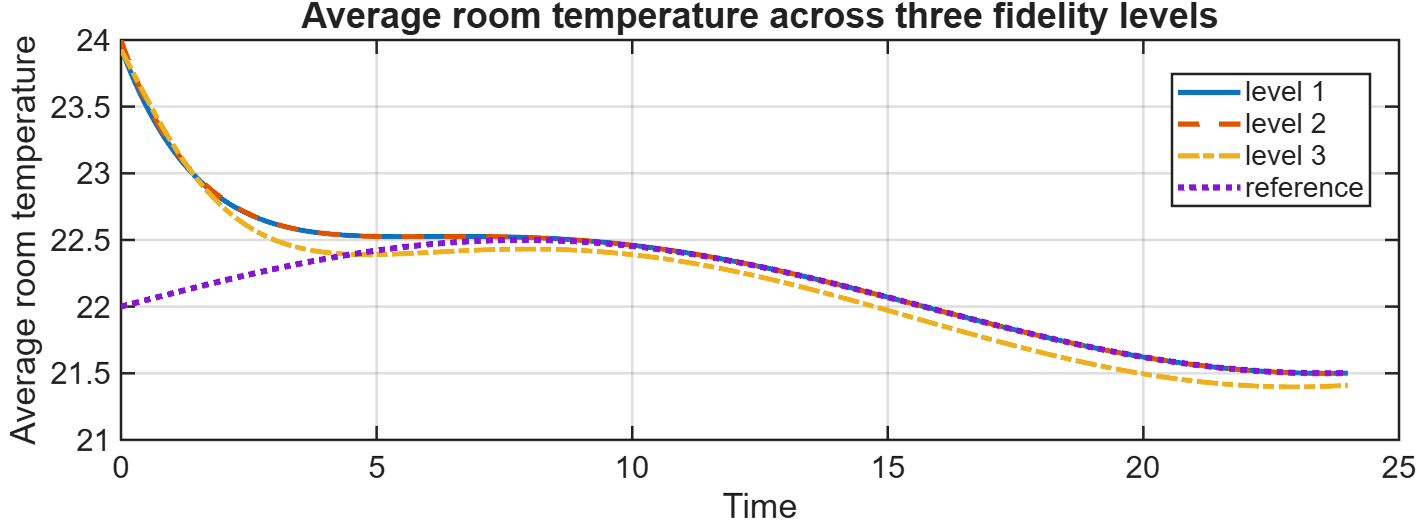}
    \caption{Average room air temperatures at the three model levels in the hierarchy together with the
    reference profile.}
    \label{fig:thermal_temperature}
\end{figure}

\begin{figure}
    \centering
    \includegraphics[width=1\linewidth]{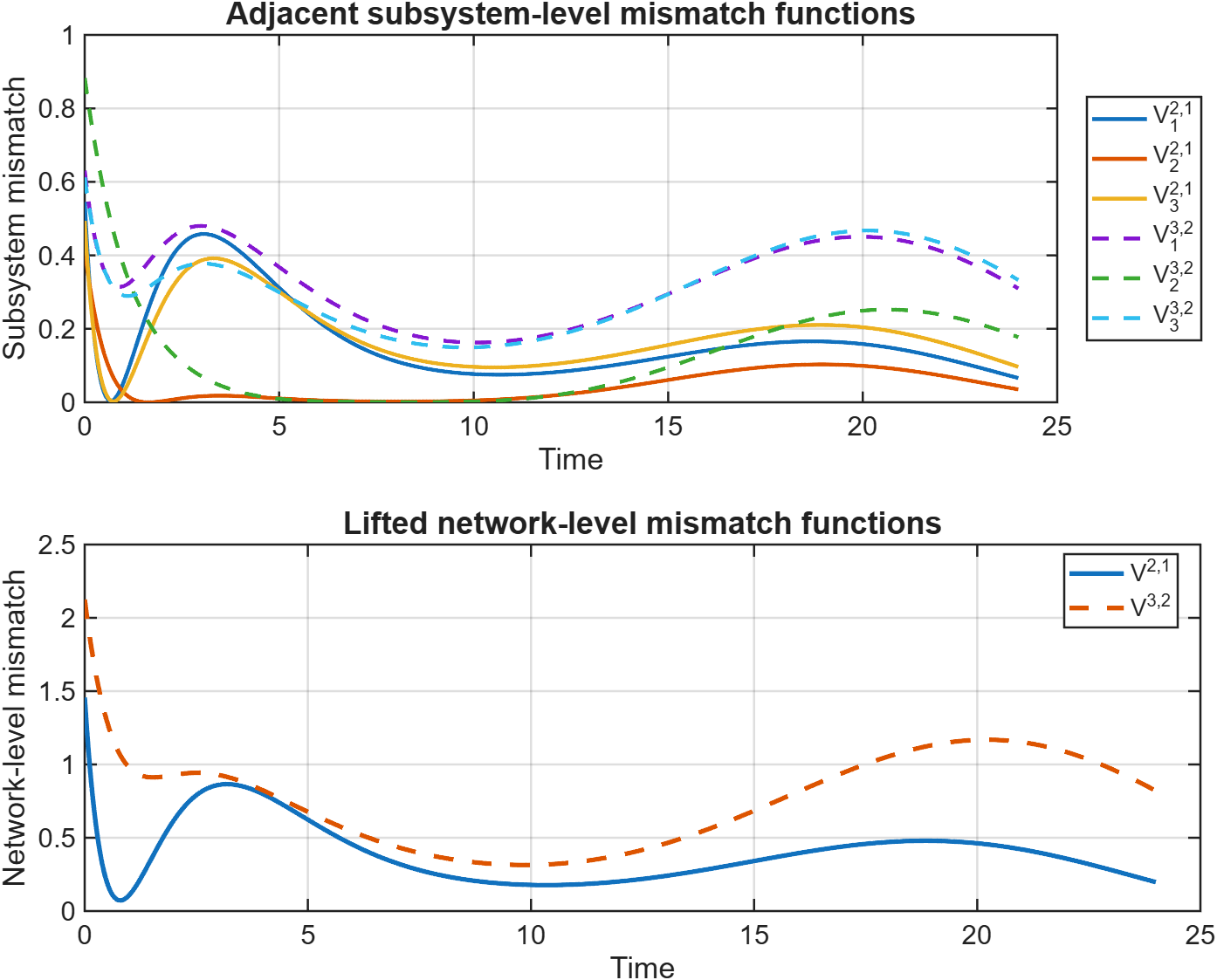}
    \caption{Mismatch functions for the individual rooms and their weighted
sums for the interconnected systems. Top: functions for the three rooms
for the adjacent pairs $(2,1)$ and $(3,2)$. Bottom:
$\boldsymbol V^{2,1}$ and $\boldsymbol V^{3,2}$.}
    \label{fig:thermal_storage}
\end{figure}

\begin{figure}
    \centering
    \includegraphics[width=1\linewidth]{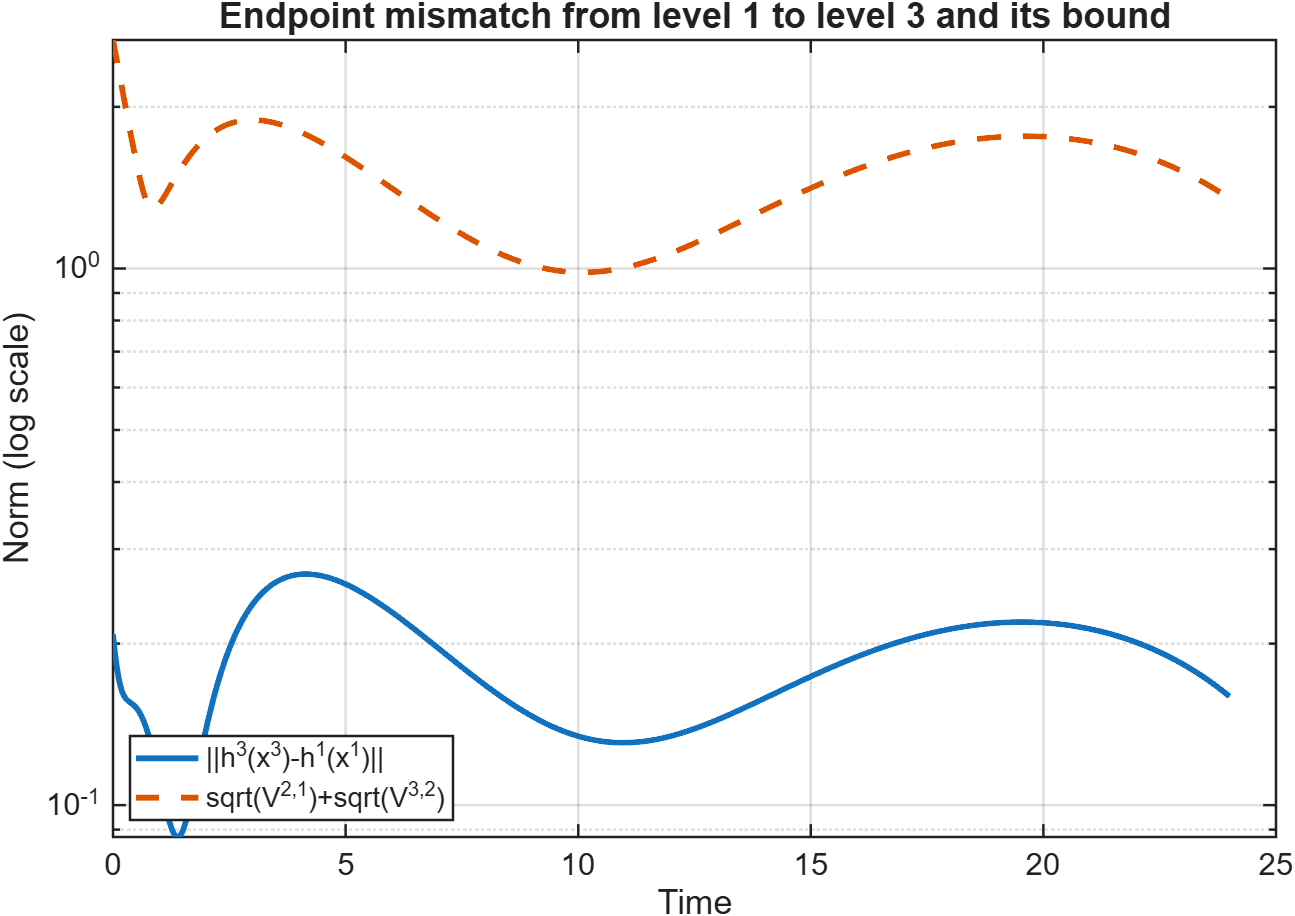}
    \caption{Endpoint output mismatch between the most refined and simplest interconnected
    models and the sum of the adjacent terms.}
    \label{fig:thermal_endpoint}
\end{figure}
Figure~\ref{fig:thermal_temperature} shows the average room air temperatures at the three
model levels in the hierarchy together with the reference profile. The trajectories illustrate that the
controller designed for the level-$1$ composite model can be transferred recursively to the
more refined composite models.
Figure~\ref{fig:thermal_storage} shows the functions for the individual
rooms and their weighted sums for the interconnected systems. These plots
illustrate how the three room functions are combined to form
$\boldsymbol V^{2,1}$ and $\boldsymbol V^{3,2}$.
Figure~\ref{fig:thermal_endpoint} compares the endpoint mismatch
$\left\|
\boldsymbol T_a^3-\boldsymbol T_a^1
\right\|$
with the sum of the adjacent terms
$
\sqrt{\boldsymbol V^{2,1}}
+
\sqrt{\boldsymbol V^{3,2}}$.
The endpoint mismatch remains below this sum throughout the simulated
trajectory, in agreement with the endpoint estimate in
Theorem~\ref{Thm_Control_Refine}.

This example illustrates both forms of composition considered in the
framework. Within each model level, the three room subsystems are
interconnected through the matrix $M$ to form a composite control system.
Across model levels in the hierarchy, the functions associated with adjacent composite systems
are combined as in the endpoint estimate of Theorem~\ref{Thm_Control_Refine}. Under the assumptions
of Proposition~\ref{Thm:CompositionalSimulation} and Definition~\ref{Def: Simulation_Function}, the corresponding relations define
elementary simulation morphisms and can be concatenated into a chain morphism
in $\mathsf{CS}_{\mathrm{chain}}$. In this example, the relevant conditions are
evaluated numerically for the selected parameters and simulation.

To characterize the mismatch after the initial transient, we evaluate the
endpoint mismatch over $t\in[10,24]$. Over this interval,
\begin{align*}
\max_{t\in[10,24]}
\left\|
T_a^3(t)-T_a^1(t)
\right\|
=
0.3067,
\end{align*}
and the maximum occurs at $t=19.79$.
The maximum value of the corresponding sum of adjacent terms is
\begin{align*}
\max_{t\in[10,24]}
\left(
\sqrt{\boldsymbol V^{2,1}(t)}
+
\sqrt{\boldsymbol V^{3,2}(t)}
\right)
=
1.5034,
\end{align*}
which occurs at $t=19.57$. The maximum ratio between the endpoint
difference and this sum over~$[10,24]$ is $0.2284$.

Since the output vector contains the room air temperatures of three
rooms, the maximum aggregate mismatch
\[
\max_{t\in[10,24]}
\left\|
\boldsymbol T_a^3(t)-\boldsymbol T_a^1(t)
\right\|
=
0.3067
\]
corresponds to a maximum RMS temperature difference across the three
rooms of
\begin{align*}
\frac{0.3067}{\sqrt{3}}
\approx
0.1771^\circ\mathrm C.
\end{align*}
We also report the RMS temperature difference over the interval and
across the three rooms, given by
\begin{align*}
E_{\mathrm{RMS}}
:=
\left(
\frac{1}{3(24-10)}
\int_{10}^{24}
\left\|
\boldsymbol T_a^3(t)-\boldsymbol T_a^1(t)
\right\|^2\,dt
\right)^{1/2}.
\end{align*}
For this simulation,
$E_{\mathrm{RMS}}=0.1482^\circ\mathrm C$. Therefore, after the initial transient, the controller
designed on the simplest interconnected model and recursively refined to the most refined model
produces room air temperature outputs that differ by at most approximately
$0.1771^\circ\mathrm{C}$ RMS per room along the simulated trajectory.

This example shows how the functions for the individual rooms can be combined across the three subsystems
and then used to compare the interconnected models at different levels of detail. Over the
interval after the initial transient, the maximum endpoint output difference is approximately
$0.1771^\circ\mathrm{C}$ RMS per room, and the RMS temperature difference over the interval and across the three
rooms is approximately
$0.1482^\circ\mathrm{C}$ RMS per room. The endpoint difference remains below the sum of
the two adjacent terms throughout this interval. 

\section{Conclusion and Future Work}
\label{Sec_Conclusion}

We developed a category-theoretic framework for controller refinement across hierarchies of
models of control systems. This framework represents a refinement relation between adjacent models
as an elementary morphism consisting of a storage function and an interface. Sequences of
composable elementary morphisms form chain morphisms, which allow a controller designed for a
simpler model to be constructed recursively for a more refined model even when a direct relation
between the source and target systems along the chain are unavailable.
The main results provide bounds on the endpoint output mismatch and on the transfer of
closed-loop performance between models. These bounds are obtained by accumulating the terms from the storage functions
associated with adjacent model pairs along the chain. Thus, the framework separates a
potentially difficult endpoint comparison into a sequence of local comparisons between
neighboring model levels in the hierarchy.

We also considered control systems without internal ports. We introduced the category
$\mathsf{CS}_{\mathrm{chain}}$ and defined a functor that projects from
$\mathsf{OCS}_{\mathrm{chain}}$ to $\mathsf{CS}_{\mathrm{chain}}$ that removes the structure of the internal ports while
preserving identities, composition, and chain structure. Consequently, the controller refinement
and performance transfer results remain available after projection.
For interconnected systems, we recalled a compositional result that combines storage functions
between corresponding open control subsystems into a simulation function between the
corresponding composite control systems under interconnection compatibility conditions.
We then expressed this result in category-theoretic terms by showing that the resulting
simulation function and composite interface define an elementary simulation morphism in
$\mathsf{CS}_{\mathrm{chain}}$. Repeating this construction across adjacent model levels in the hierarchy produces a chain
morphism between composite network models.

Thermal network examples illustrated the two main settings. The
first showed how a controller can be transferred through a hierarchy
of systems without internal ports. The second showed how functions for
individual subsystems can be combined for an interconnected system.
After the initial transient, the maximum endpoint output differences
were approximately $0.001924^\circ\mathrm C$ RMS per zone and
$0.1771^\circ\mathrm C$ RMS per room, respectively. In both examples,
the endpoint difference remained below the corresponding sum of the
adjacent terms.

Future work will investigate conditions that reduce conservatism in the 
bounds along the chain and extend the framework to stochastic, hybrid, uncertain, and data-driven systems. Another research direction is to consider more general interconnection structures
and to investigate how categorical constructions can support modular controller design across
models with different levels of refinement.

\section*{Appendix}
\subsection*{Proof of Theorem~\ref{Thm_CTRL_Path}}
\begin{proof}
Recall from Definition~\ref{Def:Category} that, in order to specify a category, we need to define a collection of objects, 
an identity morphism for each object, 
a set of morphisms between every ordered pair of objects, and a composition law for compatible morphisms.
Then we need to verify the unitality and associativity properties. We show that $\mathsf{OCS}_{\mathrm{chain}}$ satisfies each of these requirements.

First, by construction, the objects of $\mathsf{OCS}_{\mathrm{chain}}$ are all open control systems in $\mathsf{Sys}$, where $\mathsf{Sys}$ is the collection of open control systems in Definition~\ref{Def_Con_Systm}. Hence, the objects are well defined.
We next show the existence of an identity morphism for every object in $\mathsf{Sys}$. Let $\Delta\in\mathsf{Sys}$ be arbitrary. By definition, we set
$\mathrm{id}_{\Delta}:=()$,   
that is, the empty chain from $\Delta$ to itself. This empty chain is a valid chain morphism from $\Delta$ to itself. It has source object $\Delta$, target object $\Delta$, and contains no incompatible elementary morphisms. Therefore, every object $\Delta\in\mathsf{Sys}$ admits an identity morphism $\mathrm{id}_{\Delta}$ in $\mathsf{OCS}_{\mathrm{chain}}$.

We next show that, for every ordered pair of systems $\Delta^0,\Delta^r\in\mathsf{Sys}$, the hom-set $\mathsf{OCS}_{\mathrm{chain}}(\Delta^0,\Delta^r)$ is well defined. By Definition~\ref{Def_chain_morphism}, a morphism $\gamma:\Delta^0\to\Delta^r$ is a finite sequence of composable elementary morphisms
\begin{align*}
\gamma=(\alpha^{1,0},\alpha^{2,1},\ldots,\alpha^{r,r-1}),  
\end{align*}
where each $\alpha^{k+1,k}:\Delta^k\to\Delta^{k+1}$ is an elementary morphism given by Definition~\ref{Def_Ele_stor_morph}. \final{Therefore, for any pair $\Delta^0,\Delta^r\in\mathsf{Sys}$, we define the hom-set
$\mathsf{OCS}_{\mathrm{chain}}(\Delta^0,\Delta^r)$ to be the set of all chain morphisms from
$\Delta^0$ to $\Delta^r$. This set may be empty when no chain of elementary morphisms connects
$\Delta^0$ to $\Delta^r$. However, when $\Delta^0=\Delta^r$, the hom-set contains at least the
identity morphism~$\mathrm{id}_{\Delta}$, namely the empty chain. Hence, the hom-sets in $\mathsf{OCS}_{\mathrm{chain}}$ are well defined.}

We next verify that composition of morphisms is well defined. Consider two morphisms
\begin{align*}
\gamma:\Delta^0\to\Delta^r,\qquad
\gamma':\Delta^r\to\Delta^s,    
\end{align*}
where
\begin{align*}
\gamma=(\alpha^{1,0},\alpha^{2,1},\ldots,\alpha^{r,r-1})    
\end{align*}
and
\begin{align*}
\gamma'=(\alpha'^{r+1,r},\alpha'^{r+2,r+1},\ldots,\alpha'^{s,s-1}).    
\end{align*}
Since the target object of $\gamma$ is the same as the source object of $\gamma'$, namely $\Delta^r$, we can concatenate the two chains and define
\begin{equation} \label{eq:gammacomp}
\gamma'\circ\gamma
:=
(\alpha^{1,0},\alpha^{2,1},\ldots,\alpha^{r,r-1},
\alpha'^{r+1,r},\alpha'^{r+2,r+1},\ldots,\alpha'^{s,s-1}).
\end{equation}
This concatenated sequence is again a finite sequence of composable elementary morphisms. Indeed, the elementary morphisms within $\gamma$ are composable by Definition~\ref{Def_chain_morphism}, the elementary morphisms within $\gamma'$ are also composable by Definition~\ref{Def_chain_morphism},
and the target object of $\gamma$ agrees with the source object of $\gamma'$. Hence, $\gamma'\circ\gamma$ is a chain morphism from $\Delta^0$ to $\Delta^s$. Therefore, composition is well defined.

We next verify the unitality property. Let
\begin{align*}
\gamma=(\alpha^{1,0},\alpha^{2,1},\ldots,\alpha^{r,r-1}):\Delta^0\to\Delta^r
\end{align*}
be any morphism in $\mathsf{OCS}_{\mathrm{chain}}$. By the definitions of the identity morphism and 
the aforementioned
composition law, we have \begin{align*}
\gamma\circ \mathrm{id}_{\Delta^0}
=
(\alpha^{1,0},\alpha^{2,1},\ldots,\alpha^{r,r-1})
=
\gamma
\end{align*}
since concatenating the empty chain on the left changes nothing. Similarly,
\begin{align*}
\mathrm{id}_{\Delta^r}\circ\gamma
=
(\alpha^{1,0},\alpha^{2,1},\ldots,\alpha^{r,r-1})
=
\gamma,
\end{align*}
since concatenating the empty chain on the right also changes nothing. Thus, the identity morphisms satisfy the left and right identity laws.

Finally, we verify associativity of composition. Consider three compatible morphisms
\begin{align*}
\gamma:\Delta^0\to\Delta^r,\qquad
\gamma':\Delta^r\to\Delta^s,\qquad
\gamma'':\Delta^s\to\Delta^t,
\end{align*}
where
\begin{align*}
\gamma &=(\alpha^{1,0}, \alpha^{2, 1}, \ldots,\alpha^{r,r-1}), \quad
\gamma'=(\alpha'^{r+1,r}, \alpha'^{r+2,r+1}, \ldots,\alpha'^{s,s-1}), \\
\gamma''&=(\alpha''^{s+1,s}, \alpha''^{s+2, s+1}, \ldots,\alpha''^{t,t-1}).
\end{align*}
Then, using~\eqref{eq:gammacomp} we have 
\begin{align*} 
\gamma''\circ(\gamma'\circ\gamma) &= \gamma''\circ
(\alpha^{1,0}, \alpha^{2, 1}, \ldots,\alpha^{r,r-1},\alpha'^{r+1,r},\alpha'^{r+2,r+1},\ldots,\alpha'^{s,s-1}) \\
&= (\alpha^{1,0}, \alpha^{2, 1}, \ldots,\alpha^{r,r-1},
\alpha'^{r+1,r}, \alpha'^{r+2, r+1}, \\
 & \qquad\qquad\qquad \ldots,\alpha'^{s,s-1},
\alpha''^{s+1,s},\alpha''^{s+2,s+1},\ldots,\alpha''^{t,t-1}). \label{eq:gammacomp2}
\end{align*}
On the other hand, by concatenating~$\gamma'$ and~$\gamma''$ we find 
\begin{align*} 
(\gamma''\circ\gamma')\circ\gamma &=(\alpha'^{r+1,r},\alpha'^{r+2,r+1},\ldots,\alpha'^{s,s-1},
\alpha''^{s+1,s}, \alpha''^{s+2,s+1}, \ldots,\alpha''^{t,t-1})\circ\gamma \\
&= (\alpha^{1,0}, \alpha^{2, 1}, \ldots,\alpha^{r,r-1},
\alpha'^{r+1,r}, \alpha'^{r+2, r+1}, \\
 & \qquad\qquad\qquad \ldots,\alpha'^{s,s-1},
\alpha''^{s+1,s},\alpha''^{s+2,s+1},\ldots,\alpha''^{t,t-1}). \label{eq:gammacomp3}
\end{align*}
Hence, from~\eqref{eq:gammacomp2} and~\eqref{eq:gammacomp3} we have 
\begin{align*}
\gamma''\circ(\gamma'\circ\gamma)=
(\gamma''\circ\gamma')\circ\gamma,
\end{align*}
and composition 
of morphisms 
in $\mathsf{OCS}_{\mathrm{chain}}$ is associative.

Since $\mathsf{OCS}_{\mathrm{chain}}$ has well-defined objects, identity morphisms for every object, 
hom-sets for all pairs of objects, 
and an associative composition law satisfying the identity laws, we conclude that $\mathsf{OCS}_{\mathrm{chain}}$ is a category.
\end{proof}
\subsection*{Proof of Theorem~\ref{Thm_Control_Refine}}
\label{Proof_Thm_Control_Refine}
\begin{proof}
Let
\begin{align*}
\gamma:\quad \Delta^0 \xrightarrow{\alpha^{1,0}} \Delta^1 \xrightarrow{\alpha^{2,1}} \cdots
\xrightarrow{\alpha^{r,r-1}} \Delta^r
\end{align*}
be a chain morphism in $\mathsf{OCS}_{\mathrm{chain}}$. By Definition~\ref{Def_chain_morphism}, each
\begin{align*}
\alpha^{k,k-1}=(V^{k,k-1},\mathcal I^{k,k-1}), \qquad k\in\underline r,
\end{align*}
is an elementary morphism from $\Delta^{k-1}$ to $\Delta^k$. Hence, by
Definition~\ref{Def_Ele_stor_morph}, each $V^{k,k-1}$ is a storage function associated with the
pair $(\Delta^{k-1},\Delta^k)$ in the sense of Definition~\ref{Def: Storage_Function}.

For any $t\ge 0$, the triangle inequality yields
\begin{equation}\label{Eq:proof_triangle}
\|h^r(x^r(t))-h^0(x^0(t))\|
\le
\sum_{k=1}^r \|h^k(x^k(t))-h^{k-1}(x^{k-1}(t))\|.
\end{equation}

Next, for each~$k \in \underline{r}$, since $V^{k,k-1}$ is a storage function, Definition~\ref{Def: Storage_Function} implies
that there exists a function $\lambda^{k,k-1}\in\mathcal K_\infty$ such that
\begin{align}
\label{Eq:proof_output_domination}
\lambda^{k,k-1}\bigl(\|h^k(x^k(t))-h^{k-1}(x^{k-1}(t))\|\bigr)
\le
V^{k,k-1}(x^k(t),x^{k-1}(t))
\end{align}
for all $t\ge 0$. Since $\lambda^{k,k-1}\in\mathcal K_\infty$, it is strictly increasing.
Therefore, it is invertible on $\mathbb R_{\ge 0}$, and its inverse is also strictly increasing. 
Applying $(\lambda^{k,k-1})^{-1}$ to both sides of 
\eqref{Eq:proof_output_domination} gives
\begin{align}
\label{Eq:proof_inverse_bound}
\|h^k(x^k(t))-h^{k-1}(x^{k-1}(t))\|
\le
(\lambda^{k,k-1})^{-1}\bigl(V^{k,k-1}(x^k(t),x^{k-1}(t))\bigr).
\end{align}

Substituting \eqref{Eq:proof_inverse_bound} into \eqref{Eq:proof_triangle} and summing over
$k\in\underline r$, we obtain
\begin{align*}
\|h^r(x^r(t))-h^0(x^0(t))\|
\le
\sum_{k=1}^r
(\lambda^{k,k-1})^{-1}\bigl(V^{k,k-1}(x^k(t),x^{k-1}(t))\bigr),
\end{align*}
which is exactly \eqref{eq:endpoint_bound}.
\end{proof}

\subsection*{Proof of Corollary~\ref{Cor_Control_Bounds}}
\begin{proof}
Consider any fixed $k\in\underline r$. Since $\alpha^{k,k-1}$ is an elementary morphism, the corresponding trajectories satisfy the
dissipation inequality in~\eqref{Eq:Simu_Grad} in Definition~\ref{Def: Storage_Function}. Therefore,
\begin{multline}
\label{Eq:proof_dissipation}
\dot V^{k,k-1}(t)
\leq
-\eta^{k,k-1}(V^{k,k-1}(t))
\\ +
\rho^{k,k-1}(\|u^{k-1}(t)\|) 
+ (\zeta^{k,k-1}(t))^\top X^{k,k-1}\zeta^{k,k-1}(t).
\end{multline}

Using the hypothesis that 
\begin{align}
\eta^{k,k-1}(s)\geq c^{k,k-1}s\ \textnormal{for all} \ s\geq 0, 
\end{align}
we see that~\eqref{Eq:proof_dissipation} implies
\begin{align}
\label{Eq:proof_linear_dissipation}
\dot V^{k,k-1}(t)
&\leq
-c^{k,k-1}V^{k,k-1}(t) + \rho^{k,k-1}(\|u^{k-1}(t)\|)
 +
(\zeta^{k,k-1}(t))^\top X^{k,k-1}\zeta^{k,k-1}(t) \\
&= -c^{k,k-1}V^{k,k-1}(t)+\psi^{k,k-1}(t),
\end{align}
whee we have defined~$\psi^{k,k-1}(t) = \rho^{k,k-1}(\|u^{k-1}(t)\|)
 +
(\zeta^{k,k-1}(t))^\top X^{k,k-1}\zeta^{k,k-1}(t)$. 
Multiplying both sides by $e^{c^{k,k-1}t}$ gives
\begin{align*}
\frac{d}{dt}\Big(e^{c^{k,k-1}t}V^{k,k-1}(t)\Big)
\leq
e^{c^{k,k-1}t}\psi^{k,k-1}(t), 
\end{align*}
and then integrating both sides from $0$ to $t$ yields
\begin{align*}
e^{c^{k,k-1}t}V^{k,k-1}(t)-V^{k,k-1}(0)
\leq
\int_0^t e^{c^{k,k-1}\tau}\psi^{k,k-1}(\tau)\,d\tau.
\end{align*}
Therefore,
\begin{equation} \label{eq:vkbound}
V^{k,k-1}(t) 
\leq
e^{-c^{k,k-1}t}V^{k,k-1}(0) +
\int_0^t e^{-c^{k,k-1}(t-\tau)}\psi^{k,k-1}(\tau)\,d\tau.
\end{equation}
Substituting the definition of $\psi^{k,k-1}$ into~\eqref{eq:vkbound} gives~\eqref{eq:Vk_exp_bound}, and
since $k\in\underline r$ was arbitrary, the bound holds for every storage function along the chain. 
\end{proof}   

\subsubsection*{Proof of Corollary~\ref{Cor_CL_Performance_Transfer}}
\begin{proof}
For any $t\ge 0$, the triangle inequality gives
\begin{align*}
\|h^r(x^r(t)) - \xi(t)\|
&\le
\|h^r(x^r(t)) - h^0(x^0(t))\| +
\|h^0(x^0(t)) - \xi(t)\|.
\end{align*}
Since~$\|h^0(x^0(t)) - \xi(t)\|\le \varepsilon(t)$
by hypothesis, 
it follows that
\begin{align*}
\|h^r(x^r(t)) - \xi(t)\|
\le
\|h^r(x^r(t)) - h^0(x^0(t))\|
+
\varepsilon(t).
\end{align*}
Then for all $t\geq0$, applying Theorem~\ref{Thm_Control_Refine} yields
\begin{align*}
\|h^r(x^r(t)) - \xi(t)\|
\le
\varepsilon(t) + \sum_{k=1}^r (\lambda^{k,k-1})^{-1}\!\big(V^{k,k-1}(t)\big),
\end{align*}
which proves the result.
\end{proof}

\subsection*{Proof of Lemma~\ref{Lem:PhiElementary}}
\begin{proof}
We verify the two conditions in Definition~\ref{Def: Simulation_Function}.

Since $V^{j,i}$ is a storage function from $\Delta^i$ to $\Delta^j$, Definition~\ref{Def: Storage_Function}
implies that there exists $\lambda^{j,i}\in\mathcal K_\infty$ such that
\begin{align*}
\lambda^{j,i}(\|h^j(x^j)-h^i(x^i)\|)\le V^{j,i}(x^j,x^i)
\end{align*}
for all $(x^j,x^i)$. Since the projected systems $\bar{\Delta}^i$ and $\bar{\Delta}^j$
have the same external output maps as $\Delta^i$ and $\Delta^j$, 
the inequality in 
\eqref{eq:lambdajidef} is exactly the
output domination condition in Definition~\ref{Def: Simulation_Function}.

Next, let $(x^j,x^i)$ and $u^i$ be given, and choose
\begin{align*}
u^{j\ast}=\mathcal I^{j,i}(x^j,x^i,u^i).
\end{align*}
Since $\alpha^{j,i}$ is an elementary morphism in $\mathsf{OCS}_{\mathrm{chain}}$, the storage function
$V^{j,i}$ satisfies the dissipation inequality~\eqref{Eq:Simu_Grad} in Definition~\ref{Def: Storage_Function} for this
choice of $u^{j\ast}$. For the projected systems, the internal inputs are fixed to zero, and thus we set
$w^i=0$ and $w^j=0$ in~\eqref{Eq:Simu_Grad}. 

By Definition~\ref{Def:PhiObject}, we then have
\begin{equation} \label{eq:internalinputs0}
\bar f^i(x^i,u^i)=f^i(x^i,u^i,0) \quad \textnormal{ and } \quad \bar f^j(x^j,u^{j\ast})=f^j(x^j,u^{j\ast},0).
\end{equation}
From~\eqref{eq_zeta} we have 
\begin{align*}
\zeta^{j,i}(x^j,x^i,0,0)=
\begin{bmatrix}
0\\[1mm]
y^j(x^j)-H^{j,i}y^i(x^i)
\end{bmatrix}.
\end{align*}
Hence, computing the last term in~\eqref{Eq:Simu_Grad} gives 
\begin{multline} \label{eq:newzetaij}
\zeta^{j,i}(x^j, x^i, 0, 0)^\top X^{j,i}\zeta^{j,i}(x^j, x^i, 0, 0) \\
=
\big(y^j(x^j)-H^{j,i}y^i(x^i)\big)^\top [X^{j,i}]_{22}
\big(y^j(x^j)-H^{j,i}y^i(x^i)\big)\le 0,
\end{multline}
where the inequality follows from $[X^{j,i}]_{22}\preceq 0$ in
Definition~\ref{Def: Storage_Function}.

Substituting~\eqref{eq:internalinputs0} and~\eqref{eq:newzetaij} 
into the storage function dissipation inequality in~\eqref{Eq:Simu_Grad} yields
\begin{align*}
\nabla V^{j,i}(x^j,x^i)^\top
\begin{bmatrix}
\bar f^j(x^j,u^{j\ast})\\
\bar f^i(x^i,u^i)
\end{bmatrix}
\le
-\eta^{j,i}(V^{j,i}(x^j,x^i))+\rho^{j,i}(\|u^i\|),
\end{align*}
which is exactly the dissipation inequality in Definition~\ref{Def: Simulation_Function}.
Therefore, $V^{j,i}$ is a simulation function from $\bar{\Delta}^i$ to $\bar{\Delta}^j$.

Since the same interface $\mathcal I^{j,i}$ selects the input $u^{j\ast}$ for the projected systems,
it follows that
\begin{align*}
\bar{\alpha}^{j,i}:=(V^{j,i},\mathcal I^{j,i}):\bar{\Delta}^i\to\bar{\Delta}^j
\end{align*}
is an elementary simulation morphism in $\mathsf{CS}_{\mathrm{chain}}$.
\end{proof}

\subsection*{Proof of Theorem~\ref{Thm:PhiFunctor}}
\begin{proof}
Based on~\ref{remark_free_path}, $\mathsf{OCS}_{\mathrm{chain}}$ is the free category generated by the directed multigraph whose vertices are the open control systems in $\mathsf{Sys}$ and whose edges are the elementary morphisms in Definition~\ref{Def_Ele_stor_morph}. Definition~\ref{Def:PhiObject} maps each vertex $\Delta\in\mathsf{Sys}$ to the object $\Phi(\Delta)\in\mathsf{Sys}_O$. 

According to Lemma~\ref{Lem:PhiElementary}, each elementary morphism 
\begin{align*}
\alpha^{j,i}:\Delta^i\rightarrow \Delta^j
\end{align*}
is mapped by Definition~\ref{Def:PhiMorphism} to an elementary simulation morphism
\begin{align*}
\bar{\alpha}^{j,i}:\Phi(\Delta^i)\to\Phi(\Delta^j) 
\end{align*}
in $\mathsf{CS}_{\mathrm{chain}}$. Therefore, the object map and the map on elementary morphisms define a graph morphism from the directed multigraph generating $\mathsf{OCS}_{\mathrm{chain}}$ to the underlying directed graph of $\mathsf{CS}_{\mathrm{chain}}$.

By the universal property of the free category, the graph morphism extends uniquely to a functor 
\begin{align*}
    \Phi: \mathsf{OCS}_{\mathrm{chain}} \rightarrow \mathsf{CS}_{\mathrm{chain}}.
\end{align*}
For a  chain morphism 
\begin{align*}
 \gamma=(\alpha^{1,0},\dots,\alpha^{r,r-1}),
\end{align*}
this extension is 
\begin{align*}
    \Phi(\gamma)=(\Phi(\alpha^{1,0}),\dots,\Phi(\alpha^{r,r-1})),
\end{align*}
 as given in Definition~\ref{Def:PhiMorphism}. 
 Hence, $\Phi$ preserves identity morphisms and composition. Thus, we have that 
 \begin{align*}
\Phi(\mathrm{id}_{\Delta})=\mathrm{id}_{\Phi(\Delta)},     
 \end{align*}
and 
\begin{align*}
\Phi(\gamma'\circ\gamma)=\Phi(\gamma')\circ\Phi(\gamma),
\end{align*}
for all composable chain morphisms $\gamma$ and $\gamma'$.
\end{proof}

\subsection*{Proof of Corollary~\ref{Cor:PhiTransfer_1}}
\begin{proof}
By Lemma~\ref{Lem:PhiElementary}, each elementary morphism 
of the form~$\alpha^{k+1, k} : \Delta^kj \to \Delta^{k+1}$ 
in the chain $\gamma$ induces an
elementary simulation morphism 
$\bar{\alpha}^{k+1, k} : \bar{\Delta}^k \to \bar{\Delta}^{k+1}$ 
between the corresponding projected systems without internal ports. By
Theorem~\ref{Thm:PhiFunctor}, the image $\Phi(\gamma)$ is therefore a chain morphism in
$\mathsf{CS}_{\mathrm{chain}}$. Since the proofs of Theorem~\ref{Thm_Control_Refine} and Corollary~\ref{Cor_CL_Performance_Transfer}
use only the stepwise output domination and dissipation inequalities along a chain, the same arguments 
used in their proofs also 
apply to
$\Phi(\gamma)$. Hence the same controller refinement and performance transfer bounds hold for the projected systems.
\end{proof}
\subsection*{Proof of Proposition~\ref{Prop:Composite_Control}}
\final{
\begin{proof}
By Proposition~1, the function $\boldsymbol V^{j,i}$ is a simulation function
from $\bar{\boldsymbol\Delta}^i$ to
$\bar{\boldsymbol\Delta}^j$. By hypothesis,
$\boldsymbol{\mathcal I}^{j,i}$ is an associated interface satisfying
Definition~\ref{Def_Ele_Sim_Morph}. 
Therefore, by Theorem~\ref{Def:CTRL-O-path}
\begin{align*}
\bar{\boldsymbol\alpha}^{j,i}
=
\left(
\boldsymbol V^{j,i},
\boldsymbol{\mathcal I}^{j,i}
\right)
\end{align*}
is an elementary simulation morphism in $\mathsf{CS}_{\mathrm{chain}}$.
\end{proof}}

\subsection*{Proof of Corollary~\ref{Cor:NetworkHierarchyRefinement}}
\begin{proof}
Fix $k \in \underline{r} \backslash \{1\}$. 
The subsystem-level
storage functions between
$\{\Delta_o^{k-1}\}_{o=1}^L$
and
$\{\Delta_o^k\}_{o=1}^L$
can be combined, under the interconnection conditions in \final{Proposition~\ref{Thm:CompositionalSimulation}}, into a network-level simulation function
\begin{align*}
\boldsymbol V^{k,k-1}:
\mathbb R^{\boldsymbol \sum_{o\in\underline L} n_o^k}\times
\mathbb R^{\boldsymbol \sum_{o\in\underline L} n_o^{k-1}}
\longrightarrow
\mathbb R_{\ge0}.
\end{align*}
By Proposition~\ref{Prop:Composite_Control}, this simulation function, together with its
associated composite interface
$\boldsymbol{\mathcal I}^{k,k-1}$, defines an elementary simulation morphism
\begin{align*}
\bar{\boldsymbol{\alpha}}^{k,k-1}
=
\bigl(
\boldsymbol V^{k,k-1},
\boldsymbol{\mathcal I}^{k,k-1}
\bigr):
\bar{\boldsymbol{\Delta}}^{k-1}
\longrightarrow
\bar{\boldsymbol{\Delta}}^k
\end{align*}
in $\mathsf{CS}_{\mathrm{chain}}$.

Repeating this construction for $k=2,\ldots,r$ produces a sequence of composable elementary
simulation morphisms. 
Their concatenation is therefore a chain morphism
$\bar{\boldsymbol{\gamma}}$ in $\mathsf{CS}_{\mathrm{chain}}$. The controller refinement and
performance transfer conclusions then follow directly from
Theorem~\ref{Thm_Control_Refine} and
Corollary~\ref{Cor_CL_Performance_Transfer}.
\end{proof}
\bibliographystyle{siamplain}
\bibliography{references}
\end{document}

%% file: ex_shared.tex
\usepackage{lipsum}
\usepackage{amsfonts}
\usepackage{graphicx}
\usepackage{epstopdf}
\usepackage{algorithmic}
\ifpdf
  \DeclareGraphicsExtensions{.eps,.pdf,.png,.jpg}
\else
  \DeclareGraphicsExtensions{.eps}
\fi

\newsiamremark{remark}{Remark}
\newsiamremark{example}{Example}
\newsiamremark{problem}{Problem}
\newsiamremark{hypothesis}{Hypothesis}
\crefname{hypothesis}{Hypothesis}{Hypotheses}
\newsiamthm{claim}{Claim}
\newsiamremark{fact}{Fact}
\crefname{fact}{Fact}{Facts}

\newcommand{\final}[1]{\textcolor{black}{#1}}
\newcommand{\bshe}[1]{\textcolor{black}{#1}}

\headers{Category-Theoretic Hierarchical Control}{Baike She, Tyler Hanks, James
Fairbanks, and Matthew Hale}

\title{Control Systems with Model Hierarchies: A
Category-Theoretic Framework\thanks{Submitted to the editors DATE.
\funding{This work was funded by the Fog Research Institute under contract no.~FRI-454.}}}

\author{Baike She\thanks{Georgia Institute of Technology, Atlanta, GA 30332 USA  
  (\email{bshe6@gatech.edu}).} 
\and Tyler Hanks\thanks{Florida Institute of Technology, Melbourne, FL 32904 USA 
  (\email{thanks@fit.edu}).}
\and James Fairbanks\thanks{University of Florida, Gainesville, FL 32611 USA 
  (\email{fairbanksj@ufl.edu}).}
\and Matthew Hale\thanks{Georgia Institute of Technology, Atlanta, GA 30332 USA  
  (\email{mhale30@gatech.edu}).}
}

\usepackage{amsopn}
